\def\bs{\mathbf{s}}
\def\bc{\mathbf{c}}
\def\be{\mathbf{e}}
\def\bv{\mathbf{v}}
\def\ba{\mathbf{a}}
\def\bb{\mathbf{b}}
\def\bx{\mathbf{x}}
\def\bt{\mathbf{t}}
\def\by{\mathbf{y}}
\def\bs{\mathbf{s}}
\def\bzero{\mathbf{0}}
\def\bone{\mathbf{1}}
\def\E{\mathbb{E}}
\def\tA{\tilde{A}}
\def\tB{\tilde{B}}
\newcommand{\rs}{\mathsf{RS}}
\def\ed{\mathsf{ed}}
\newtheorem{theorem}{Theorem}[section]
\newtheorem{lemma}[theorem]{Lemma}
\newtheorem{claim}[theorem]{Claim}
\newtheorem{proposition}[theorem]{Proposition}
\newtheorem{definition}[theorem]{Definition}
\newtheorem{open}[theorem]{Open Question}
\newcommand{\Wone}{{$\mathsf{W[1]}$}\xspace}
\newcommand{\GapETH}{{$\mathsf{Gap}$-$\mathsf{ETH}$}\xspace}
\newcommand{\ETH}{{$\mathsf{ETH}$}\xspace}
\newcommand{\SETH}{{$\mathsf{SETH}$}\xspace}
\newcommand{\OVH}{{$\mathsf{OVH}$}\xspace}
\newcommand{\OV}{{$\mathsf{OV}$}\xspace}
\newcommand{\MIP}{{$\mathsf{MIP}$}\xspace}
\newcommand{\BMIP}{{$\mathsf{BMIP}$}\xspace}
\newcommand{\IP}{{$\mathsf{IP}$}\xspace}
\newcommand{\ABMIP}{{$\mathsf{Additive\text{-}BMIP}$}\xspace}
\def\N{{\mathbb {N}}}
\def\N{{\mathbb {N}}}
\def\poly{{\rm {poly}}}
\def\eps{\varepsilon}
\def\cC{\mathcal{C}}
\def\cB{\mathcal{B}}
\def\cA{\mathcal{A}}
\def\cP{\mathcal{P}}
\def\cQ{\mathcal{Q}}
\newcommand{\CP}{$\mathsf{CP}$\xspace}
\newcommand{\BCP}{$\mathsf{BCP}$\xspace}
\newcommand{\cd}{\mathsf{cd}}
\newcommand{\ipd}{\mathsf{ipd}}
\renewcommand{\tilde}{\widetilde}
\begin{document}
\title{{On Closest Pair in Euclidean Metric:} \\ {Monochromatic is as Hard as Bichromatic}}
	\author{Karthik C.\ S.\thanks{Supported by Irit Dinur's ERC-CoG grant 772839 and BSF grant 2014371.}\vspace{0.03cm}\\\vspace{0.03cm} \textrm{{Weizmann Institute of Science}}\\ \texttt{karthik.srikanta@weizmann.ac.il} \and Pasin Manurangsi\thanks{Supported by NSF under Grants No. CCF 1655215 and CCF 1815434.}\vspace{0.03cm}\\\vspace{0.03cm} \textrm{University of California, Berkeley}\\ \texttt{pasin@berkeley.edu}}

	\date{}

	\maketitle
	\vspace{-0.5cm}
\begin{abstract}
Given a set of $n$ points in $\mathbb R^d$, the (monochromatic) \emph{Closest Pair} problem asks to find a pair of distinct points in the set that are closest in the $\ell_p$-metric. Closest Pair is a fundamental problem in Computational Geometry and understanding its fine-grained complexity in the Euclidean metric when $d=\omega(\log n)$ was raised as an open question in recent works  (Abboud-Rubinstein-Williams [FOCS'17], Williams~[SODA'18], David-Karthik-Laekhanukit [SoCG'18]).\vspace{0.1cm}

In this paper, we show that for every $p\in\mathbb R_{\ge 1}\cup\{0\}$, under the Strong Exponential Time Hypothesis (\SETH), for every  $\eps>0$, the following holds:
\begin{itemize}
\item No algorithm running in time $O(n^{2-\eps})$ can solve the Closest Pair problem in $d=(\log n)^{\Omega_{\varepsilon}(1)}$ dimensions in the $\ell_p$-metric.
\item There exists $\delta = \delta(\varepsilon)>0$ and $c = c(\varepsilon)\ge 1$ such that no algorithm running in time $O(n^{1.5-\eps})$ can approximate Closest Pair problem to a factor of $(1+\delta)$ in $d\ge c\log n$ dimensions in the $\ell_p$-metric.
\end{itemize}

In particular, our first result is shown by establishing the computational equivalence of the \emph{bichromatic} Closest Pair problem and the (monochromatic) Closest Pair problem (up to $n^{\varepsilon}$ factor in the running time) for $d=(\log n)^{\Omega_\varepsilon(1)}$ dimensions. \vspace{0.1cm}

Additionally, under \SETH, we rule out nearly-polynomial  factor approximation algorithms  running in subquadratic time for the (monochromatic) \emph{Maximum Inner Product} problem where we are given a set of $n$ points in $n^{o(1)}$-dimensional Euclidean space and are required to find a pair of distinct points in the set that maximize the inner product. \vspace{0.1cm}

At the heart of all our proofs is the construction of a dense bipartite graph with low \emph{contact dimension}, i.e., we construct a balanced bipartite graph on $n$ vertices with $n^{2-\varepsilon}$ edges whose vertices can be realized as points in a $(\log n)^{\Omega_\varepsilon(1)}$-dimensional Euclidean space  such that every pair of vertices which have an edge in the graph are at distance exactly 1 and every other pair of vertices are at distance greater than 1. This graph construction is inspired by the construction of locally dense codes introduced by Dumer-Miccancio-Sudan [IEEE Trans.\ Inf.\ Theory'03].
\end{abstract}
\clearpage

\tableofcontents
\clearpage

\section{Introduction}\label{sec:intro}

The Closest Pair of Points problem or {\em Closest Pair} problem (\CP) is a fundamental problem in computational geometry: given $n$ points in a $d$-dimensional metric space, find a pair of distinct points with the smallest distance between them. The Closest Pair problem for points in the Euclidean plane \cite{SH75,BS76} 
stands at the origins of the systematic study of the computational complexity of geometric problems
 \cite{PS85,Man89,KT05,CLRS09}. Since then, this problem has found abundant applications in geographic information systems \cite{H06}, clustering \cite{Zahn71,Alpaydin10}, and numerous matching problems (such as stable marriage \cite{WTFX07}).

The trivial algorithm for \CP examines every pair of points in the point-set and runs in time $O(n^2d)$. Over the decades, there have been a series of developments on \CP in low dimensional space for the Euclidean metric
\cite{Ben80,HNS88,KM95,SH75,BS76},
leading to a deterministic $O(2^{O(d)}n\log n)$-time algorithm
\cite{BS76} and a randomized $O(2^{O(d)}n)$-time algorithm
\cite{Rabin76,KM95}. For low (i.e., constant) dimensions, these algorithms are tight as
a matching lower bound of $\Omega(n\log n)$ was shown by Ben-Or
\cite{Ben83} and Yao \cite{Yao91} in the
{\em algebraic decision tree} model,
thus settling the complexity of \CP in low dimensions. On other hand, for very high dimensions (i.e., $d=\Omega(n)$)  there are
subcubic algorithms \cite{GS16,ILLP04} in the $\ell_1,\ell_2,$ and $\ell_\infty$-metrics using fast matrix multiplication algorithms \cite{L14}.
However, \CP in {medium dimensions}, i.e.,
$d=\text{polylog}(n)$, and in various $\ell_p$-metrics, 
have been a focus of study in machine learning and analysis of Big Data \cite{Kleinberg97}, and  it is surprising that, even with the tools and techniques that have
been developed over many decades, when $d=\omega(\log n)$,
there is no known subquadratic-time
(i.e., $O(2^{o(d)}n^{2-\varepsilon})$-time) algorithm,
for \CP in any standard distance measure
\cite{Indyk00,AC09,ILLP04}
. The absence of such algorithms was explicitly observed as early as the late nineties by Cohen and Lewis \cite{CL99} but there was not any explanation until recently. 

David, Karthik, and Laekhanukit \cite{DKL18} showed that for all $p>2$, assuming the \emph{Strong  Exponential Time Hypothesis} (\SETH), for every $\varepsilon>0$, no algorithm running in $n^{2-\varepsilon}$  time can solve \CP in the $\ell_p$-metric, even when $d=\omega(\log n)$.
Their conditional lower bound was based on the conditional lower bound (again assuming \SETH) of Alman and Williams \cite{AW15} for the \emph{Bichromatic Closest Pair} problem\footnote{We remark  that \BCP is of independent interest as it's equivalent to finding the \emph{Minimum Spanning Tree} in $\ell_p$-metric \cite{AESW91,KLN99}. Moreover, understanding the fine-grained complexity of \BCP has lead to better understanding of the query time needed for \emph{Approximate Nearest Neighbor} search problem (see Razenshteyn's thesis \cite{Raz17} for a survey about the problem) with polynomial preprocessing time \cite{R18}.} (\BCP) where we are given two sets of $n$ points in a $d$-dimensional metric space, and the goal is to find a pair of points, one from each set, with the smallest distance between them. Alman and Williams showed that for all $p\in\mathbb{R}_{\ge 1}\cup\{0\}$, assuming \SETH, for every $\varepsilon>0$,  no algorithm running in $n^{2-\varepsilon}$  time can solve \BCP in the $\omega(\log n)$-dimensional $\ell_p$-metric space. Given that $\cite{AW15}$ show their lower bound on \BCP for all $\ell_p$-metrics, the lower bound on \CP of \cite{DKL18} feels unsatisfactory, since the $\ell_2$-metric is arguably the most interesting metric to study \CP on.  On the other hand, the answer to the complexity of \CP in the Euclidean metric might be on the positive side, i.e., there might exist an
algorithm that performs well in the $\ell_2$-metric because there are more tools available, e.g., Johnson-Lindenstrauss’ dimension reduction \cite{JL84}. Thus we have the following question:

\begin{open}[Abboud-Rubinstein-Williams\footnote{Please see the erratum in \cite{ARW17a}.} \cite{ARW17}, Williams \cite{W18}, David
-Karthik-Laekhanukit \cite{DKL18}]\label{open:Q1}
Is there an algorithm running in time $n^{2-\varepsilon}$ for some $\varepsilon>0$ which can solve \CP in the Euclidean metric when the points are in $\omega(\log n)$ dimensions?
\end{open}

Even if the answer to the above question is negative, this does not rule out strong approximation algorithms for \CP in the Euclidean metric, which might suffice for all applications. Indeed, we do know of subquadratic approximation algorithms for \CP. For example, LSH based techniques can
solve $(1+\delta)$-\CP (i.e., $(1+\delta)$ factor approximate \CP) in $n^{2-\Theta\left(\delta\right)}$
time \cite{IM98}, but cannot do much better \cite{MNP07,OWZ14}.
In a recent breakthrough, Valiant \cite{V15} obtained an approximation algorithm for $(1+\delta)$-\CP with runtime of $n^{2-\Theta\left(\sqrt{\delta}\right)}$. The state of the art is an $n^{2-\tilde{\Theta}\left(\delta^{1/3}\right)}$-time
algorithm by Alman, Chan, and Williams \cite{ACW16}.
Can the dependence on $\delta$ be improved indefinitely? For the case of $(1+\delta)$-\BCP, assuming \SETH, Rubinstein \cite{R18} answered the question in the negative. Does $(1+\delta)$-\CP also admit the same negative answer?

\begin{open}\label{open:Q2}
Is there an algorithm running in time $n^{2-\varepsilon}$ for some $\varepsilon>0$ which can solve $(1+\delta)$-\CP  in the Euclidean metric when the points are in $\omega(\log n)$ dimensions for every $\delta>0$?
\end{open}

Another important geometric problem  is the \emph{Maximum Inner Product} problem (\MIP): given $n$ points in the $d$-dimensional Euclidean space, find a  pair of distinct points with the largest inner product. 
This problem along with its bichromatic variant (\emph{Bichromatic Maximum Inner Product} problem, denoted \BMIP) is extensively studied in literature (see \cite{ARW17} and references therein). Abboud, Rubinstein, and Williams \cite{ARW17} showed that assuming \SETH, for every $\varepsilon>0$, no $2^{(\log n)^{1-o(1)}}$-approximation algorithm running in $n^{2-\varepsilon}$  time can solve \BMIP when $d=n^{o(1)}$. It is a natural question to ask if their inapproximability result can be extended to \MIP:

\begin{open}\label{open:Q3}
Is there an algorithm running in time $n^{2-\varepsilon}$ for some $\varepsilon>0$ which can solve $\gamma$-\MIP in $n^{o(1)}$ dimensions for even $\gamma=2^{(\log n)^{1-o(1)}}$?
\end{open}

\subsection{Our Results}\label{sec:result}

In this paper we address all three previously mentioned open questions. First, we almost completely resolve Open~Question~\ref{open:Q1}. In particular, we show the following.

\begin{theorem}[Subquadratic Hardness of \CP; Informal, See Theorem~\ref{thm:CP-01}]\label{thm:CP}
Let $p\in\mathbb{R}_{\ge 1}\cup\{0\}$. Assuming \SETH, for every $\varepsilon>0$, no algorithm running in $n^{2-\varepsilon}$  time can solve \CP in the $\ell_p$-metric, even when $d=\left(\log n\right)^{\Omega_{\varepsilon}(1)}$.
\end{theorem}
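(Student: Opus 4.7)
The plan is to fine-grained-reduce Bichromatic Closest Pair (\BCP) to (monochromatic) \CP with only an $n^{O(\varepsilon)}$ overhead in size and running time, and then appeal to the Alman-Williams \SETH-based lower bound for \BCP in dimension $\omega(\log n)$. Feeding $A\cup B$ directly to a \CP algorithm fails because an intra-$A$ (or intra-$B$) distance can easily be smaller than the optimal bichromatic distance, so I would append a small number of ``gadget'' coordinates to every point that push all monochromatic pairs strictly beyond the target threshold while keeping most bichromatic distances faithful.

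The crucial object, and the main technical contribution, is a balanced bipartite graph $G=(L,R,E)$ with $|L|=|R|=N$ and $|E|\ge N^{2-\varepsilon}$ together with an $\ell_p$-realization $\phi\colon L\cup R\to\R^{d'}$ in $d'=(\log N)^{O_\varepsilon(1)}$ dimensions such that $\|\phi(u)-\phi(v)\|_p=1$ whenever $uv\in E$ and $\|\phi(u)-\phi(v)\|_p\ge 1+\delta_0$ on every non-edge pair (including all pairs inside $L$ and inside $R$). Following the locally-dense-codes philosophy of Dumer-Miccancio-Sudan, I would build this by placing both bipartition sides on suitably perturbed copies of a good linear code (e.g.\ a Reed-Solomon outer code concatenated with a Hadamard-style inner code, embedded via $\{0,1\}\to\{\pm 1\}$); the code's distance controls the monochromatic margin while its large image size controls $|E|$.

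Given such a gadget with $N=m$, the reduction is natural. For a \BCP instance $(A,B)\subset\R^{d_0}$ with $d_0=\omega(\log m)$ and threshold $t^*$, draw uniformly random bijections $\sigma\colon A\to L$ and $\tau\colon B\to R$ and map $a\mapsto a'=(\lambda a,\phi(\sigma(a)))$, $b\mapsto b'=(\lambda b,\phi(\tau(b)))$, with $\lambda>0$ tuned so that $((\lambda t^*)^p+1)^{1/p}<1+\delta_0$. A short case check shows that every monochromatic pair and every bichromatic non-edge pair exceeds the threshold $T=((\lambda t^*)^p+1)^{1/p}$, while any close bichromatic pair that happens to land on an edge of $G$ meets $T$. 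Since each fixed close pair is sent to an edge with probability $|E|/m^2\ge m^{-\varepsilon}$, repeating the reduction $O(m^{\varepsilon}\log m)$ times catches the optimum with high probability, so any $n^{2-\varepsilon'}$ algorithm for \CP in $(\log n)^{O_\varepsilon(1)}$ dimensions would give an $m^{2-\varepsilon'+O(\varepsilon)}$ algorithm for \BCP in dimension $(\log m)^{O_\varepsilon(1)}$, contradicting Alman-Williams whenever $\varepsilon<\varepsilon'$.

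The hard step is squarely the gadget construction: realizing $\Omega(m^{2-\varepsilon})$ bichromatic contacts in $\mathrm{polylog}(m)$ Euclidean dimensions while keeping every monochromatic non-contact strictly beyond the contact distance is a strong geometric statement, and it is where the coding-theoretic machinery, rather than the fine-grained reduction bookkeeping above, does the real work.
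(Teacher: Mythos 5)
Your high-level reduction is the same as the paper's: append a dense bipartite ``contact'' gadget to a \BCP instance, use random bijections so the optimal bichromatic pair lands on a gadget edge with probability $|E|/m^2\ge m^{-\varepsilon}$, and repeat $\tilde{O}(m^{\varepsilon})$ times (the paper derandomizes this via a covering-by-isomorphic-copies lemma, but otherwise the bookkeeping matches). The genuine gap is the gadget itself, which you correctly identify as the main content but do not construct. Saying ``place both sides on suitably perturbed copies of a good linear code (Reed--Solomon concatenated with a Hadamard-style inner code)'' does not explain where $N^{2-\varepsilon}$ pairs at \emph{exactly} the contact distance come from while every other pair is strictly farther; a generic good code, or a random one, provably fails, since random bipartite graphs have contact dimension $\Omega(N)$ in $\ell_2$. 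The paper's construction is quite specific: take $\cC_1=\rs_q[q,K]\subseteq\cC_2=\rs_q[q,K+1]$, use an averaging argument to find a center $\bs\in\cC_2\setminus\cC_1$ such that an $|\cC_1|^{-o(1)}$ fraction of $\cC_1$ lies at Hamming distance exactly $\Delta(\cC_2)$ from $\bs$ and none lies closer (the count of codewords at the minimum distance comes from the exact minimum-weight enumeration for MDS codes), set $A=\cC_1$, $B=\bs+\cC_1$, declare edges to be the pairs at distance exactly $\Delta(\cC_2)$, and pass to $\{0,1\}$ coordinates by the one-hot map. Moreover, your gadget demands a \emph{constant} margin $\delta_0$ between the contact distance and all other distances together with $N^{2-\varepsilon}$ edges; that combination is exactly what the paper cannot achieve and poses as an open problem (Reed--Solomon gives margin only $1+1/\mathrm{polylog}$, while AG codes give a constant margin but only about $N^{1.5}$ edges, which is why the approximate-\CP bound is only $n^{1.5-\varepsilon}$). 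Fortunately the exact-\CP theorem does not need a constant margin: the paper exploits integrality of Hamming distances and repeats the gadget coordinates $d+1$ times (in your notation, one can instead shrink $\lambda$ when $\delta_0=1/\mathrm{polylog}$), but as written your gadget specification is stronger than anything known.

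There is also an efficiency gap in applying the gadget at full size $N=m$: with the known construction the center is found by brute force over $\cC_2$ against $\cC_1$, so building the gadget takes $m^{2+o(1)}$ time, and your reduction would only yield a \BCP algorithm with superquadratic preprocessing, giving no contradiction with the Alman--Williams lower bound. The paper avoids this (and the fact that the gadget exists only for a log-dense sequence of sizes) by constructing the gadget at size $n'\le n^{0.1}$, partitioning $A$ and $B$ into $n/n'$ blocks, covering each $K_{n',n'}$ by $O((n')^{\delta}\log n')$ permuted copies of the gadget graph, and calling the \CP oracle on instances of size $2n'$, for total time $O(n^2 (n')^{-\varepsilon/2}\log n)$, which is subquadratic since $n'\ge n^{\Omega(1)}$. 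You would need to restructure your reduction in this block-wise fashion, or else supply a subquadratic-time gadget construction at full size.
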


%In Theorem~\ref{thm:CP-01}, we provide a formal statement of the above theorem, and 
In particular we would like to emphasize that the dimension for which we show the lower bound on \CP depends on $\varepsilon$. We would also like to remark that our lower bound holds even when the input point-set of \CP is a subset of $\{0,1\}^d$. Finally, we note that the centerpiece of the proof of the above theorem (and also the proofs of the other results that will be subsequently mentioned) is the  construction of a dense bipartite graph with low \emph{contact dimension}, i.e., we construct a balanced bipartite graph on $n$ vertices with $n^{2-\varepsilon}$ edges whose vertices can be realized as points in a $(\log n)^{\Omega_\varepsilon(1)}$-dimensional $\ell_p$-metric space  such that every pair of vertices which have an edge in the graph are at distance exactly 1 and every other pair of vertices are at distance greater than 1. This graph construction is inspired by the construction of locally dense codes introduced by Dumer, Miccancio, and Sudan \cite{DMS03} and uses special density properties of Reed Solomon codes. A detailed proof overview is given in Section~\ref{sec:overviewRS}.

Next, we improve our result in Theorem~\ref{thm:CP} in some aspects by showing $1+o(1)$ factor inapproximability of \CP even in \emph{$O_\varepsilon(\log n)$} dimensions, but can only rule out algorithms running in $n^{1.5-\varepsilon}$ time (as opposed to Theorem~\ref{thm:CP} which rules out exact algorithms for \CP running in $n^{2-\varepsilon}$ time). More precisely, we show the following.

\begin{theorem}[Subquadratic Hardness of gap-\CP]\label{thm:gapCP}
Let $p\in\mathbb{R}_{\ge 1}\cup\{0\}$. Assuming \SETH, for every $\varepsilon>0$, there exists $\delta(\varepsilon)>0$ and $c(\varepsilon)>1$ such that  no algorithm running in $n^{1.5-\varepsilon}$  time that can solve $(1+\delta)$-\CP in the $\ell_p$-metric,  even when $d=c\log n$.
\end{theorem}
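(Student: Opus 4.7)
The plan is to combine Rubinstein's \cite{R18} SETH-based hardness of $(1+\delta_0)$-gap \BCP in $d = O(\log n)$ dimensions (which rules out $n^{2-\varepsilon_0}$-time algorithms for every $\varepsilon_0 > 0$) with a reduction from gap-\BCP to gap-\CP. If gap-\CP on $N$ points in $O(\log n)$ dimensions were solvable in $N^{1.5-\varepsilon}$ time, then a reduction mapping a gap-\BCP instance on $n$ points to a gap-\CP instance on $N = n^c$ points would yield a gap-\BCP algorithm running in $n^{c(1.5-\varepsilon)}$ time; any blow-up with $c < 4/3$ (in particular $c = 1 + o(1)$) suffices to contradict the gap-\BCP bound.

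I would mirror the blueprint behind Theorem~\ref{thm:CP}: append each \BCP point with a \emph{color label} that stretches intra-color distances above the close-pair threshold while leaving inter-color distances essentially untouched. Concretely, each $a \in A$ is sent to $(a/s, \phi_A(a))$ and each $b \in B$ to $(b/s, \phi_B(b))$ in $\mathbb{R}^{d + O(\log n)}$ for a suitable scale $s$, where the label maps $\phi_A, \phi_B$ come from a dimension-efficient version of the paper's bipartite contact-graph gadget. The key building block is a balanced bipartite embedding $\phi : L \cup R \to \mathbb{R}^{O(\log n)}$ satisfying $\|\phi(\ell) - \phi(r)\| = 1$ for each $\{\ell, r\} \in E(H)$ and $\|\phi(u) - \phi(v)\| \geq 1 + \delta_g$ for every other pair (including every intra-side pair), for some fixed $\delta_g = \delta_g(\varepsilon) > 0$. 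Since only a $(1+\delta_g)$-gap is required at non-edges rather than strict separation, this embedding can be compressed to $O(\log n)$ dimensions at the cost of accommodating fewer edges per graph; the remaining $n^2$ potential close \BCP pairs can be covered by repeating the construction over a few rounds obtained from random permutations, incurring only an $n^{o(1)}$ multiplicative blow-up in the final gap-\CP instance size.

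A careful choice of $s$ (so that $1/s^2$ is much smaller than $\delta_g$) together with the gap-\BCP margin $\delta_0$ then verifies: if \BCP has a close pair at distance at most $1$, the corresponding two \CP points lie at squared distance at most $1/s^2 + 1$, whereas every far-bichromatic pair has squared distance at least $(1+\delta_0)^2/s^2 + 1$ and every intra-color pair has squared distance at least $(1+\delta_g)^2$. Both lower bounds exceed $(1/s^2 + 1)(1+\delta)^2$ for a suitable CP gap $\delta = \delta(\varepsilon) > 0$, producing the desired $(1+\delta)$-gap instance.

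The hard part will be the $O(\log n)$-dimensional gap realization of the bipartite contact graph with enough edges to keep the covering blow-up at $n^{o(1)}$: this is the low-dimensional, gap-tolerant analogue of the exact-distance graph used for Theorem~\ref{thm:CP}. By demanding only a $(1+\delta_g)$-gap at non-edges instead of exact unit distance at edges, the Reed--Solomon-based locally-dense-code construction at the heart of the paper can be re-tuned to fit in $O(\log n)$ dimensions, whereas the exact counterpart from Theorem~\ref{thm:CP} requires $(\log n)^{\Omega_\varepsilon(1)}$ dimensions. The density of the graph and the gap $\delta_g$ must be balanced against each other so that the resulting \CP gap $\delta$ and the dimension constant $c$ end up depending only on $\varepsilon$.
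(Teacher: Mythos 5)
Your high-level architecture (Rubinstein's gap-\BCP hardness, a gap-realized dense bipartite gadget in $O(\log n)$ dimensions, and covering $K_{n,n}$ by $n^{o(1)}$ permuted copies) matches the paper's, but the entire content of the theorem lives in the gadget, and that is where the proposal has a genuine gap. You need a Boolean realization in $O(\log n)$ dimensions of a balanced bipartite graph with (a) a constant multiplicative gap $1+\delta_g$ between the edge distance and all intra-side distances and (b) $n^{2-o(1)}$ edges, so that the covering step costs only $n^{o(1)}$. The claim that the Reed--Solomon construction ``can be re-tuned'' to this regime fails for two structural reasons. First, RS (indeed any MDS) codes need alphabet size $q\ge$ block length $N\ge$ message length $K$, and the translation to Boolean vectors costs a factor $q$ in dimension; since $\log n = K\log q$, the dimension is at least $qN\ge K^2$, which is never $O(\log n)$ --- the best one can hope for along these lines is $2^{\Theta(\sqrt{\log n})}$. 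Second, forcing a constant relative gap $\Delta(\cC_1)/\Delta(\cC_2)\ge 1+\Omega(1)$ with nested RS codes requires the larger message length to be $\Omega(q)$, at which point the fraction of minimum-weight codewords of $\cC_2$ (which is exactly the edge density of the gadget) collapses far below what you assume. This is precisely why the paper switches from RS codes to nested AG codes over a constant-size alphabet (Theorem~\ref{thm:ag-gadget-new}), whose block length $\Theta(\log n)$ gives the $c\log n$ dimension bound.

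The second, more telling omission is that your accounting presumes gadget density $n^{2-o(1)}$ (hence only ``a few rounds'' of permutations), which is exactly what is \emph{not} known to be achievable once a constant distance gap is demanded. With AG codes, the available lower bound on the number of minimum-weight codewords (Theorem~\ref{thm:vladut}) yields density only $\Omega(n^{1.5-\delta})$, forcing $k\approx n^{0.5+\delta}$ covering permutations, and this loss is the sole reason the theorem rules out $n^{1.5-\varepsilon}$ time rather than $n^{2-\varepsilon}$. Your proof never explains where the exponent $1.5$ would come from; indeed, if the gadget you posit existed, you would have proven $n^{2-\varepsilon}$ hardness of $(1+\delta)$-\CP and thereby resolved Open Question~\ref{open:Q2}, which the paper explicitly leaves open --- a strong signal that the key construction is missing rather than routine. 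Two smaller points: your gadget condition also demands the $1+\delta_g$ gap on non-edge cross pairs, which is stronger than needed (only intra-side pairs require the gap, cross non-edges need only strict inequality), making the object even harder to build; and the scaling-by-$s$ argument is $\ell_2$-specific, whereas the paper obtains all $\ell_p$ (and $p=0$) simultaneously by working with Boolean vectors and coordinate repetition, choosing the repetition counts $r_1,r_2$ to balance the \BCP gap $\kappa$ against the gadget gap $\mu$.
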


We remark that the  $n^{1.5-\varepsilon}$ lower bound on approximate \CP is an artifact of our proof strategy and that a different approach or an improvement in the state-of-the-art bound on the number of minimum weight codewords in algebraic geometric codes (which are used in our proof), will lead to the complete resolution of  Open~Question~\ref{open:Q2}.

It should also be noted that the approximate version of \CP and the dimension are closely related. Namely, using standard dimensionality reduction techniques~\cite{JL84}\footnote{In fact, since our results applies to $\{0, 1\}$-vectors, simply subsampling coordinates would also work.} for $(1 + \delta)$-\CP, one can always assume that $d = O_\delta(\log n)$. In other words, hardness of $(1 + \delta)$-\CP immediately yields logarithmic dimensionality bound as a byproduct.

Finally, we completely answer Open~Question~\ref{open:Q3} by showing the following inapproximability result for \MIP, matching the hardness for \BMIP from~\cite{ARW17}.

\begin{theorem}[Subquadratic Hardness of gap-\MIP]\label{thm:MIP}
Assuming \SETH, for every $\varepsilon>0$,  no algorithm running in $n^{2-\varepsilon}$  time  can solve $\gamma$-\MIP for any $\gamma\le 2^{(\log n)^{1-o(1)}}$, even when $d=n^{o(1)}$.
\end{theorem}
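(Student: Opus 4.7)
My plan is to reduce $\gamma$-\BMIP (whose hardness is due to Abboud--Rubinstein--Williams~\cite{ARW17}) to $\gamma$-\MIP, reusing the dense-bipartite-graph gadget that underlies Theorems~\ref{thm:CP} and~\ref{thm:gapCP} but retuned so that the ``edge vs.\ non-edge'' gap is in the \emph{inner product} and is amplified to a factor $\gamma$. The starting point is a $\gamma$-\BMIP instance $(A,B)\subseteq\{0,1\}^{n^{o(1)}}$ in which every vector has a fixed weight $W$ and the YES-threshold $T$ satisfies $T\ge W/\polylog(n)$; both properties are standard features of the \cite{ARW17} construction. I would replace the paper's combinatorial contact-dimension gadget with a low-dimensional realization $\mathbf{u}_1,\dots,\mathbf{u}_n,\mathbf{v}_1,\dots,\mathbf{v}_n \in \mathbb{R}^m$ of a dense bipartite template graph $H$ on $2n$ vertices with $|E(H)|\ge n^{2-\varepsilon}$ and $m=n^{o(1)}$, satisfying $\langle \mathbf{u}_i,\mathbf{v}_j\rangle=c$ whenever $(i,j)\in E(H)$ and $\langle \mathbf{u}_i,\mathbf{v}_j\rangle,\langle\mathbf{u}_i,\mathbf{u}_{i'}\rangle,\langle\mathbf{v}_j,\mathbf{v}_{j'}\rangle\le c/\gamma$ for every other pair. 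I would obtain this by starting from the Reed--Solomon-based contact-dimension construction (which gives distance exactly $1$ on edges and $>1$ elsewhere), normalizing to equal norms so that the identity $\langle u,v\rangle = \|u\|^2-\tfrac{1}{2}\|u-v\|^2$ turns it into an inner-product gadget with a constant gap, and then tensor-amplifying about $(\log n)^{1-o(1)}$ times.

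With the gadget in hand, I would cover $[n]^2$ by the edge-sets of $N=n^\varepsilon$ graph instances $H_1,\dots,H_N$ produced by independently relabeling the two sides of $H$ uniformly at random; a Chernoff-style edge-probability calculation shows $N=n^{\varepsilon}\polylog(n)$ instances suffice to cover every pair. For each $H_k$ with realization $(\mathbf{u}_i^{(k)},\mathbf{v}_j^{(k)})$, I would output the monochromatic instance
\[
S_k \;=\; \{\, a_i\otimes \mathbf{u}_i^{(k)} : i\in[n]\,\}\;\cup\;\{\, b_j\otimes \mathbf{v}_j^{(k)} : j\in[n]\,\}\;\subseteq\;\mathbb{R}^{d\cdot m}.
\]
Every inner product in $S_k$ factors as $\langle a,b\rangle\cdot\langle u,v\rangle$, so a YES witness $(a_i,b_j)$ caught by $E(H_k)$ yields a pair of inner product at least $Tc$, while in the NO case every pair in every $S_k$ has inner product at most $\max\bigl(Tc/\gamma,\,Wc/\gamma\bigr)=Wc/\gamma\le (T/\gamma)\cdot c\cdot \polylog(n)$ by the assumption $W\le T\cdot\polylog(n)$. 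Because the extra $\polylog(n)$ factor is absorbed into the $(\log n)^{1-o(1)}$ exponent of $\gamma$, this certifies a $\gamma$-\MIP gap, and solving all $N$ monochromatic instances in $n^{2-\varepsilon'}$ each would cost $n^{2-(\varepsilon'-\varepsilon)}$ in total, contradicting \SETH whenever $\varepsilon'>\varepsilon$.

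The main technical obstacle is the gadget step: amplifying the $O(1)$ inner-product gap of the base contact-dimension construction up to $\gamma=2^{(\log n)^{1-o(1)}}$ while keeping the realization dimension $n^{o(1)}$. Tensoring $k=\Theta(\log \gamma)=(\log n)^{1-o(1)}$ copies of the gadget multiplies both the gap and the dimension, and fitting the result inside $n^{o(1)}=2^{o(\log n)}$ forces the $\Omega_\varepsilon(1)$ exponent in the base gadget's $(\log n)^{\Omega_\varepsilon(1)}$ dimension to be balanced very tightly against $\log\gamma$, so that their product $\Omega_\varepsilon(1)\cdot (\log n)^{1-o(1)}\cdot \log\log n$ is still $o(\log n)$; this is the calibration I expect to drive the choice of the hidden $o(1)$ in the statement of Theorem~\ref{thm:MIP}. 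An alternative route I would explore is to build the inner-product gadget directly from the weight distribution of Reed--Solomon (or algebraic-geometric) codes, exploiting the Dumer--Miccancio--Sudan locally-dense-code structure so that the large gap comes ``for free'' from the minimum distance of the code, bypassing the need to tensor-amplify at all.
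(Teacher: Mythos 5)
Your blueprint (dense bipartite inner-product gadget $+$ covering by relabelings $+$ tensor amplification) is essentially the paper's, but the step you treat as routine is exactly where the proposal breaks. You claim that normalizing the Reed--Solomon contact-dimension gadget ``turns it into an inner-product gadget with a constant gap.'' It does not: in that gadget the codes have message lengths $K_2=K_1+1$, so edges have inner product $K_2-1$ while same-side pairs have inner product at most $K_1-1$, a ratio of only $1+O(1/K_1)=1+o(1)$ (and $K_1=\Theta_\varepsilon(\log n/\log\log n)$). Tensor-amplifying a $1+O(1/K_1)$ gap up to $\gamma=2^{(\log n)^{1-o(1)}}$ requires $\Theta(K_1\log\gamma)$ tensor powers, which blows the dimension far past $n^{o(1)}$ (indeed past any polynomial), so your main route for the gadget fails. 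The fix is a re-tuning of the code parameters, not a renormalization: take the outer Reed--Solomon code with message length $K_2>3K_1$, which gives a genuine factor-$3$ gap between edge pairs and same-side pairs (this is the paper's Theorem~\ref{thm:mip-pre-tensor}); then $\Theta(\log\gamma)$ tensorings fit inside $n^{o(1)}$ dimensions, with exactly the $\log\gamma\cdot\log\log n=o(\log n)$ calibration you identify. Separately, your demand that \emph{non-edge cross} pairs have inner product at most $c/\gamma$ is both unachievable by tensoring (the edge vs.\ non-edge ratio is only $1+O(1/\beta)$ per tensor factor) and unnecessary: in the NO case cross pairs are already small because of the \BMIP soundness, so only same-side pairs need the $\gamma$ gap --- which is precisely why the paper's gap-\IP-realization requires the gap only against same-side pairs.

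On the outer reduction you also diverge from the paper in a way that adds an unverified hypothesis: you start from the multiplicative $\gamma$-\BMIP hardness of \cite{ARW17} and assume the hard instances have fixed weight $W$ with threshold $T\ge W/\polylog(n)$; this is plausible for the distributed-PCP vectors but is asserted, not proved. The paper avoids this entirely by reducing from Rubinstein's \emph{additive}-gap \BMIP (Theorem~\ref{thm:add-r18}): with the factor-$3$ gadget and the biclique-covering lemma it first establishes only a $\bigl(1+\frac{1}{\log\log n}\bigr)$-factor hardness for \MIP in $\polylog(n)$ dimensions (Theorem~\ref{thm:MIP-pre-tensor}), and then tensors the \emph{final monochromatic instance} $t=\frac{\log n}{(\log\log n)^2}$ times to boost the gap to $2^{(\log n)^{1-o(1)}}$ in $n^{o(1)}$ dimensions. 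Your covering step and run-time bookkeeping are fine (the paper derandomizes the covering via Lemma~\ref{lem:cover}, but a randomized cover is harmless). So the missing idea is the constant-gap inner-product gadget obtained by changing the code parameters; without it, the amplification cannot reach $\gamma$ within the stated dimension budget.
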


Recently, there have been a lot of results connecting \BCP or $(1+o(1))$-\BCP to other problems (see \cite{R18,C18,C18a,CW19}). Now such connections can be extended to \CP as well. For example, the following conditional lower bound follows from \cite{R18} for gap-\CP in the edit distance metric and for completeness a proof is given in Appendix~\ref{sec:edit}.

\begin{theorem}[Subquadratic Hardness of gap-\CP in edit distance metric]\label{thm:CP-Edit}
Assuming \SETH, for every $\varepsilon>0$, there exists $\delta(\varepsilon)>0$ and $c(\varepsilon)>1$ such that  no algorithm running in $n^{1.5-\varepsilon}$  time   can solve $(1+\delta)$-\CP in the edit distance metric,  even when $d=c\log n\log\log n$.
\end{theorem}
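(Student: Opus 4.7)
The plan is to lift the Hamming-metric gap lower bound of Theorem~\ref{thm:gapCP} to edit distance via a bit-by-bit gadget, in the style of the Hamming-to-edit embeddings used in \cite{R18}.

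First I would invoke Theorem~\ref{thm:gapCP} with $p = 1$: the proof of that theorem outputs $\{0,1\}$-valued vectors, and on Boolean inputs the $\ell_1$ distance coincides with the Hamming distance. Under \SETH this yields, for every $\varepsilon > 0$, constants $\delta_0(\varepsilon) > 0$ and $c_0(\varepsilon) > 1$ and an $n^{1.5-\varepsilon}$ lower bound for $(1+\delta_0)$-\CP in the Hamming metric on $n$ points of dimension $d_0 = c_0 \log n$.

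Next I would apply a coordinate-wise embedding $\psi\colon \{0,1\} \to \Sigma^{\ell}$ of block length $\ell = \Theta(\log \log n)$, extended bit-by-bit to a map $\{0,1\}^{d_0} \to \Sigma^{d_0 \ell}$, with the property that for every $x, y \in \{0,1\}^{d_0}$ the edit distance $\ed(\psi(x), \psi(y))$ equals $\ell$ times the Hamming distance between $x$ and $y$, up to a multiplicative $(1 \pm o(1))$-factor. Such a gadget---short ``signature'' blocks separated by long runs of a neutral character, so that any near-optimal alignment is forced to match the $i$-th block of one string with the $i$-th block of the other---is a standard tool for edit-distance hardness reductions and is supplied in the needed form by \cite{R18}. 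Applying $\psi$ pointwise to the Hamming instance produces an edit-distance \CP instance on the same $n$ strings of length $c \log n \log \log n$ for an appropriate $c = c(\varepsilon)$, in which all pairwise distances are rescaled uniformly up to $(1 \pm o(1))$-distortion. Absorbing this distortion into $\delta_0$, the $(1+\delta_0)$-gap in Hamming distance becomes a $(1+\delta)$-gap in edit distance for some $\delta(\varepsilon) > 0$; any $n^{1.5-\varepsilon}$-time algorithm for the latter would therefore solve the former in the same time, contradicting \SETH.

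The main obstacle is establishing the block-alignment property of $\psi$ rigorously. Because edit distance does not decompose additively across coordinates, one must argue that every near-optimal alignment between $\psi(x)$ and $\psi(y)$ respects the block structure, so that the total edit distance factorizes as a sum over blocks as claimed. Once the neutral separators are chosen long enough that any cross-block alignment is strictly suboptimal, the analysis reduces to a per-block comparison of the two signatures $\psi(x_i)$ versus $\psi(y_i)$, and the reduction goes through as sketched.
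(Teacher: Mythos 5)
Your high-level plan (start from the Hamming-metric gap lower bound and push it through a Hamming-to-edit embedding from \cite{R18}) is the same as the paper's, but as written there are two genuine gaps. First, the embedding you attribute to \cite{R18} does not have the property you use. The actual guarantee (Lemma~\ref{lem:edit}) is \emph{additive}: $|\ed(\zeta(a),\zeta(b))-\lambda\log d\cdot\|a-b\|_0|=o(d')$ with $d'=O(d\log d)$, not a per-pair multiplicative $(1\pm o(1))$ distortion. An additive $o(d')$ error is only a relative $o(1)$ error for pairs whose Hamming distance is $\Omega(d)$, so the reduction needs the extra observation--made explicitly in the paper--that the hard instances produced by Theorem~\ref{thm:R18}/Theorem~\ref{thm:gapCP01} have threshold $\alpha=\Theta(\log n)=\Omega(d)$; without that remark, ``absorbing the distortion into $\delta_0$'' is unjustified, since the additive error could swamp the gap.

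Second, the concrete gadget you sketch does not work. You take a single coordinate-independent map $\psi:\{0,1\}\to\Sigma^{\ell}$ applied bit-by-bit, with identical neutral separators between blocks, and claim the separators force every near-optimal alignment to match the $i$-th block to the $i$-th block. They do not: since the separators are identical and there are only two distinct block strings ($\psi(0)$ and $\psi(1)$), a shifted alignment matches separators to separators and blocks to blocks perfectly after paying only for the two ends. For instance, for $x=0101\cdots$ and $y=1010\cdots$ the Hamming distance is $d$, yet the encoded strings are one-period shifts of each other, so their edit distance is $O(\ell+L)$ where $L$ is the separator length--and with the target dimension $c\log n\log\log n$ you must have $\ell+L=O(\log\log n)\ll \ell d$. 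Thus soundness fails: far pairs in Hamming distance can become close in edit distance. This is exactly why Rubinstein's construction uses \emph{position-dependent} random signatures $s_{i,b}$ (one per coordinate-bit pair), with concentration bounds for the edit distance of random strings and a derandomization via $\log d$-wise independence; with that gadget (and the $\alpha=\Omega(d)$ observation) the paper's proof goes through, essentially as you intended.
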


\section{Proof Overview}\label{sec:overview}

In this section, we provide an overview of our proofs. For ease of presentation, we will sometimes be informal here; all notions and proofs are formalized in subsequent sections. Our overview is organized as follows. First, in Subsection~\ref{sec:overviewRS}, we outline our proof of running time lower bounds for exact \CP (Theorem~\ref{thm:CP}). Then, in Subsection~\ref{subsec:abstract}, we abstract part of our reduction using error-correcting codes, and relate them back to the works on locally dense codes~\cite{DMS03,ChengW12,M14} that inspire our constructions. Finally, in Subsection~\ref{sec:overviewAG}, we briefly discuss how to modify the base construction (i.e. code properties) to give conditional lower bounds for approximate \CP and \MIP (Theorems~\ref{thm:gapCP} and~\ref{thm:MIP}).

\subsection{Conditional Lower Bound on Exact Closest Pair}\label{sec:overviewRS}

In this subsection, we provide a proof overview of a slightly weaker version of Theorem~\ref{thm:CP},  i.e., we show that assuming \SETH, for every $p\in\mathbb{R}_{\ge 1}\cup\{0\}$, no subquadratic time algorithm can solve \CP in the $\ell_p$-metric when $d=(\log n)^{\omega(1)}$. We prove such a result by reducing \BCP in dimension $d$ to \CP in dimension $d+(\log n)^{\omega(1)}$, and the subquadratic hardness for \CP follows from the subquadratic hardness of \BCP established by \cite{AW15}. 
Note that the results in this paper remain interesting even if \SETH is false, as our reduction shows that \BCP and \CP  are computationally equivalent\footnote{We can reduce an instance of \CP to an instance of \BCP by randomly partitioning the input set of \CP instance into two, and the optimal closest pair of points will be in different sets with probability $\nicefrac{1}{2}$ (and this reduction can be made deterministic).} (up to $n^{o(1)}$ factor in the running time) when $d=(\log n)^{\omega(1)}$. The conditional lower bound on \CP is merely a consequence of this computational equivalence. Finally, we note that a similar equivalence also holds between \MIP and \BMIP. 

\paragraph{Understanding an obstacle of \cite{DKL18}.} Our proof builds on the ideas of \cite{DKL18} who showed that assuming \SETH, for every $p>2$, no subquadratic time algorithm can solve \CP in the $\ell_p$-metric when $d=\omega(\log n)$. They did so by connecting the complexity of \CP and \BCP via the \emph{contact dimension} of the balanced complete bipartite graph (biclique), denoted by $K_{n,n}$. We elaborate on this below.

To motivate the idea behind~\cite{DKL18}, let us first consider the trivial reduction from \BCP to \CP: given an instance $A, B$ of \BCP, we simply output $A \cup B$ as an instance of \CP. This reduction fails because there is no guarantee on the distances of a pair of points both in $A$ (or both in $B$). That is, there could be two points $\ba, \ba' \in A$ such that $\|\ba - \ba'\|_p$ is much smaller than the optimum of \BCP on $A, B$. If we simply solve \CP on $A \cup B$, we might find such $\ba, \ba'$ as the optimal pair but this does not give the answer to the original \BCP problem. In order to circumvent this issue, one needs a gadget that ``stretch'' pairs of points both in $A$ or both in $B$ further apart while keeping the pairs of points across $A$ and $B$ close (and preserving the optimum of \BCP on $A,B$). It turns out that this notion corresponds exactly to the contact dimension of the biclique, which we define below.

\begin{definition}[Contact Dimension~\cite{P80}] \label{def:cd-informal}
For any graph $G = (V, E)$, a mapping $\tau: V \to \mathbb{R}^d$ is said to \emph{realize $G$ (in the $\ell_p$-metric)} if for some $\beta > 0$, the following holds for every distinct vertices $u, v$:
\begin{align}
\|\tau(u) - \tau(v)\|_p &= \beta \text{ if } \{u, v\} \in E, \text{ and,} \label{eq:close-points} \\
\|\tau(u) - \tau(v)\|_p &> \beta \text{ otherwise.} \label{eq:stretch}
\end{align}
The \emph{contact dimension} (in the $\ell_p$-metric) of $G$, denoted by $\cd_p(G)$, is the minimum $d \in \mathbb{N}$ such that there exists $\tau: V \to \mathbb{R}^d$ realizing $G$ in the $\ell_p$-metric.
\end{definition}

In this paper, we will be mainly interested in the contact dimension of bipartite graphs. Specifically,~\cite{DKL18} only consider the contact dimension of the biclique $K_{n, n}$. Notice that a realization of biclique ensures that vertices on the same side are far from each other while vertices on different sides are close to each other preserving the optimum of \BCP; these are exactly the desired properties of a gadget outlined above. Using this, \cite{DKL18} give a reduction from \BCP to \CP which shows that the two are computationally equivalent whenever $d=\Omega(\cd_p(K_{n,n}))$, as follows.

Let $A,B\subseteq \mathbb{R}^d$ each of cardinality $n$ be an instance of \BCP and let $\tau: A\dot\cup B \to \mathbb{R}^{\cd_p(K_{n,n})}$ be a map realizing the biclique $(A\dot\cup B, A \times B)$ in the $\ell_p$-metric; we may assume w.l.o.g. that $\beta = 1$. Let $\delta$ be the distance between any point in $A$ and any point in $B$ (i.e., $\delta$ is an upper bound on the optimum of \BCP). Let $\rho>0$ be such that $\|\tau(\ba) - \tau(\bb)\|_p > 1+\rho$ for all $\ba \in A, \bb \in B$ (and this is guaranteed to exist by~\eqref{eq:stretch}). Moreover, let $k > \delta/\rho$ be any sufficiently large number. Consider the point-sets $\tilde A,\tilde B\subseteq \mathbb{R}^{d+\cd_p(K_{n,n})}$ of cardinality $n$ each defined as
$$
\tilde A=\{\ba\circ (k \cdot \tau(\ba))\mid \ba \in A\},\ \tilde B=\{\bb \circ (k \cdot \tau(\bb)) \mid \bb \in B\},
$$
where $\circ$ denotes the concatenation between two vectors and $k \cdot \bx$ denotes the usual scalar-vector multiplication (i.e. scaling $\bx$ up by a factor of $k$). For brevity, we write $\tilde{\ba}$ and $\tilde{\bb}$ to denote $\ba\circ (k \cdot \tau(\ba))$ and $\bb \circ (k \cdot \tau(\bb))$ respectively.

We now argue that, if we can find the closest pair of points in $\tA \cup \tB$, then we also immediately solve \BCP for $(A, B)$. More precisely, we claim that $(\ba^*, \bb^*) \in A \times B$ is a bichromatic closest pair of $(A, B)$ if and only if $(\tilde{\ba^*}, \tilde{\bb^*})$ is a closest pair of $\tA \cup \tB$.

To see that this is the case, observe that, for cross pairs $(\tilde{\ba}, \tilde{\bb}) \in \tA \times \tB$, ~\eqref{eq:close-points} implies that the distance $\|\tilde{\ba} - \tilde{\bb}\|_p$ is exactly $(k^p + \|\ba - \bb\|^p_p)^{1/p}$; hence, among these pairs, $(\tilde{\ba^*}, \tilde{\bb^*})$ is a closest pair iff $(\ba^*, \bb^*)$ is a bichromatic closest pair in $A, B$. Notice also that, since the bichromatic closest pair in $A, B$ is of distance at most $\delta$, the closest pair in $\tA \cup \tB$ is of distance at most $(k^p + \delta^p)^{1/p} \leq k + \delta$.

On the other hand, for pairs both from $\tA$ or both from $\tB$, the distance must be at least $k(1 + \rho)$, which is more than $k + \delta$ from our choice of $k$. As a result, these pairs cannot be a closest pair in $\tA \cup \tB$, and this concludes the sketch of the proof.

There are a couple of details that we have glossed over here: one is that the gap $\rho$ cannot be too small (e.g., $\rho$ cannot be as small as $\nicefrac{1}{2^n}$) and the other is that we should be able to construct $\tau$ efficiently. Nevertheless, these are typically not an issue.

\cite{DKL18} show that $\cd_p(K_{n,n})=\Theta(\log n)$ when $p>2$ and that the realization can be constructed efficiently and with sufficiently large $\rho$. This implies the subquadratic hardness of \CP (by reduction from \BCP) in the $\ell_p$-metric for all $p>2$ and $d=\omega(\log n)$. However, it was known that $\cd_2(K_{n,n})=\Theta(n)$ \cite{FM88}. Thus, they could \emph{not} extend their conditional lower bound to \CP in the Euclidean metric\footnote{Note that plugging in the bound on $\cd_2(K_{n,n})$ in the result of \cite{DKL18} yields that assuming \SETH, no subquadratic in $n$ running time algorithm can solve \CP when $d=\Omega(n)$. This is not a meaningful lower bound as just the input size of \CP when $d=\Omega(n)$ is $\Omega(n^2)$.} even when $d=o(n)$. In fact, this is a serious obstacle as it rules out many natural approaches to reduce \BCP to \CP in a black-box manner. Elaborating, the lower bound on $\cd_2(K_{n,n})$ rules out local gadget reductions which would replace each point with a composition of that point and a gadget with a small increase in the number of dimensions, as such gadgets can be used to construct a realization of $K_{n,n}$ in the Euclidean metric in a low dimensional space, contradicting the lower bound on $\cd_2(K_{n,n})$.

\paragraph{Overcoming the Obstacle: Beyond Biclique.} \begin{sloppypar}We overcome the above obstacle by considering dense bipartite graphs, instead of the biclique. More precisely, we  show that there exists a balanced bipartite graph $G^* = (A^* \dot\cup B^*, E^*)$ on $2n$ vertices such that $|E^*|~\geq~n^{2 - o(1)}$ and $\cd_p(G^*)$ is small (i.e. $\cd_p(G^*) \leq (\log n)^{\omega(1)}$). We  give a construction of such a graph below but before we do so, let us briefly argue why this suffices to show that \BCP and \CP are computationally equivalent (up to $n^{o(1)}$ multiplicative overhead in the running time) for dimension $d = \Omega(\cd_p(G^*))$.\end{sloppypar}

Let us consider the same reduction which produces $\tA, \tB$ as before, but instead of using a realization of the biclique, we use a realization $\tau$ of $G^*$. This reduction is of course incorrect: if $(\ba^*, \bb^*)$ is not an edge in $G^*$, then $\|\tau(\ba^*) - \tau(\bb^*)\|_p$ could be large and, thus the corresponding pair of points $(\tilde{\ba^*}, \tilde{\bb^*})\in \tA\times \tB$, may not be the closest pair. Nevertheless, we are not totally hopeless: if $(\ba^*, \bb^*)$ is an edge, then we are in good shape and the reduction is correct.

With the above observation in mind, consider picking a random permutation $\pi$ of $A \cup B$ such that $\pi(A) = A$ and $\pi(B) = B$ and then initiate the above reduction with the map $(\tau \circ \pi)$ instead of $\tau$. Note that $\tau \circ \pi$ is simply a realization of an appropriate permutation $G'$ of $G^*$ (i.e., $G'$ is isomorphic to $G^*$). Due to this, the probability that we are ``lucky'' and  $(\ba^*, \bb^*)$ is an edge in $G'$ is $p := |E|/n^2$; when this is the case, solving \CP on the resulting instance would give the correct answer for the original \BCP instance. If we repeat this $\log n / p = n^{o(1)}$ times, we would find the optimum of the original \BCP instance with high probability.

To recap, even when $G^*$ is not a biclique, we can still use it to give a reduction from \BCP to \CP, except that the reduction  produces multiple (i.e. $\tilde{O}(n^2/|E^*|)$) instances of \CP. We remark here that the reduction can be derandomized: we can deterministically (and efficiently) pick the permutations so that the permuted graphs covers $K_{n, n}$ (see Lemma~\ref{lem:cover}). As a minor digression, we would like to draw a parallel here with a recent work of Abboud, Rubinstein, and Williams \cite{ARW17}. The obstacle raised in \cite{DKL18} is about the impossibility of certain kinds of many-one gadget reductions. We overcame it by designing a reduction from \BCP to \CP which not only increased the number of dimensions but also the number of points (by creating multiple instances of \CP). 
This technique is also utilized in \cite{ARW17} where they showed the impossibility of Deterministic Distributed PCPs (Theorem~I.2 in \cite{ARW17}) but then overcame that obstacle by using an advice (which is then enumerated over resulting in multiple instances) to build Non-deterministic Distributed PCPs.

\paragraph{Constructing a dense bipartite graph with low contact dimension.}
We now proceed to  construct the desired graph $G^* = (A^* \cup B^*, E^*)$.  Note that any construction of a dense bipartite graph with contact dimension $n^{o(1)}$ is non-trivial. This is because it is known that a random graph has contact dimension $\Omega(n)$ in the Euclidean metric with high probability \cite{RRS89,BL05}, and therefore our graph construction must be significantly better than a random graph. 

Our realization $\tau^*$ of $G^*$ will map into a subset of $\{0,1\}^{(\log n)^{\omega(1)}}$. As a result, we can fix $p=0$, since a realization of a graph with entries in $\{0,1\}$ in the Hamming-metric also realizes the same graph in every $\ell_p$-metric for any $p\neq\infty$. 

\begin{sloppypar}Fix $g=\omega(1)$. We associate $[n]$ with $\mathbb{F}_q^h$ where $q=\Theta\left((\log n)^g\right)$ is a prime and $h~=~\Theta\left(\frac{\log n}{g\cdot \log\log n}\right)$. Let $\mathcal{P}$ be the set of all univariate polynomials (in $x$) over $\mathbb{F}_q$ of degree at most $h-1$. We have that $|\mathcal{P}|=q^h=n$ and associate $\mathcal{P}$ with $A^*$. Let $\mathcal{Q}$ be the set of all univariate monic polynomials (in $x$) over $\mathbb{F}_q$ of degree $h$, i.e.,\end{sloppypar}
$$
\mathcal{Q}=\{x^h+p(x)\mid p(x)\in\mathcal{P}\}.
$$
We associate the polynomials in $\mathcal{Q}$ with the vertices in $B^*$ (note that $|\mathcal{Q}|=n$). In fact, we view the vertices in $A^*$ and $B^*$ as being uniquely labeled by polynomials in $\mathcal{P}$ and $\mathcal{Q}$ respectively. For notational clarity, we write $p_a$ (resp. $p_b$) to denote the polynomial in $\cP$ (resp. $\cQ$) that is associated to $a \in A^*$ (resp. $b \in B^*$).

For every $a\in A^*$ and $b\in B^*$, we include $(a,b)$ as an edge in $E^*$ if and only if the polynomial $p_b-p_a$ (which is of degree $h$) has $h$ distinct roots. This completes the construction of $G^*$. We have to now show the following two claims about $G^*$: (i) $|E^*| =n^{2-O\left(\nicefrac{1}{g}\right)}=n^{2-o(1)}$ and (ii) there is $\tau: A^* \dot\cup B^* \to \{0,1\}^{(\log n)^{O(g)}}=\{0,1\}^{(\log n)^{\omega(1)}}$ that realizes $G^*$.

To show (i), let $\mathcal{R}$ be the set of all monic polynomials of degree $h$ with $h$ distinct roots. We have that $|\mathcal R|=\binom{q}{h}$. Fix a vertex $a\in A^*$. Its degree in $G^*$ is exactly $|\mathcal R|=\binom{q}{h}$. This is because, for every polynomial $r\in \mathcal{R}$, $r+a$ belongs to $\mathcal{Q}$, and therefore $(a,r+a)\in E^*$. This implies the following bound on $|E^*|$:
$$
|E^*| = q^h \cdot \binom{q}{h} \ge q^h\cdot \frac{q^h}{h^h} > \frac{n^2}{(\log n)^{\Theta((\log n)/(g\cdot \log\log n)})}=n^{2-O\left(\nicefrac{1}{g}\right)}.
$$

Next, to show (ii), we construct a realization $\tau^*: A^* \dot\cup B^* \to \mathbb{F}_q^q$ of $G^*$. We note that, it is simple to translate the entries to $\{0, 1\}$ instead of $\mathbb{F}_q$, by replacing $i \in \mathbb{F}_q$ with the $i$-th standard basis $\be_i \in \{0, 1\}^q$. This would result in a realization $\tau^*: A^* \dot\cup B^* \to \{0, 1\}^{q^2}$ of $G^*$; notice that the dimension of $\tau^*$ is $q^2 = \Theta((\log n)^{2g})$ as claimed.

We define $\tau^*$   as follows.
\begin{itemize}
\item For every $a \in A^*$, $\tau^*(a)$ is simply the vector of evaluation of $p_a$ on every element in $\mathbb{F}_q$. More precisely, for every $j \in [q]$, the $j$-th coordinate of $\tau^*(a)$ is $p_a(j - 1)$.
\item Similarly, for every $b \in B^*$ and $j \in [q]$,  the $j$-th coordinate of $\tau^*(b)$ is $p_b(j - 1)$.
\end{itemize}

We  now show that $\tau^*$ is indeed a realization of $G^*$; specifically, we show that $\tau^*$ satisfies~\eqref{eq:close-points} and~\eqref{eq:stretch} with $\beta = q - h$.

Consider any edge $(a, b) \in E^*$. Notice that $\|\tau^*(a)-\tau^*(b)\|_0$ is the number of $x\in\mathbb{F}_q$ such that $p_b(x) - p_a(x) \neq 0$. By definition of $E^*$, $p_b - p_a$ is a polynomial with $h$ distinct roots over $\mathbb{F}_q$. Thus, $\|\tau^*(a)-\tau^*(b)\|_0 = q - h = \beta$ as desired.

Next, consider a non-edge $(a, b) \in (A^*\times B^*) \setminus E^*$ . Then, we know that $p_b - p_a$ has at most $h - 1$ distinct roots over $\mathbb{F}_q$. Therefore, the polynomial $p_b - p_a$ is non-zero on at least $q-h+1$ coordinates. This implies that $\|\tau^*(a) - \tau^*(b)\|_0 \ge q-h+1 > \beta$.

Finally, for any distinct $a, a'\in A^*$, we have $\|\tau^*(a) - \tau^*(a')\|_0 \ge q-h+1$ because $p_a-p_{a'}$ is a non-zero polynomial of degree at most $h-1$ and thus can be zero over $\mathbb F_q$ in at most $h-1$ locations. Similarly, $\|\tau^*(b) - \tau^*(b')\|_0 \ge q-h+1$ for any distinct $b, b' \in B^*$.

This completes the proof sketch for both the claims about $G^*$ and yields Theorem~\ref{thm:CP} for $d=(\log n)^{\omega(1)}$. Finally we remark that in the actual proof of Theorem~\ref{thm:CP}, we will set the parameters in the above construction more carefully and achieve the bound $\cd_p(G^*) = (\log n)^{O_{\varepsilon}(1)}$.

\subsection{Abstracting the Construction via Error-Correcting Codes} \label{subsec:abstract}

Before we move on to discuss the proofs of Theorems~\ref{thm:MIP} and~\ref{thm:gapCP}, let us give an abstraction of the construction in the previous subsection. This will allow us to easily generalize the construction for the aforemention theorems, and also to explain where our motivation behind the construction comes from in the first place.

\paragraph{Dense Bipartite Graph with Low Contact Dimension from Codes.}
\begin{sloppypar}In order to construct a balanced bipartite graph $G^*$ on $2n$ vertices with $n^{2-o(1)}$ edges such that $\cd_p(G^*)\le~d^*$, it suffices to have a code $C^*$ with the following properties (for code-related definitions, see Section~\ref{sec:code-prelim}):\end{sloppypar}
\begin{itemize}
\item $C^* \subseteq \mathbb{F}_q^\ell$ of cardinality $n$ is a linear code with block length $\ell$ over alphabet $\mathbb{F}_q$, and minimum distance $\Delta$.
\item There exists a \emph{center} $s^*\in \mathbb{F}_q^\ell$ and $r^*<\Delta$ such that  $|C^*|^{1-o(1)}$ codewords are at Hamming distance exactly $r^*$ from $s^*$ and no codeword is at distance less than $r^*$ from $s^*$. 
\item $q\cdot \ell = d^*$.
\end{itemize} 
We also require that $C^*$ and $s^*$ can be constructed in $\poly(n)$ time but we shall ignore this requirement for the ease of exposition.

We describe  below how to construct $G^*$ from $C^*$, but first note that the construction of $G^*$ we saw in the previous subsubsection was just showing that Reed Solomon codes \cite{RS60} of block length $q=\Theta((\log n)^g)$ and message length $h=\Theta\left(\frac{\log n}{g\cdot \log\log n}\right)$ over alphabet $\mathbb F_q$ with minimum distance $q-h+1$ has the above properties. The center $s^*$ in that construction was the evaluation of the polynomial $x^h$ over $\mathbb{F}_q$, and $r^*$ was $q-h$.

In general, to construct $G^*$ from $C^*$, we first define a subset $S^*\subseteq\mathbb{F}_q^\ell$ of cardinality $n$ as follows:
$$
S^*=\{\bs^*+ \bc \mid \bc \in C^*\}.
$$
We associate the vertices in $A^*$ with the codewords of $C^*$ and vertices in $B^*$ with the strings in $S^*$. For any $(\ba, \bb)\in A^*\times B^*$, let $(\ba, \bb)\in E^*$ if and only if $\|\bb - \ba\|_0=r^*$. This completes the construction of $G^*$. We have to now show the following claims about $G^*$: (i) $|E^*| = n^{2-o(1)}$ and (ii) there is $\tau: A^*\dot\cup B^* \to \{0,1\}^{q\cdot \ell}$ that realizes $G^*$.

Item (i) follows rather easily from the properties of $C^*$ and $s^*$. Let $T^*$ be the subset of $C^*$  of all codewords which are at distance exactly equal to $r^*$ from $s^*$. From the definition of $s^*$, we have $|T^*|=|C^*|^{1-o(1)}$. Fix $\ba\in A^*$. Its degree in $G^*$ is $|T^*|=|C^*|^{1-o(1)}$. This is because for every codeword $\bt\in T^*$ we have that $\bt-\ba$ is a codeword in $C^*$ (from the linearity of $C^*$) and thus $\bs^*-\bt+\ba$ is in $S^*$, and therefore $(\ba,\bs^*-\bt+\ba)\in E^*$. 

For item (ii), consider the identity mapping $\tau^*: A^* \dot\cup B^* \to \mathbb{F}_q^\ell$ that maps each string to itself. It is simple to check that $\tau^*$ realizes $G^*$ in the Hamming metric (with $\beta = r^*$).

Recall from the previous subsection that given $\tau^*: A^* \dot\cup B^* \to \mathbb{F}_q^\ell$ that realizes $G^*$ in the Hamming metric, it is easy to construct $\tau: A^* \dot\cup B^* \to \{0, 1\}^{q \cdot \ell}$ that realizes $G^*$ in the Hamming metric with a $q$ multiplicative factor blow-up in the dimension. This completes the proof of both the claims about $G^*$ and gives a general way to prove Theorem~\ref{thm:CP} given the construction of $C^*$ and $s^*$. %Subsequently, we will show what additional structure on $C^*$ can yield the inapproximability of \MIP and \CP.

\paragraph{Finding Center from Another Code.} One thing that might not be clear so far is: where does the center $s^*$ come from? Here we provide a systematic way to produce such an $s^*$, by looking at another code that contains $C^*$. More precisely, let $C^* \subseteq \tilde{C}^* \subseteq \mathbb{F}_q^{\ell}$ be two linear codes with the same block length and alphabet. Suppose that the distance of $C^*$ is $\Delta$, the distance of $\tilde{C}^*$ is $r^*$ and that $r^* < \Delta$. It is easy to see that, by taking $s^*$ to be any element of $\tilde{C}^* \setminus C^*$, it holds that every codeword in $C^*$ is at distance at least $r^*$ from $s^*$, simply because both $s^*$ and the codewords of $C^*$ are codewords of $\tilde{C}^*$. 

Hence, we are only left to argue that there are many codewords of $C^*$ that is of distance exactly $r^*$ from $s^*$. While this is not true in general, we can show by an averaging argument that this is true (for some $s^* \in \tilde{C}^*$) if a large fraction (e.g. $|C^*|^{-o(1)}$ fraction) of codewords of $\tilde{C}^*$ has Hamming weight exactly $r^*$  (see Lemma~\ref{lem:finding-center}).

Indeed, viewing in this light, our previous choice of center for Reed-Solomon code (i.e. evaluation of $x^h$) is not coincidental: we simply take $\tilde{C}^*$ to be another Reed-Solomon code with message length $h + 1$ (whereas the base code $C^*$ is of message length $h$). 

\paragraph{Comparison to Locally Dense Codes.}
We end this subsection by remarking that the codes that we seek are very similar to locally dense codes \cite{DMS03,ChengW12,M14}, which is indeed our inspiration. A \emph{locally dense code} is a linear code of block length $\ell$ and large minimum distance $\Delta$, admitting a ball  centered at $s$ of radius\footnote{
Clearly, for the ball to contain more than a single codeword, it must be  $r\ge \Delta/2$. Here we are interested in balls with radius not much bigger than that, say  $r<\gamma\cdot \Delta$  for some constant  $1/2<\gamma<1$.}  $r < \Delta$ and containing a large (i.e. $\exp(\poly(\ell))$) number of codewords\footnote{Strictly speaking, a locally dense code also requires an auxiliary matrix $T$  used to index these codewords. However, in previous works, finding $T$ is typically not hard given the center $s$. Hence, we ignore $T$ in our discussion here for the ease of exposition.}. Such codes are non-trivial to construct and in particular all known constructions of locally dense codes are using codes that beat the 
Gilbert-Varshamov (GV) bound \cite{G52,V57}; in other words we need to do better than random codes to construct them. 
This is because (as noted in~\cite{DMS03}), for a random code $C\subseteq \mathbb F_q^\ell$ (or any code that does not beat the GV bound), a random point in $\mathbb{F}_q^\ell$ acting as the center contains in expectation less than one codeword in a ball of radius $\Delta$. Of course, this is simply an intuition and not a formal proof that a locally dense code needs to beat the GV bound, since there may be more sophisticated ways to pick a center.

Although the codes we require are similar to locally dense codes, there are differences between the two. Below we list four such differences: the first two makes it \emph{harder} for us to construct our codes whereas the latter two makes it \emph{easier} for us.
\begin{itemize}
\item We seek a center $s^*$ so that no codewords in $C^*$ lies at distance less than $r^*$, as opposed to locally dense codes which allows codewords to be close to $s^*$. This is indeed where our idea of using another code $\tilde{C}^* \supseteq C^*$ comes in, as picking $s^*$ from $\tilde{C}^* \setminus C^*$ ensures us that no codeword of $C^*$ is too close to $s^*$.
\item Another difference is that we need the number of codewords at distance $r^*$ from $s^*$ to be very large, i.e., $|C^*|^{1 - o(1)}$, whereas locally dense codes allow for much smaller number of codewords. Indeed, the deterministic constructions from~\cite{ChengW12,M14} only yield the bound of $2^{O(\sqrt{\log |C^*|})}$. Hence, these do not directly work for us.
\item Locally dense codes requires $r$ to be at most $(1 - \varepsilon) \Delta$ for some constant $\varepsilon > 0$, whereas we are fine with any $r^* < \Delta$. In fact, our Reed-Solomon code based construction above only yields $r^* = \Delta - 1$ which would not suffice for locally dense codes. Nevertheless, as we will see later for inapproximability of \CP, we will also need the ratio $r^*/\Delta$ to be a constant bounded away from 1 as well and, since we need a code with these extraordinary properties, they are very hard to find. Indeed, in this case we only manage to prove a weaker lower bound on gap-\CP.
\item \begin{sloppypar}Finally, we remark that locally dense codes are required to be efficiently constructed in $\poly(\log |C^*|)$ time, which is part of why it is hard to find. Specifically, while~\cite{DMS03} shows that an averaging argument works for a random center, derandomizing this is a big issue and a few subsequent works are dedicated solely to this issue~\cite{ChengW12,M14}. (We also note that it remains open whether a center can be deterministically found for a variant of locally dense codes used in hardness of parameterized version of the minimum distance problem. See~\cite{BGKM18} for more details.)
On the other hand, brute force search (over all codewords in $\tilde{C}^*$) suffices to find a center for us, as we are allowed construction time of $\poly(|C^*|)$.\end{sloppypar}
\end{itemize}

\subsection{Inapproximability of Closest Pair and Maximum Inner Product}\label{sec:overviewAG}

In this subsection, we sketch our inapproximability results for \MIP and \CP. Both  these results use the same reduction that we had from \BCP to \CP, except that we now need stronger properties from the gadget, i.e., the previously used notions of contact dimension does not suffice anymore. Below we sketch the required strengthening of the  gadget properties   and explain how to achieve them.

\subsubsection{Approximate Maximum Inner Product} \label{sec:mip-overview}

Observe that the gadget we construct for \CP in Subsection~\ref{subsec:abstract} can also be written in terms of inner product as follows: there exists a dense balanced bipartite graph $G^* = (A^* \dot\cup B^*, E^*)$, a mapping $\tau: A^* \dot\cup B^* \to \{0, 1\}^{q\cdot \ell}$ such that the following holds. 
\begin{enumerate}[(i)]
\item For all edges $(a, b) \in E^*$, $\left<\tau(a), \tau(b)\right> = \ell - r^*$. \label{list:edge}
\item For all edges $(a, b) \in (A^* \times B^*) \setminus E^*$, $\left<\tau(a), \tau(b)\right> < \ell - r^*$. \label{list:same}
\item For all distinct $a,b$ both from $A^*$ or both from $B^*$, $\left<\tau(a), \tau(b)\right> \leq \ell - \Delta$. \label{list:cross}
\end{enumerate}
Notice that we wrote the conditions above in a slightly different way than in previous subsections; previously in the contact dimension notation,~(\ref{list:same}) and~(\ref{list:cross}) would be simply written together as: for all non-edge $(a, b)$, $\left<\tau(a), \tau(b)\right> < \ell - r^*$. This change is intentional, since, to get gap in our reductions, we only need a gap between the bounds in~(\ref{list:edge}) and~(\ref{list:cross}) (but not in (\ref{list:same})). In particular, to get hardness of approximating \MIP, we require $\frac{\ell-r^*}{\ell-\Delta}$ to be at least $(1 + \varepsilon)$ for some $\varepsilon > 0$.

From our Reed-Solomon construction above, $\ell - \Delta$ and $\ell - r^*$ are exactly the message length of $C^*$ minus one and the message length of $\tilde{C}^*$ minus one respectively. Previously, we selected these two to be $h$ and $h + 1$. Now to obtain the desired gap, we simply take the larger code $\tilde{C}^*$ to be a Reed-Solomon code with larger (i.e. $(1 + \varepsilon)h$) message length\footnote{This approach can in fact give not just $(1 + \varepsilon)$ but arbitrarily large constant gap between the two cases. In the actual reduction, we take this gap to be 3 (Theorem~\ref{thm:MIP-pre-tensor}), which makes some computations simpler.}.

Finally, we note that even with the above gadget, the reduction only gives a small (i.e. $1 + o(1)$) factor hardness of approximating \MIP (Theorem~\ref{thm:MIP-pre-tensor}). To boost the gap to near polynomial, we simply tensor the vectors with themselves (see Section~\ref{sec:MIP}).

\subsubsection{Approximate Closest Pair} \label{sec:apx-cp-overview}

Once again, recall that we have the following gadget from Subsection~\ref{subsec:abstract}: there exists a dense balanced bipartite graph $G^* = (A^* \dot\cup B^*, E^*)$, a mapping $\tau: A^* \dot\cup B^* \to \{0, 1\}^{q\cdot \ell}$ such that the following holds. 
\begin{enumerate}[(i)]
\item For all edges $(a, b) \in E^*$, $\|\tau(a)- \tau(b)\|_0 =r^*$. \label{list:edge-cp}
\item For all edges $(a, b) \in (A^* \times B^*) \setminus E^*$, $\|\tau(a)-\tau(b)\|_0 > r^*$. \label{list:same-cp}
\item For all distinct $a,b$ both from $A^*$ or both from $B^*$, $\|\tau(a) - \tau(b)\|_0 \geq \Delta$. \label{list:cross-cp}
\end{enumerate}
Once again, we need an $(1 + \varepsilon)$ gap between the bounds in (\ref{list:cross-cp}) and (\ref{list:edge-cp}), i.e., $\frac{\Delta}{r^*}$. Unfortunately, we cannot construct such codes using any of the Reed-Solomon code families. We turn to another type of codes that beat the Gilbert-Varshamov bound:  Algebraic-~Geometric (AG) codes. Similar to the Reed-Solomon code based construction, we take $C^*$ as an AG code and $\tilde{C}^*$ to be a ``higher degree'' AG code; getting the desired gap simply means that the distance of $C^*$ must be at least   $(1 + \varepsilon)$ times the distance of $\tilde{C}^*$.

Recall from Subsection~\ref{subsec:abstract} also that, to bound the density of $G^*$, we need a lower bound on the number of minimum weight codewords of $\tilde{C}^*$. Such bounds for AG codes are non-trivial and we turn to the bounds from~\cite{AshikhminBV01,vluaduct2018lattices}. Unfortunately, this only gives $G^*$ with density $|C^*|^{-1/2 - o(1)}$, instead of $|C^*|^{-o(1)}$ as before. This is indeed the reason that our running time lower bound for approximate \CP is only $n^{1.5-\varepsilon}$.

\begin{sloppypar}We are not aware of any result on the (asymptotic) tightness of the bounds from~\cite{AshikhminBV01,vluaduct2018lattices} that we use. However, improving upon such bounds would have other consequences, such as a better bound on the kissing numbers of lattices constructed in~\cite{vluaduct2018lattices}. As a result, it seems likely that more understanding of AG codes (and perhaps even new constructions) are needed in order to improve these bounds.\end{sloppypar}

%With a careful analysis, we notice that the required codes need to be of a constant alphabet size and simultaneously be very close to the Singleton bound. The only example we currently know in literature which matches theses conditions are algebraic geometric codes \cite{GS96}. We manage to mimic the idea of using an algebraic geometric code of one family for the codewords of $C^*$ and find a center $s^*$ in a different algebraic geometric code family (higher up in the hierarchy). In order to find a center by an averaging argument (as described above for Reed Solomon codes), we need a precise bound on the number of minimum weight codewords for that family of codes. And at this point, the details become quite technical and we omit them here. Since algebraic geometric codes are slightly away from the Singleton bound, we only manage to $\sqrt{|C^*|}$ (instead of $|C^*|^{1-o(1)}$) many codewords at distance $r^*$ from $s^*$, and this is why we have a running time lower bound of $n^{1.5-\varepsilon}$. In order to improve the running time lower bound on gap-\CP, we might need to find algebraic geometric code constructions which are better than what is currently known.

\section{Preliminaries}
In this section we define the geometric problems of interest to this paper, give an alternate proof for the conditional lower bound on bichromatic closest pair, and recall the definition of the contact dimension of a graph.

\subsection{Notations, Problems and Fine-Grained Hypotheses}
\paragraph{Distance Measures.}
For any two vectors $a,b\in\mathbb{R}^d$, the distance between them in the $\ell_p$-metric is denoted by $||a-b||_p~=~\left(\sum_{i=1}^d|a_i-b_i|^p\right)^{1/p}$.
Their distance in the $\ell_{\infty}$-metric is denoted by 
$||a-b||_{\infty} = \underset{{i\in[d]}}{\max}\ \{|a_i-b_i|\}$, and
in the $\ell_0$-metric is denoted by 
$||a-b||_0=|\{i\in[d]:a_i\neq b_i\}|$, i.e.,
the number of coordinates on which $a$ and $b$ differ. More generally, for any two vectors $a,b\in\mathbb{R}^d$ in the $\Delta$-metric, we denote by $\Delta(a,b)$ its distance in that metric space.
The  $\ell_p$-metrics that are well studied in literature are
the {\em Hamming metric} ($\ell_0$-metric),
the {\em rectilinear metric} ($\ell_1$-metric),
the {\em Euclidean metric} ($\ell_2$-metric), and
the {\em Chebyshev metric}  ($\ell_{\infty}$-metric). 
We denote the inner product (associated with the Euclidean space) of $a$ and $b$ by $\langle a,b\rangle=\underset{i\in[d]}{\sum}a_i\cdot b_i$. Finally, for every positive integer $d$ we define the edit metric over $\Sigma$ to be the space $\Sigma^d$ endowed with distance function $\ed(a, b)$, which is defined as the minimum number of character substitutions/insertions/deletions to transform $a$ into $b$.

\paragraph{Problems.}
Here we give formal definitions of Orthogonal Vectors (\OV), Closest Pair (\CP) and Bichromatic Closest Pair (\BCP) problems, and also Maximum Inner Product (\MIP) and Bichromatic Maximum Inner Product (\BMIP) problems.

\begin{definition}[Orthogonal Vectors Problem, \OV]
In \OV, we are given two collections of $n$ points $A,B\subseteq\mathbb \{0,1\}^d$, and the goal is to find a pair of points $a\in A$, $b\in B$ such that
$\langle a,b\rangle=0$.
\end{definition}

\begin{definition}[Closest Pair Problem, \CP]
In \CP in the $\Delta$-metric, we are given a collection of $n$ points $P\subseteq\mathbb R^d$ and a positive real $\alpha$, and the goal is to find a pair of distinct points $a,b\in P$ such that
$\Delta(a,b)\le \alpha$.
\end{definition}

\begin{definition}[Bichromatic Closest Pair Problem, \BCP]
In \BCP in the $\Delta$-metric, we are given two collections of $n$ points  $A,B\subseteq\mathbb R^d$ and a positive real $\alpha$, and the goal is to find a pair of points $a\in A$, $b\in B$ such that
$\Delta(a,b)\le \alpha$.
\end{definition}

We will also use gap versions of these problems. For any $\delta\ge 0$, we define $(1+\delta)$-\CP (resp.\ $(1+\delta)$-
\BCP) in the $\Delta$-metric to be the problem of distinguishing between the case whether there exist distinct $a,b\in P$ (resp.\ $a\in A$ and $b\in B$) such that $\Delta(a,b)\le \alpha$ and the case where for all distinct $a,b\in P$ (resp.\ $a\in A$ and $b\in B$) we have $\Delta(a,b)> (1+\delta)\cdot \alpha$. 

\begin{definition}[Maximum Inner Product Problem, \MIP]
In \MIP, we are given a collection of $n$ points $P\subseteq\mathbb R^d$ and a  real $\alpha$, and the goal is to find a pair of distinct points $a,b\in P$ such that
$\langle a,b\rangle\ge \alpha$.
\end{definition}

\begin{definition}[Bichromatic Maximum Inner Product Problem, \BMIP]
In \BMIP, we are given two collections of $n$ points $A,B\subseteq\mathbb R^d$ and a  real $\alpha$, and the goal is to find a pair of points $a\in A$, $b\in B$ such that
$\langle a,b\rangle\ge \alpha$.
\end{definition}

Again we define the gap versions of these problems as follows. For any $\gamma\ge 1$, we define $\gamma$-\MIP (resp.\ $\gamma$-\BMIP) to be the problem of distinguishing between the case whether there exist distinct $a,b\in P$ (resp.\ $a\in A$ and $b\in B$) such that $\langle a,b\rangle\ge \alpha$ and the case where for all distinct $a,b\in P$ (resp.\ $a\in A$ and $b\in B$) we have $\langle a,b\rangle<\nicefrac{\alpha}{\gamma}$.

\paragraph{Hypotheses.}
Finally, we give formal definitions of the relevant fine-grained hypotheses (see \cite{Vir18} for a survey on the state-of-the-art conditional lower bounds that are known under these hypotheses). 
\begin{definition}[Strong Exponential Time Hypothesis, \SETH \cite{IP01,IPZ01,CIP06}]
For every $\varepsilon > 0$, there exists $k = k(\varepsilon) \in \N$ such that no algorithm can solve $k$-SAT (i.e., satisfiability on a CNF of width $k$) in $O(2^{(1 - \varepsilon)m})$ time where $m$ is the number of variables. Moreover, this holds even when the number of clauses is at most $c(\varepsilon) m$ where $c(\varepsilon)$ denotes a constant that depends only on $\varepsilon$.
\end{definition}

\begin{definition}[Orthogonal Vector Hypothesis, \OVH]
For every $\varepsilon > 0$,  no algorithm can solve \OV in $O(n^{2-\varepsilon})$ time. Moreover, this holds even when the dimension $d$ is at most $c(\varepsilon) \log n$ where $c(\varepsilon)$ denotes a constant that depends only on $\varepsilon$.
\end{definition}

It is known that \SETH implies \OVH \cite{W05}, and therefore in the rest of the paper, we base all our conditional lower bounds on \OVH.

\subsection{Error-Correcting Codes} \label{sec:code-prelim}

We recall here a few coding theoretic notations since all of our gadgets are based on error-correcting codes. As is standard in error-correcting codes, we will use $\Delta(\ba, \bb)$ to denote $\|\ba - \bb\|_0$, the Hamming distance of $\ba$ and $\bb$, for any $\ba, \bb \in \mathbb{F}_q^N$ and we further define $\Delta(\ba, S) := \underset{\bb \in S}{\min}\ \Delta(\ba, \bb)$ for any $\ba \in \mathbb{F}_q^N$ and $S \subseteq \mathbb{F}_q^N$. The weight of $\ba \in \mathbb{F}_q^N$, denoted by $\Delta(\ba)$, is simply $\|\ba\|_0 := |{i \in [N] : a_i \ne 0}|$. For $\ba \in \mathbb{F}_q^N$ and $d \in \mathbb{N}$, we use $\cB(\ba, d)$ to denote the (closed) Hamming ball of radius $d$ centered at $\ba$, i.e., $\cB(\ba, d) := \{\bb \in \mathbb{F}_q^N \mid \Delta(\ba, \bb) \leq d\}$.

An error correcting code of block length $N$ over alphabet $\mathbb{F}_q$ is simply a collection of codewords $\cC \subseteq \mathbb{F}_q^N$.  The distance of the code $\cC$, denoted by $\Delta(\cC)$, is defined as $\underset{\ba \ne \bb \in \cC}{\min}\ \Delta(\ba, \bb)$. A code is said to be linear if $\cC$ is a subspace of $\mathbb{F}_q^N$. For a linear code $\cC$, its message length is defined to be the dimension of $\cC$, or equivalently $\log_q |\cC|$. We often use the notion $[N, K, D]_q$ to denote a linear code of block length $N$, message length $K$, and distance $D$. The rate and relative distance of a linear $[N, K, D]_q$ code $\cC$ are defined as $K/N$ and $D/N$ respectively. Note also that, for a linear code $\cC$, $\Delta(\cC)$ is equal to the minimum weight of a non-zero codeword of $\cC$. Finally, for any code $\cC$, we use $A_w(\cC) := |\{\bc \in \cC \mid \Delta(\bc) = w\}|$ to denote the number of codewords of weight $w$.

Let us also recall the Singleton bound and the definition of \emph{maximum distance separable} (MDS) codes.

\begin{theorem}[Singleton bound~\cite{Singleton}]
For any linear $[N, K, D]_q$ code, $K + D \leq N + 1$.
\end{theorem}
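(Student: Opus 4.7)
The plan is to prove the Singleton bound by a short projection/pigeonhole argument. Let $\cC$ be a linear $[N, K, D]_q$ code, so that $|\cC| = q^K$. I would consider the coordinate projection
\[
\pi: \mathbb{F}_q^N \to \mathbb{F}_q^{N - D + 1}, \qquad \pi(x_1, \ldots, x_N) = (x_1, \ldots, x_{N - D + 1}),
\]
and study its restriction $\pi|_{\cC}$.

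The central claim is that $\pi|_{\cC}$ is injective. To prove this, I would argue by contradiction: suppose $\bc \ne \bc'$ are codewords with $\pi(\bc) = \pi(\bc')$. Then by linearity of $\cC$, the vector $\bc - \bc'$ is a nonzero codeword, and it agrees with $\bzero$ on the first $N - D + 1$ coordinates. Hence its Hamming weight satisfies $\Delta(\bc - \bc') \leq N - (N - D + 1) = D - 1$. Since for a linear code the minimum distance equals the minimum weight of a nonzero codeword, this contradicts $\Delta(\cC) = D$.

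Given injectivity of $\pi|_{\cC}$, I would immediately conclude
\[
q^K = |\cC| \leq |\mathbb{F}_q^{N - D + 1}| = q^{N - D + 1},
\]
which upon taking $\log_q$ on both sides rearranges to $K + D \leq N + 1$, as desired.

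The only step requiring any care is the use of linearity to convert the pigeonhole conclusion (two distinct codewords agreeing on a large coordinate block) into the existence of a single low-weight nonzero codeword; without linearity one would only get two codewords at small distance, which is of course the same statement in the linear setting but must be invoked explicitly. Everything else is a one-line counting argument.
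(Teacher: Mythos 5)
Your proof is correct. The paper itself does not prove this statement --- it quotes the Singleton bound as a classical result with a citation --- so there is no in-paper argument to compare against; your projection-plus-pigeonhole proof is the standard one and is complete as written: projecting onto the first $N-D+1$ coordinates, injectivity on $\cC$, and the count $q^K \le q^{N-D+1}$ give exactly $K+D \le N+1$. One small remark: the step you flag as ``requiring care'' is actually not needed in the form you state it. If two distinct codewords agree on $N-D+1$ coordinates, they are at Hamming distance at most $D-1$, which already contradicts $\Delta(\cC)=D$ directly, with no appeal to linearity or to the weight--distance equivalence. Linearity enters only in translating $K$ into $|\cC| = q^K$; the injectivity argument is purely metric, which is precisely why the bound extends verbatim to non-linear codes (with $K$ read as $\log_q|\cC|$), as the paper notes after the definition of MDS codes.
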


\begin{definition}[MDS Codes]
A linear $[N, K, D]_q$ code is said to be a maximum distance separable (MDS) code if it matches the Singleton bound, i.e., $K + D = N + 1$.
\end{definition}

We note here that the above bound and notation are well-defined (or can be naturally extended) also for non-linear codes, but we will only use them in context of linear codes in this paper.

\subsection{Miscellaneous Tools}

\paragraph{Covering Biclique by Isomorphic Graphs.} A useful fact we use to derandomize our reductions is that the biclique can be covered by any dense bipartite graph $G$ with only a few graphs that are isomorphic to $G$. To state this more formally, let us first define a few notions.

\begin{definition}
For any graph $G = (V_G, E_G)$ and any permutation $\pi: V_G \to V_G$, we use $G_{\pi}$ to denote the graph $(V_{G_{\pi}}, E_{G_{\pi}})$ where the vertex set $V_{G_\pi}$ is equal to $V_G$ and $E_{G_\pi} = \{(\pi(a), \pi(b)) \mid (a, b) \in E_G\}$.
\end{definition}

For brevity, we say that a permutation $\pi: A\dot\cup B \to A \dot\cup B$ of vertices of a bipartite graph $G = (A\dot\cup B, E_G)$ is \emph{side-preserving} if $\pi(A) = A$ and $\pi(B) = B$.

We can now state the result as follows. The proof, which proceeds via a simple set covering argument, is deferred to Appendix~\ref{app:biclique-cover}.

\begin{lemma}\label{lem:cover}
For any bipartite graph $G(A\dot\cup B,E_G)$ where $|A|=|B|=n$ and $E_G \ne \emptyset$, there exist side-preserving permutations $\pi_1, \dots, \pi_k: A \cup B \to A \cup B$ where $k \leq \frac{2n^2\ln n}{|E_G|} + 1$ such that $$\underset{i\in [k]}{\cup}E_{G_{\pi_i}}=E_{K_{n,n}}$$
Moreover, such permutations can be found in time $O(n^6 \log n)$.
\end{lemma}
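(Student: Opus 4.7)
My plan is to treat this as a set cover problem on the edge set of $K_{n,n}$, where the available sets are the edge sets $E_{G_\pi}$ of all side-preserving permutations of $G$, and then apply a standard greedy / first-moment argument. The key probabilistic observation is that if $\pi$ is drawn uniformly from all side-preserving permutations of $A \dot\cup B$, then for any fixed $(a,b)\in A\times B$, the preimage $(\pi^{-1}(a),\pi^{-1}(b))$ is uniform on $A\times B$ (because $\pi|_A$ and $\pi|_B$ are independent uniform permutations of $A$ and $B$ respectively), hence $\Pr[(a,b)\in E_{G_\pi}] = |E_G|/n^2$. By linearity of expectation, for any ``uncovered'' subset $U \subseteq E_{K_{n,n}}$ the expected value of $|E_{G_\pi}\cap U|$ is exactly $|U|\cdot|E_G|/n^2$, so there always exists a side-preserving $\pi$ achieving at least this many.

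Armed with this, I would iterate the standard greedy cover: set $U_0 := E_{K_{n,n}}$ and, at step $i$, pick a side-preserving $\pi_i$ with $|E_{G_{\pi_i}}\cap U_{i-1}|\geq |U_{i-1}|\cdot |E_G|/n^2$, then update $U_i := U_{i-1}\setminus E_{G_{\pi_i}}$. A trivial induction then gives
\[
|U_k| \;\leq\; n^2\Bigl(1-\tfrac{|E_G|}{n^2}\Bigr)^k \;\leq\; n^2\exp\!\bigl(-k|E_G|/n^2\bigr).
\]
Choosing $k := \lceil 2n^2\ln n/|E_G|\rceil$ forces $|U_k|<1$, and since $|U_k|$ is a nonnegative integer this means $U_k = \emptyset$, i.e.\ the chosen permutations together cover every edge of $K_{n,n}$. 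The bound $k\leq 2n^2\ln n/|E_G| + 1$ stated in the lemma follows immediately.

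For the deterministic construction in time $O(n^6\log n)$, I would derandomize each outer step by the method of conditional expectations: build $\pi_i$ vertex-by-vertex over the $2n$ vertices of $A\dot\cup B$, each time choosing the next image from among the at most $n$ remaining candidates so as to maximize the conditional expectation of $|E_{G_{\pi_i}}\cap U_{i-1}|$ over a uniform completion. Because this conditional expectation can be written as a sum over $E_G$ of explicit rational probabilities (each summand being $0$, $1$, or a ratio counting how many completions keep that edge in $U_{i-1}$), it can be evaluated in polynomial time per candidate. Carefully bookkeeping gives roughly $O(n)$ vertex placements, $O(n)$ candidates each, and $O(n^2)$ work per evaluation, so each permutation costs $O(n^4)$; combined with the $k = O(n^2\log n)$ outer iterations this matches the claimed $O(n^6\log n)$ bound. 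The main obstacle I anticipate is bookkeeping for this runtime estimate—nothing in the argument is deep, but one has to be careful that updating the set $U_i$ and re-evaluating conditional expectations across iterations is charged correctly so that the cumulative cost does not exceed $O(n^6\log n)$.
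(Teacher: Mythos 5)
Your proposal is correct and matches the paper's proof essentially step for step: a uniformly random side-preserving permutation covers an expected $|E_G|/n^2$ fraction of the remaining edges, the greedy iteration shrinks the uncovered set geometrically to give $k \leq \frac{2n^2\ln n}{|E_G|} + 1$, and each greedy step is derandomized by the method of conditional expectations in $O(n^4)$ time, yielding $O(n^6\log n)$ overall. The only cosmetic difference is that the paper evaluates the conditional expectations by summing over the uncovered edges rather than over $E_G$, which changes nothing.
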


\paragraph{Translating Finite Fields Vectors to \{0, 1\}-Vectors.} Another simple fact which was already mentioned in the proof overview (Section~\ref{sec:overview}) is that, we can embed Hamming metric on alphabet of size $q$ to Hamming metric on Boolean alphabet, with only $q$ multiplicative factor blow-up in the dimension:

\begin{proposition} \label{prop:simplex}
For any $q, N\in\mathbb{N}$, and alphabet $\Sigma$ such that $|\Sigma|=q$, there exists a mapping $\psi: \Sigma^N \to \{0, 1\}^{q \cdot N}$ such that, for all $\bv_1, \bv_2 \in \Sigma^N$, we have $\|\psi(\bv_1) - \psi(\bv_2)\|_0 = 2 \cdot \Delta(\bv_1, \bv_2)$ and $\left<\psi(\bv_1), \psi(\bv_2)\right> = N - \Delta(\bv_1, \bv_2)$.
\end{proposition}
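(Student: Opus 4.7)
The plan is to use the standard one-hot (simplex) encoding of each alphabet symbol. Fix any bijection between $\Sigma$ and $\{1, \dots, q\}$, and for each symbol $\sigma \in \Sigma$ let $\be_\sigma \in \{0,1\}^q$ denote the corresponding standard basis vector (i.e.\ $1$ in the coordinate indexed by $\sigma$ and $0$ elsewhere). Define $\psi : \Sigma^N \to \{0,1\}^{q \cdot N}$ by concatenation:
\[
\psi(\bv) \;=\; \be_{v_1} \circ \be_{v_2} \circ \cdots \circ \be_{v_N}.
\]

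Next I would analyze the two quantities coordinate-block by coordinate-block. Both $\|\psi(\bv_1) - \psi(\bv_2)\|_0$ and $\langle \psi(\bv_1), \psi(\bv_2)\rangle$ split as sums over the $N$ length-$q$ blocks, so it suffices to understand one block. For each $i \in [N]$, consider the $i$-th block. If $v_{1,i} = v_{2,i}$, then $\be_{v_{1,i}} = \be_{v_{2,i}}$, so this block contributes $0$ to the Hamming distance and exactly $1$ to the inner product (from the single shared nonzero coordinate). If $v_{1,i} \neq v_{2,i}$, then the two basis vectors have their unique $1$-entries in distinct coordinates, so they disagree in exactly two coordinates and are orthogonal; this block contributes $2$ to the Hamming distance and $0$ to the inner product.

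Finally I would sum over all $N$ blocks. Letting $D := \Delta(\bv_1, \bv_2)$ be the number of indices $i$ at which $v_{1,i} \neq v_{2,i}$, we get
\[
\|\psi(\bv_1) - \psi(\bv_2)\|_0 \;=\; 2D \;=\; 2\,\Delta(\bv_1, \bv_2),
\qquad
\langle \psi(\bv_1), \psi(\bv_2)\rangle \;=\; (N - D) \;=\; N - \Delta(\bv_1, \bv_2),
\]
which is exactly what the proposition claims. There is no real obstacle here; the only thing to be careful about is the bookkeeping when $v_{1,i} \neq v_{2,i}$, to verify that the two one-hot blocks differ in exactly two positions (not one) and contribute $0$ rather than $1$ to the inner product.
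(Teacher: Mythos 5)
Your proposal is correct and matches the paper's proof exactly: both use the one-hot (standard basis) encoding with concatenation, and your block-by-block case analysis is just a more explicit writeup of the verification the paper leaves as "simple to check."
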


\begin{proof}
The mapping $\psi$ simply replaces each coordinate that is equal to $j \in\Sigma$ by the $j$-th standard basis in the $q$-dimensional space. More precisely, for $\bv = (v_1, \dots, v_N) \in \mathbb{F}_q$, we define
\begin{align*}
\psi(\bv) = e_{v_1} \circ e_{v_2} \circ \cdots \circ e_{v_N},
\end{align*}
where $\circ$ denotes concatenation of vectors and $e_j$ denote the $j$-th standard basis in $\mathbb{R}^q$, i.e., the vector whose $j$-th coordinate is one and the remaining coordinates are zeroes. 

It is simple to check that this satisfies the two requirements. 
\end{proof}

\subsection{{\OVH{}}-hardness of Exact Bichromatic Closest Pair}

Alman and Williams \cite{AW15} showed the conditional hardness (under \OVH) of exact \BCP in every $\ell_p$-metric  even when the point-sets are over $\{0,1\}$ via a Turing reduction from \OV. David, Karthik, and Laekhanukit \ \cite{DKL18} gave an alternate proof of the same result where point-sets were over $\mathbb{R}$ via a many-one reduction from \OV.
For independent interest, below we give another proof, which is both a many-one reduction and the point-sets are over $\{0,1\}$.

\begin{theorem}\label{thm:BCP}
Assuming \OVH, for every $\varepsilon>0$, no algorithm running in time $n^{2-\varepsilon}$  can solve \BCP, even when the point-sets $A,B$ are subsets of $\{0,1\}^d$ and $d=c_\varepsilon\log n$, for some constant $c_\varepsilon>1$ (only depending on $\varepsilon$).
\end{theorem}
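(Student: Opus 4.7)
The plan is to produce a direct many-one reduction from \OV over $\{0,1\}^d$ to \BCP over $\{0,1\}^{3d}$ that preserves the instance size, so that the \OVH lower bound transfers immediately. The key identity is $\|\ba-\bb\|_0 = \|\ba\|_0+\|\bb\|_0-2\langle\ba,\bb\rangle$ for boolean vectors, which tells me that Hamming distance is a clean function of inner product once the weights on each side are fixed. So I will try to encode $A$ and $B$ by vectors of constant Hamming weight within each side and with cross inner product an affine function of the original $\langle\ba,\bb\rangle$.

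The only technically nontrivial ingredient, and the main obstacle, is a constant-size coordinatewise gadget $f,g:\{0,1\}\to\{0,1\}^k$ satisfying $\langle f(x),g(y)\rangle = 1-xy$ together with $\|f(0)\|_0=\|f(1)\|_0$ and $\|g(0)\|_0=\|g(1)\|_0$. A straightforward case check rules out $k\le 2$ once the balanced-weight constraint is imposed (in $k=2$ it forces $f(0)=f(1)$ or $g(0)=g(1)$); however $k=3$ suffices, for instance with
\[
f(0)=(1,1,0),\quad f(1)=(1,0,1),\quad g(0)=(1,0,0),\quad g(1)=(0,1,0),
\]
which is easily verified on all four inputs and which yields constant weights $\|f(\cdot)\|_0=2$ and $\|g(\cdot)\|_0=1$.

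Given the gadget I will define $\tilde{\ba}:=f(a_1)\circ\cdots\circ f(a_d)\in\{0,1\}^{3d}$ for every $\ba\in A$ and $\tilde{\bb}:=g(b_1)\circ\cdots\circ g(b_d)\in\{0,1\}^{3d}$ for every $\bb\in B$. Summing coordinatewise gives $\|\tilde{\ba}\|_0=2d$, $\|\tilde{\bb}\|_0=d$, and $\langle\tilde{\ba},\tilde{\bb}\rangle=\sum_{i=1}^d(1-a_ib_i)=d-\langle\ba,\bb\rangle$, so by the boolean identity
\[
\|\tilde{\ba}-\tilde{\bb}\|_0 \;=\; 2d+d-2\bigl(d-\langle\ba,\bb\rangle\bigr)\;=\;d+2\langle\ba,\bb\rangle.
\]
Hence the minimum cross distance equals $d$ precisely when an orthogonal pair exists in the original \OV instance, and is at least $d+2$ otherwise; in particular, \BCP on $(\tA,\tB)$ with threshold $\alpha=d$ decides \OV on $(A,B)$.

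The reduction is many-one, outputs only boolean vectors, runs in time $O(nd)$, and blows up the dimension by a factor of exactly $3$. Thus if \BCP over $\{0,1\}^{3c_\varepsilon \log n}$ admitted an $n^{2-\varepsilon}$ algorithm for some $\varepsilon>0$, then \OV in dimension $c_\varepsilon\log n$ would be solvable in $O(nd)+n^{2-\varepsilon}=n^{2-\Omega(\varepsilon)}$ time, contradicting \OVH. Hence Theorem~\ref{thm:BCP} holds with the constant taken to be $3c_\varepsilon$.
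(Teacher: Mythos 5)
Your proof is correct and follows essentially the same route as the paper: a many-one, coordinatewise constant-size gadget reduction from \OV to Boolean \BCP, where orthogonal coordinate pairs yield a strictly smaller per-block Hamming distance than non-orthogonal ones. The only difference is cosmetic: your gadget lives in $\{0,1\}^3$ and is verified through the identity $\|\ba-\bb\|_0=\|\ba\|_0+\|\bb\|_0-2\langle\ba,\bb\rangle$ with balanced weights, whereas the paper specifies a $\{0,1\}^5$ gadget directly by its distances, so you even get a slightly smaller dimension blow-up.
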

\begin{proof}
Let $A,B\subseteq \{0,1\}^d$ where $|A|=|B|=n$ be the input to an \OV instance. We build an instance $(A',B',\alpha)$ of \BCP where $A',B'\subseteq\{0,1\}^{5d}$, $|A|=|B|=n$, and $\alpha=2d$, using functions $T_A$ and $T_B$ guaranteed by the following claim.
\begin{claim}\label{claim:T}
There are functions $T_A,T_B:\{0,1\}\to\{0,1\}^5$ such that for every $x,y\in\{0,1\}$ we have:
\begin{itemize}
\item $x\cdot y=0$ implies $\|T_A(x)-T_B(y)\|_0=2$.
\item $x\cdot y=1$ implies $\|T_A(x)-T_B(y)\|_0=4$.
\end{itemize}
\end{claim}

For every $i\in[n]$, the $i^{\text{th}}$ point of $A'$, say $a'$ is constructed from the $i^{\text{th}}$ point of $A$, say $a$ by simply applying $T_A$ pointwise on each coordinate of $a$, i.e., $a'=(T_A(a_1),\ldots ,T_A(a_d))$. Similarly we apply $T_B$ pointwise on each coordinate of points in $B$. It is easy to see that there exists $(a_i',b_j')\in A'\times B'$ such that $\|a_i'-b_j'\|_0=2d$ if and only if $\langle a_i,b_j\rangle=0$, and otherwise every pair of points in $A'\times B'$ is at Hamming distance at least $2d+2$.
\end{proof}

\begin{proof}[Proof of Claim~\ref{claim:T}]
We define for all $x,y\in\{0,1\}$, $T_A(x)=(T_A(x)_{0,0},T_A(x)_{0,1},T_A(x)_{1,0},x,0)$ and $T_B(y)=(T_B(y)_{0,0},T_B(y)_{0,1},T_B(y)_{1,0},0,y)$, where for all $i,j\in\{0,1\}$ such that $i\cdot j=0$, we have $T_A(x)_{i,j}=1$ if and only if $x=i$ and $T_B(y)_{i,j}=1$ if and only if $y=j$. More succinctly, $T_A$ and $T_B$ are described below as strings and the claim follows by a straight-forward calculation. 
$$
T_A(0)=11000\ \ \  \ \ \ \ T_A(1)=00110$$
$$
T_B(0)=10100\ \ \  \ \ \ \ T_B(1)=01001 \qedhere
$$
\end{proof}

\subsection{Contact Dimension of a Graph}

The central gadget in our reduction from \BCP to \CP  is based on the contact dimension of a graph. Below we reproduce its definition from the proof overview (i.e. Definition~\ref{def:cd-informal}) for convenience.

\begin{definition}[Contact Dimension~\cite{P80}] \label{def:cd}
For any graph $G = (V, E)$, a mapping $\tau: V \to \mathbb{R}^d$ is said to \emph{realize $G$ (in the $\ell_p$-metric)} if for some $\beta > 0$, the following holds:
\begin{enumerate}[(i)]
\item For all $(u, v) \in E$, $\|\tau(u) - \tau(v)\|_p = \beta$.
\item For all $(u, v) \notin E$, $\|\tau(u) - \tau(v)\|_p > \beta$.
\end{enumerate}
The \emph{contact dimension} (in the $\ell_p$-metric) of $G$, denoted by $\cd_p(G)$, is the minimum $d \in \mathbb{N}$ such that there exists $\tau: V \to \mathbb{R}^d$ realizing $G$ in the $\ell_p$-metric.
\end{definition}

We may also say that $\tau$ \emph{$\beta$-realizes} $G$ if we wishes to emphasize the value of $\beta$. 

Note here that we may view points in $\tau(V)$ as centers of spheres of radius $\beta/2$. No two spheres overlap but they may touch, and $G$ has an edge $(u,v)$ if and only if the spheres centered at $\tau(u)$ and $\tau(v)$ touches.

For a summary of the bounds on $\cd(G)$ for various graphs in the Euclidean metric see \cite{M85,FM86,FM88,M91} and for a summary of the bounds on $\cd(K_{n,n})$ in various metrics see  \cite{DKL18}. For this paper, the following bounds are relevant. 

\begin{theorem}[Frankl-Maehara \cite{FM88}]
$(1.286)n -1 < \cd_2(K_{n,n}) < (1.5)n. $
\end{theorem}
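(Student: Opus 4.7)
The plan is to prove the two bounds separately: the upper bound by an explicit construction and the lower bound by a rank argument on a Gram matrix.

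For the upper bound $\cd_2(K_{n,n}) < 1.5\, n$, I would exhibit an explicit realization $\tau : A \dot\cup B \to \mathbb{R}^{\lfloor 3n/2 \rfloor}$. Assuming $n$ is even for clarity, partition each side into $n/2$ pairs and allocate three coordinates per index $i \in [n/2]$. Within the $i$-th block of three coordinates one places the four associated points (two from $A$ and two from $B$) in a symmetric ``tetrahedral'' configuration so that all four $A$--$B$ distances within that block come out equal. Across blocks, any pair of points lying in disjoint blocks contributes the sum of their squared norms in the respective blocks to the squared distance, so by superimposing a common baseline shift (identical for all points) on a small set of shared coordinates one can force between-block $A$--$B$ distances to match the within-block value. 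The free parameters---the tetrahedron side lengths, the baseline magnitude, and the block geometry---are then tuned so that all cross distances equal a common $\beta$ while all intra-side distances come out strictly larger; an admissible choice of parameters is verified by a direct computation.

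For the lower bound $\cd_2(K_{n,n}) > 1.286\, n - 1$, I would use a Gram-matrix argument. Write $u_i := \tau(a_i)$, $v_j := \tau(b_j)$, and set $\alpha_i := \|u_i\|_2^2$, $\gamma_j := \|v_j\|_2^2$. Expanding $\|u_i - v_j\|_2^2 = \beta^2$ yields
\[
\langle u_i, v_j \rangle = \tfrac{1}{2}\bigl(\alpha_i + \gamma_j - \beta^2\bigr),
\]
so in particular the $n \times n$ cross Gram matrix has rank at most $2$. The same-side inequalities
\[
\langle u_i, u_{i'} \rangle < \tfrac{1}{2}\bigl(\alpha_i + \alpha_{i'} - \beta^2\bigr), \qquad \langle v_j, v_{j'} \rangle < \tfrac{1}{2}\bigl(\gamma_j + \gamma_{j'} - \beta^2\bigr),
\]
combined with the equality above, impose a rigid structure on the full $2n \times 2n$ Gram matrix $G$ of $u_1,\dots,u_n,v_1,\dots,v_n$, whose rank is bounded above by $d$. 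The lower bound is extracted by showing that $G$ must have many linearly independent rows: concretely one applies a Rankin-type extremal bound for configurations of vectors with fixed cross inner products and strictly bounded intra-side inner products, which forces $d > 1.286\, n - 1$.

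The main obstacle is the lower bound. A naive rank count on the cross Gram matrix gives only $d \geq 2$, and the same-side inequalities are strict (not separated by any a priori uniform gap), so one cannot immediately convert them into a rank lower bound. The key step is to quantitatively aggregate these strict inequalities into a uniform margin---for instance by compactness, after a normalization that fixes $\beta$ and the norms---and then invoke a sharp Rankin/kissing-number-style bound on the largest set of unit vectors whose pairwise inner products lie below a prescribed threshold. Capturing the precise numerical constant $1.286$ relies on the sharpest presently known such estimate; any improvement there would directly strengthen the lower bound on $\cd_2(K_{n,n})$.
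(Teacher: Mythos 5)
First, a point of reference: the paper does not prove this statement at all --- it is imported verbatim from Frankl--Maehara \cite{FM88} and used only as the obstruction explaining why the approach of \cite{DKL18} fails in the Euclidean metric --- so your proposal has to be judged on its own merits, and both halves have genuine gaps. For the upper bound, the step that is supposed to equalize the between-block $A$--$B$ distances with the within-block value is a ``common baseline shift (identical for all points)''; adding the same vector to every point (or the same vector to all four points of a block) is a translation at the relevant level and changes no pairwise distance that needs changing. More importantly, the sketch never confronts the actual difficulty of any block construction: in a realization of $K_{n,n}$ \emph{every} cross pair, including pairs in different blocks, must be at distance exactly $\beta$, while every same-side pair must be strictly farther. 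If the blocks occupy orthogonal coordinate subspaces and all points have equal norm, the distance between two points in different blocks depends only on their norms and not on their sides, so between-block same-side distances collapse to exactly $\beta$ as well, violating strictness; any modification that is constant on a block or on the whole point set moves the same-side and cross-side between-block distances together and cannot break this tie. Breaking it (while keeping all cross distances exactly equal) is precisely where the work of the $1.5n$ construction lies, and your sketch supplies no mechanism for it.

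For the lower bound, the proposed key step --- extract a uniform margin by compactness and then invoke a Rankin/kissing-number bound --- does not work. The bound must hold for \emph{every} realization in dimension $d$, and the infimum of the same-side margin over all such realizations may be $0$; a limit of realizations with vanishing margin need not realize $K_{n,n}$ (some same-side distance may become equal to $\beta$, and points may escape to infinity after the normalization), so no uniform threshold is available to feed into a spherical-code bound, and any bound that degrades as the margin shrinks proves nothing. The argument has to be margin-free, and such arguments exist: pass from the Gram matrix to $M_{uv}=\beta^2-\|\tau(u)-\tau(v)\|_2^2$, which has rank at most $d+2$ (since $M = 2G - \bc\bone^T - \bone\bc^T + \beta^2 J$ for the vector $\bc$ of squared norms), vanishes on all cross pairs, and therefore splits into two $n\times n$ blocks each with positive diagonal and strictly negative off-diagonal entries; a rank/eigenvalue analysis of matrices with that sign pattern --- where only the signs matter, so strict inequalities suffice without any quantitative gap --- is what yields a linear lower bound, and the constant $1.286$ in \cite{FM88} emerges from that spectral analysis, not from ``the sharpest presently known'' kissing-number estimate. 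As written, your plan identifies neither the low-rank matrix with the usable sign pattern nor any derivation of the stated constant, so the lower bound is not established.
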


\begin{theorem}[David-Karthik-Laekhanukit \cite{DKL18}]\label{thm:L0DKL18}
$\cd_0(K_{n,n})=n.$
\end{theorem}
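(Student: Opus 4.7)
My plan is to prove the equality via both directions.

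For the upper bound $\cd_0(K_{n,n}) \leq n$, I would give an explicit realization in $\mathbb{R}^n$. Label $A = \{a_0, \ldots, a_{n-1}\}$ and $B = \{b_0, \ldots, b_{n-1}\}$, and set $\tau(a_i) = (i, i, \ldots, i)$ (the constant vector) and $\tau(b_j) = \bigl((j + k) \bmod n\bigr)_{k = 0}^{n-1}$ (a cyclic shift). Each cross pair $(a_i, b_j)$ agrees at the unique coordinate $k \equiv i - j \pmod{n}$ and disagrees elsewhere, yielding $\|\tau(a_i) - \tau(b_j)\|_0 = n - 1$. Each same-side pair disagrees at every coordinate: $A$-pairs because they are distinct constant vectors, and $B$-pairs because $(j + k) - (j' + k) \equiv j - j' \not\equiv 0 \pmod{n}$; both give distance $n$. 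Hence $\tau$ realizes $K_{n,n}$ with $\beta = n - 1$, so $\cd_0(K_{n,n}) \leq n$.

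For the lower bound $\cd_0(K_{n,n}) \geq n$, I would argue by contradiction: suppose $\tau : A \cup B \to \mathbb{R}^d$ realizes $K_{n,n}$ with $d \leq n - 1$. Pass to the partition-indicator embedding $\mathbf{y}_v \in \{0,1\}^D$ (where $D = \sum_k t_k$ records, for each coordinate $k$, which part of the induced partition $v$ lies in); this satisfies $\|\mathbf{y}_v\|^2 = d$ and $\|\tau(u) - \tau(v)\|_0 = d - \langle \mathbf{y}_u, \mathbf{y}_v \rangle$. After recentering so that $\mathbf{y}_{b_n}$ is the ``zero'' reference, the cross-distance condition $\|\tau(a_i) - \tau(b_j)\|_0 = \beta$ translates into the identity $\langle \mathbf{y}_{a_i}, \mathbf{y}_{b_j}\rangle = \tfrac{1}{2}\|\mathbf{y}_{b_j}\|^2$, which is independent of $i$. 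Consequently, any linear dependence $\sum_i c_i \mathbf{y}_{a_i} = 0$ must satisfy $\sum_i c_i = 0$ (dot with any $\mathbf{y}_{b_j}$, $j < n$). The plan is to combine this affine constraint with the strict same-side inequalities $\|\tau(a_i) - \tau(a_{i'})\|_0 > \beta$, and apply a Schur-complement analysis to the $(2n-1) \times (2n-1)$ Gram matrix of $\{\mathbf{y}_{a_i}\} \cup \{\mathbf{y}_{b_j}\}_{j < n}$, deducing that its rank is at least $n$, which contradicts the assumption $d \leq n-1$.

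The main technical obstacle is showing that the relevant Schur complement, namely the principal Gram submatrix $A_{SS} := (\langle \mathbf{y}_{a_i}, \mathbf{y}_{a_{i'}}\rangle)_{i,i'}$ (with diagonal $d$ and off-diagonals strictly less than $d - \beta$), is positive definite when restricted to the hyperplane $\{\vec c : \sum_i c_i = 0\}$. Diagonal-dominance bounds such as Gershgorin succeed only for very small $n$, so the plan is to exploit the centering $\sum_i c_i = 0$ by rewriting $\vec c^{\top} A_{SS} \vec c$ in terms of the mean off-diagonal entry $\bar{\alpha} < d - \beta$ and a residual fluctuation: the ``averaged'' contribution is $(d - \bar{\alpha})\|\vec c\|^2 > 0$, while the fluctuation is controlled using the strict same-side inequalities and the bound $\|\vec c\|_1^2 \le n \|\vec c\|^2$ refined in the $\sum c_i = 0$ direction. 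Combining both ingredients yields strict positivity of $\vec c^\top A_{SS} \vec c$ on the hyperplane, which closes the rank argument and hence the lower bound.
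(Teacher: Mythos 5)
Your upper-bound half is fine: the constant vectors for $A$ and the cyclic shifts for $B$ give cross distances exactly $n-1$ and same-side distances $n$, so $\cd_0(K_{n,n})\le n$ (for $n\ge 2$; note the paper itself only cites this theorem from \cite{DKL18} and gives no proof). The problem is the lower bound, which is the substantive half, and there the proposal has a fatal flaw: your entire argument takes place with the partition-indicator vectors $\mathbf{y}_v\in\{0,1\}^D$, and the rank of their Gram matrix is bounded by $D=\sum_k t_k$, \emph{not} by $d$. Since a single coordinate of the original realization can take up to $2n$ distinct values among the $2n$ points, $D$ can be as large as $2nd$, and the span of the $\mathbf{y}_v$ can have dimension far exceeding $d$ (already for $d=1$ with all values distinct, the $\mathbf{y}_v$ are $2n$ independent standard basis vectors). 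So even if you succeeded in showing that the $(2n-1)\times(2n-1)$ Gram matrix has rank at least $n$, this would say nothing about $d$ and yields no contradiction with $d\le n-1$. The step ``rank $\ge n$ contradicts $d\le n-1$'' silently identifies the ambient dimension of the indicator embedding with $d$, which is exactly what the embedding destroys; no linear-algebraic rank bound on the $\mathbf{y}_v$ can prove $d\ge n$.

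There is also a secondary gap: the positive-definiteness of $A_{SS}$ on the hyperplane $\sum_i c_i=0$ is not established by the sketched estimate. Writing $\vec c^{\top}A_{SS}\vec c$ as an averaged term $(d-\bar\alpha)\|\vec c\|^2$ plus a fluctuation and bounding the fluctuation via $\|\vec c\|_1^2\le n\|\vec c\|^2$ loses a factor of $n$: the off-diagonal entries are only known to lie in $[0,d-\beta-1]$, so the fluctuation can swamp the averaged term, and nothing in the hypotheses forces the $\mathbf{y}_{a_i}$ to be affinely independent. A workable lower-bound argument is combinatorial rather than linear-algebraic: let $t=d-\beta$ be the common number of agreeing coordinates of cross pairs (note $t\ge 1$, since $\beta=d$ would force same-side distances exceeding $d$). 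For each coordinate $k$, with parts $A_v^k,B_v^k$ of the induced value-partition, one has $\sum_v|A_v^k||B_v^k|\le\tfrac12\sum_v\bigl(|A_v^k|^2+|B_v^k|^2\bigr)$; summing over $k$, the left side is $n^2t$, while the right side equals $nd$ plus half the total number of ordered same-side agreeing pairs, which by the strict same-side condition is at most $nd+n(n-1)(t-1)$. Rearranging gives $d\ge t+n-1\ge n$. Your proposal would need to be rebuilt along such lines (working with the partitions per coordinate of the original $d$-dimensional realization) rather than via rank of the lifted Gram matrix.
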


In particular, the above two theorems are the obstacles of the approach of~\cite{DKL18} for the $\ell_2$ and Hamming metrics respectively. As discussed in the proof overview, we will overcome these barriers by constructing dense bipartite graphs with low contact dimensions in every $\ell_p$ metrics.

As discussed in Section~\ref{sec:apx-cp-overview}, we need a generalization of contact dimension in order to show inapproximability for \CP. This is formally defined below; it should be noted that the definition only makes sense for bipartite graphs, whereas the original contact dimension is well-defined for any graphs. Moreover, when $\lambda = 1$, the notion of gap contact dimension coincides with the (non-gap) contact dimension in bipartite graphs.

\begin{definition}[Gap Contact Dimension]
For any bipartite graph $G = (A\dot\cup B, E)$ and $\lambda \geq 1$, a mapping $\tau: V \to \mathbb{R}^d$ is said to \emph{$\lambda$-gap-realize $G$ (in the $\ell_p$-metric)} if for some $\beta > 0$, the following holds:
\begin{enumerate}[(i)]
\item For all $(u, v) \in E$, $\|\tau(u) - \tau(v)\|_p = \beta$.
\item For all $(u, v) \in (A \times B) \setminus E$, $\|\tau(u) - \tau(v)\|_p > \beta$.
\item For all distinct $u, v$ both from $A$ or both from $B$, $\|\tau(u) - \tau(v)\|_p > \lambda \cdot \beta$.
\end{enumerate}
The \emph{$\lambda$-gap contact dimension} (in the $\ell_p$-metric) of $G$, denoted by $\lambda\text{-}\cd_p(G)$, is the minimum $d \in \mathbb{N}$ such that there exists $\tau: V \to \mathbb{R}^d$ $\lambda$-gap-realizing $G$ in the $\ell_p$-metric.
\end{definition}

Again, we may say that $\tau$ $(\beta, \lambda)$-gap-realizes $G$ to emphasize the value of $\beta$.

Finally, we define an analogous notion for inner product:
\begin{definition}[Gap Inner Product Dimension]
For any bipartite graph $G = (A\dot\cup B, E)$ and $\lambda \geq 1$, a mapping $\tau: V \to \mathbb{R}^d$ is said to \emph{$\lambda$-gap-\IP-realize $G$} if for some $\beta > 0$, the following holds:
\begin{enumerate}[(i)]
\item For all $(u, v) \in E$, $\langle\tau(u) , \tau(v)\rangle = \beta$.
\item For all $(u, v) \in (A \times B) \setminus E$, $\langle\tau(u) , \tau(v)\rangle < \beta$.
\item For all distinct $u, v$ both from $A$ or both from $B$, $\langle\tau(u) , \tau(v)\rangle < \beta/\lambda$.
\end{enumerate}
The \emph{$\lambda$-gap inner product dimension} of $G$, denoted by $\lambda\text{-}\ipd(G)$, is the minimum $d \in \mathbb{N}$ such that there exists $\tau: V \to \mathbb{R}^d$ $\lambda$-gap-\IP-realizing $G$.
\end{definition}

We may say that $\tau$ $(\beta, \lambda)$-gap-\IP-realizes $G$ to emphasize the value of $\beta$.

\section{Lower Bound on Closest Pair under Orthogonal Vector Hypothesis}\label{sec:CP}

In this section, we prove the subquadratic hardness for \CP (assuming \OVH) using the efficient construction of a realization of a dense bipartite graph. The construction will be be formally stated below and the details will be given in Section~\ref{sec:gadet-cp}. First, we define the notion of a \emph{log-dense} sequence of integers:
\begin{definition}
A sequence $(n_i)_{i \in \mathbb{N}}$ of increasing positive integers is said to be \emph{log-dense} if there exists a constant $C \geq 1$ such that $\log n_{i + 1} \leq C \cdot \log n_i$ for all $i \in \mathbb{N}$. 
\end{definition}

As outlined in Section~\ref{sec:overviewRS} , we use Reed-Solomon codes to construct a family of dense bipartite graphs with low contact dimensions. While the construction does not yield a graph for every number of vertices $n$, it does yield a graph for a log-dense sequence of numbers of vertices, which turns out to be sufficient for the purpose of the reduction. More formally, we will prove the following in Section~\ref{sec:gadet-cp}.

%\begin{theorem}\label{thm:densepolar}
%Let $p\in\mathbb R_{\ge 1}\cup\{0\}$. For every prime power $q$, and every $0<d<q-1$, there is a bipartite graph $G(A\dot\cup B,E)$ where $|A|=|B|=n=q^{q-d+1}$, such that $|E|=(\binom{q}{d+1})^2$ (i.e., $|E|=n^{2-o(1)}$)   and $\cd_p(G)=q^2$. Moreover a polar-pair of point-sets $(X,Y)$  realizing $G$ with entries only in $\{0,1\}$ can be constructed in time $O(q^{2d})$.
%\end{theorem}

\begin{theorem} \label{thm:dense-cd}
For every $0 < \delta < 1$, there exists a log-dense sequence $(n_i)_{i \in \mathbb{N}}$ such that, for every $i \in \mathbb{N}$, there is a bipartite graph $G_i = (A_i \dot \cup B_i,E_i)$ where $|A_i| = |B_i| = n_i$ and $|E_i| \geq \Omega(n_i^{2 - \delta})$, such that $\cd(G_i) = (\log n_i)^{O(1/\delta)}$. Moreover, for all $i \in \mathbb{N}$, a realization $\tau: A_i \dot \cup B_i \to \{0, 1\}^{(\log n_i)^{O(1/\delta)}}$ of $G_i$ can be constructed in time $n_i^{2 + o(1)}$.
\end{theorem}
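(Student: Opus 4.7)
The plan is to instantiate the Reed--Solomon-based construction sketched in Section~\ref{sec:overviewRS} with the parameters calibrated so that both the edge density and the contact dimension come out as required, and then to choose a suitable log-dense subsequence. I will fix $g = \lceil 2/\delta \rceil$ and work with the family of parameters indexed by $h \in \mathbb{N}$; for each sufficiently large $h$, invoke Bertrand's postulate to pick a prime $q = q(h) \in [h^g, 2 h^g]$, and set $n := q^h$. The log-denseness of the resulting sequence $(n_h)$ is immediate, since $\log n_{h+1}/\log n_h = \frac{(h+1)\log q(h+1)}{h \log q(h)}$ tends to $1$, so is bounded by some constant $C$.

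For each such $(q, h)$ I would build $G = (A \dot\cup B, E)$ exactly as in the overview: identify $A$ with the set $\cP$ of degree-${<}h$ univariate polynomials over $\mathbb{F}_q$ (so $|A| = q^h = n$), identify $B$ with the set $\cQ = \{x^h + p(x) : p \in \cP\}$ of monic degree-$h$ polynomials, and place an edge $(a,b)$ iff $p_b - p_a$ splits into $h$ distinct roots over $\mathbb{F}_q$. Counting edges amounts to counting monic degree-$h$ polynomials with $h$ distinct roots, which is $\binom{q}{h}$; using $\binom{q}{h} \ge (q/h)^h$ and $q \ge h^g$, this gives
\begin{equation*}
|E| \;=\; q^h \binom{q}{h} \;\ge\; q^h \cdot (q/h)^h \;\ge\; q^h \cdot q^{h(1 - 1/g)} \;=\; n^{2 - 1/g} \;\ge\; \Omega(n^{2-\delta}),
\end{equation*}
as desired (since $1/g \le \delta/2$). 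The realization is the evaluation map $\tau^\circ : \cP \cup \cQ \to \mathbb{F}_q^q$ defined by $\tau^\circ(p) = (p(0), p(1), \ldots, p(q-1))$; for an edge $(a,b)$, $p_b - p_a$ has exactly $h$ zeros so $\|\tau^\circ(a) - \tau^\circ(b)\|_0 = q - h$, while for any cross non-edge or same-side pair the difference is either a nonzero polynomial of degree $\le h-1$ or a monic polynomial of degree $h$ with strictly fewer than $h$ roots, so the Hamming distance is strictly greater than $q - h$. Composing with the map $\psi$ of Proposition~\ref{prop:simplex} yields a realization into $\{0,1\}^{q^2}$ with $\beta = 2(q-h)$.

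The final step is to verify the dimension bound and the running time. From $n = q^h$ and $q \le 2 h^g$ we get $\log n = h \log q \le h(g \log h + O(1))$, hence $h \ge \Omega(\log n / (g \log \log n))$, and thus
\begin{equation*}
q^2 \;\le\; 4 h^{2g} \;\le\; (\log n)^{O(g)} \;=\; (\log n)^{O(1/\delta)},
\end{equation*}
which gives the claimed bound on $\cd(G)$. For constructibility, note that the realization $\tau = \psi \circ \tau^\circ$ depends only on evaluating $n$ polynomials at $q$ points, which costs $n \cdot q \cdot \mathrm{poly}(\log q) = n^{1 + o(1)}$ time; if one actually wants the edge set $E$ explicitly, brute-forcing over $A \times B$ and applying the realization's distance check costs $n^{2+o(1)}$ time, which still fits the stated bound. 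The only genuinely delicate point I foresee is the bookkeeping for log-denseness: the sequence $n_h = q(h)^h$ depends on the chosen primes $q(h)$, and one has to argue uniformly (over large $h$) that such primes exist in $[h^g, 2h^g]$ (Bertrand) and that the resulting ratio of consecutive logarithms is bounded; everything else is a direct computation from the polynomial structure of Reed--Solomon codes.
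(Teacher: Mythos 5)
Your proposal is correct, and it is essentially the Reed--Solomon construction the paper uses: same graph (degree-${<}h$ polynomials vs.\ monic degree-$h$ shifts, edges when the difference splits into distinct roots), same evaluation-map realization composed with Proposition~\ref{prop:simplex}, and the same parameter regime $q \approx h^{g}$ with $g = \Theta(1/\delta)$. The one substantive difference is how the density/center is justified: the paper's formal proof does not use the explicit center $x^h$ but instead takes $\cC_1 = \rs_q[q,K_1] \subseteq \cC_2 = \rs_q[q,K_1+1]$, invokes the averaging argument of Lemma~\ref{lem:finding-center} together with the MDS weight enumeration (Lemma~\ref{lem:mds}) to show a good center exists in $\cC_2 \setminus \cC_1$, and finds it by brute force, which is what makes the paper's construction time $n_i^{2+o(1)}$; your route uses the explicit center $x^h$ and the exact count $\binom{q}{h}$ of monic split polynomials, which is more direct, gives the degree of every vertex exactly, and even builds $\tau$ in $n^{1+o(1)}$ time (with $n^{2+o(1)}$ only if the edge set itself is wanted). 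Both buy the same theorem; yours trades the generality of the center-finding lemma (which the paper reuses for the MIP and AG-code gadgets) for a cleaner, faster special-case argument. One small slip: in the dimension calculation you derive the \emph{lower} bound $h \geq \Omega(\log n/(g\log\log n))$ and then conclude $h^{2g} \leq (\log n)^{O(g)}$, but that step needs an \emph{upper} bound on $h$; this is immediate from $n = q^h \geq 2^h$, i.e., $h \leq \log_2 n$, so the bound stands after this trivial correction.
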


Notice that we did not specify any $\ell_p$-metric in the notion of contact dimension above. This is intentional, because our point sets $\tau(A_i \dot \cup B_i)$ have coordinate entries in $\{0, 1\}$, for which the distances in the Hamming metric are equivalent (up to power of $p$) to distances in any $\ell_p$-metric ($p\neq \infty$). We also adopt this notational convenience below. Specifically, we will prove the following theorem which states that \CP is hard even when the points are from $\{0, 1\}^d$; clearly, this also implies Theorem~\ref{thm:CP} due to the aforementioned equivalence to other $\ell_p$-metrics.

\begin{theorem}[Subquadratic Hardness of $\{0, 1\}$-\CP]\label{thm:CP-01}
Assuming \OVH, for every $\varepsilon > 0$, there exists $s_{\varepsilon} > 0$ such that no algorithm running in $O(n^{2-\varepsilon})$  time can solve \CP in the Hamming metric even when $d=\left(\log n\right)^{s_\varepsilon}$ and all points have $\{0, 1\}$ entries.
\end{theorem}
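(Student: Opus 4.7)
The plan is to reduce \BCP in the Hamming metric, which is hard under \OVH by Theorem~\ref{thm:BCP}, to $\{0,1\}$-\CP by appending to every input point a ``stretching'' gadget built from the realization of a dense bipartite graph supplied by Theorem~\ref{thm:dense-cd}, exactly in the spirit of the random-permutation reduction sketched in Section~\ref{sec:overviewRS}. Given $\varepsilon > 0$, I would first fix a sufficiently small $\delta = \delta(\varepsilon) > 0$ and invoke Theorem~\ref{thm:dense-cd} with this $\delta$ to obtain a log-dense sequence $(n_i)$, bipartite graphs $G_i = (A_i \dot\cup B_i, E_i)$ with $|E_i| \geq \Omega(n_i^{2-\delta})$, and realizations $\tau_i : A_i \dot\cup B_i \to \{0,1\}^{d''_i}$ with $d''_i = (\log n_i)^{O(1/\delta)}$ and realization constant $\beta_i$. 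Because $\tau_i$ does not realize the biclique, a single \CP instance does not suffice: I will invoke Lemma~\ref{lem:cover} to cover $K_{n_i,n_i}$ by permuted copies of $G_i$ and output one \CP instance per cover permutation.

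Given a \BCP instance $(A,B,\alpha)$ with $|A|=|B|=n$ in dimension $d = c_{\varepsilon'}\log n$, I would select the smallest $n_i \geq n$ (log-denseness yields $n_i \leq n^C$ for some $C=C(\delta)$), fix arbitrary injections of $A,B$ into $A_i,B_i$, and apply Lemma~\ref{lem:cover} to $G_i$ (inverting the resulting permutations if necessary) to obtain side-preserving permutations $\sigma_1,\ldots,\sigma_k$ of $A_i\dot\cup B_i$, with $k = O(n_i^{\delta}\log n_i) = n^{O(C\delta)}$, such that for every $(a^*,b^*)\in A_i\times B_i$ some $\sigma_j$ satisfies $(\sigma_j(a^*),\sigma_j(b^*))\in E_i$. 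Setting $k' := d+1$, for each $j\in[k]$ I would output the \CP instance $P_j := \tilde A_j \cup \tilde B_j$ given by
\begin{align*}
\tilde A_j \,=\, \bigl\{\, a \circ \tau_i(\sigma_j(a))^{\otimes k'} \,:\, a \in A \,\bigr\}, \qquad \tilde B_j \,=\, \bigl\{\, b \circ \tau_i(\sigma_j(b))^{\otimes k'} \,:\, b \in B \,\bigr\},
\end{align*}
where $\bv^{\otimes k'}$ denotes the $k'$-fold concatenation of $\bv$ with itself. Using concatenation rather than scalar multiplication is what keeps every coordinate in $\{0,1\}$ while still amplifying the gadget distances by a factor of $k'$, and the ambient dimension is $d + k'd''_i = (\log n)^{O(1/\delta)}$ as required.

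Correctness is then a routine Hamming bookkeeping: a cross pair $(\tilde a_j,\tilde b_j)$ has distance $\|a-b\|_0 + k'\beta_i$ precisely when $(\sigma_j(a),\sigma_j(b))\in E_i$, and at least $\|a-b\|_0 + k'(\beta_i+1)$ otherwise, while any monochromatic pair has distance at least $k'(\beta_i+1)$ by the realization property. Since $k' = d+1 > \alpha$, setting the \CP threshold to $\alpha + k'\beta_i$ and declaring ``yes'' iff some $P_j$ contains a pair within it correctly solves \BCP, with the forward direction relying on Lemma~\ref{lem:cover} to locate a suitable $\sigma_j$ for the optimal pair $(a^*,b^*)$. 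A hypothetical $O(N^{2-\varepsilon})$-time \CP algorithm, applied to all $k$ instances, would therefore decide \BCP in time $k\cdot(2n)^{2-\varepsilon} = n^{2-\varepsilon+O(C\delta)}$.

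The only nontrivial parameter balance to verify is that $O(C(\delta)\cdot\delta) < \varepsilon/2$; after that, choosing $\varepsilon' := \varepsilon/4$ in Theorem~\ref{thm:BCP} yields a contradiction with \OVH, and the ambient dimension $(\log n)^{O(1/\delta)}$ pins down $s_\varepsilon = O(1/\delta(\varepsilon))$. I expect this to be the main place care is needed: the log-density constant $C$ is a function of $\delta$ arising from the Reed--Solomon parameter grid underlying Theorem~\ref{thm:dense-cd}, so one has to check that $C(\delta)\cdot\delta \to 0$ as $\delta \to 0$. Conceptually the sequence $(n_i)$ can be made arbitrarily dense by letting the Reed--Solomon degree increase by $1$ at a time, so this is a purely quantitative verification rather than a conceptual obstacle, and the rest of the argument is a clean ``gadget-plus-cover'' reduction.
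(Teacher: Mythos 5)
Your gadget-and-cover bookkeeping (the $k'=d+1$ concatenation trick, the threshold $\alpha + k'\beta_i$, and the case analysis) is fine, but the reduction as you set it up is not subquadratic, and this is a genuine gap rather than a parameter check. You instantiate the gadget at size $n_i \ge n$: by Theorem~\ref{thm:dense-cd} merely \emph{constructing} the realization $\tau_i$ already costs $n_i^{2+o(1)} \ge n^{2+o(1)}$ time, and running Lemma~\ref{lem:cover} on $K_{n_i,n_i}$ costs $\Omega(n_i^{4})$ time (each derandomized step is $O(n_i^4)$), so the preprocessing alone exceeds the $O(n^{2-\varepsilon'})$ budget you need in order to contradict \OVH via Theorem~\ref{thm:BCP} --- in fine-grained reductions the reduction time counts, and a reduction taking time $\ge n^2$ to solve \BCP proves nothing. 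A secondary issue is that your time bound $n^{2-\varepsilon+O(C(\delta)\delta)}$ needs $C(\delta)\cdot\delta$ small, which the black-box statement of Theorem~\ref{thm:dense-cd} does not provide (it only asserts \emph{some} log-density constant); you correctly note the Reed--Solomon construction happens to give an absolute constant $C$, but your argument depends on opening up the gadget construction to verify it.

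The paper's proof avoids both problems by going in the opposite direction on the gadget size: it takes $n'$ to be the largest element of the sequence with $n' \le n^{0.1}$, partitions $A$ and $B$ into $n/n'$ blocks of size $n'$, and for each pair of blocks and each of the $k = O((n')^{\delta}\log n')$ cover permutations runs the \CP oracle on a $2n'$-point instance. The gadget construction and cover then cost only $n^{0.2+o(1)}$ and $O(n^{0.6}\log n)$, the total oracle time is $(n/n')^2 \cdot k \cdot O((n')^{2-\varepsilon}) \le O(n^2 \log n \cdot (n')^{-\varepsilon/2})$, and log-density is used only to guarantee $n' \ge n^{0.1/C_\varepsilon}$ --- so the argument is insensitive to how large the log-density constant is. Reworking your reduction with small gadget blocks in this way (keeping your concatenation gadget, which is essentially the paper's $\bone_{d+1}\otimes\tau_t(\cdot)$) would repair the proof.
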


%Notice that the above theorem is stronger than the one sketched in Section~\ref{sec:overview}. Indeed to prove Theorem~\ref{thm:CP} we do not need this stronger version, but we do need this strengthening for proving Theorem~\ref{thm:MIP}. 
%Now we show the proof of Theorem~\ref{thm:CP} by reducing \BCP to \CP and the subquadratic hardness for \CP follows from the known subquadratic hardness for \BCP (see Theorem~\ref{thm:BCP}).

\begin{proof}
For any $\varepsilon > 0$, let $C_{\text{exp}}$ be the constant such that the dimension guarantee for $\tau$ in Theorem~\ref{thm:dense-cd} is at most $(\log n_i)^{C_{\text{exp}}/\varepsilon}$ for $\delta = \varepsilon/2$. We define $s_\varepsilon$ as $2 \cdot C_{\text{exp}} / \varepsilon + 2$.

Assume  that there exists $\varepsilon > 0$ and an algorithm $\cA$ that can solve \CP in time $n^{2 - \varepsilon}$ in the Hamming metric for any input of $n$ points in   $\{0,1\}^{(\log n)^{s_{\varepsilon}}}$. We will construct an algorithm $\cA'$ that solves any instance of \BCP in time $n^{2 - \varepsilon'}$ for some constant $\varepsilon' > 0$ (to be specified below), on $n$ points in  dimension $d := c_{\varepsilon'} \cdot \log n$ with coordinate entries in $\{0,1\}$. Together with Theorem~\ref{thm:BCP}, this implies that \OVH is false, arriving at a contradiction.

Let $C_\varepsilon$ denote the log-density constant (i.e. $\sup_{i} \frac{\log n_{i+1}}{\log n_i}$) of the sequence from Theorem~\ref{thm:dense-cd} for $\delta = \varepsilon/2$, and let $\varepsilon'$ be $0.01 \cdot \varepsilon/ C_{\varepsilon}$. The algorithm $\cA'$ on input $(A, B, \alpha)$ where $A, B \subseteq \{0, 1\}^d,$ with $|A|=|B|=n$, and $  \alpha \in [d]$, works as follows:
\begin{enumerate}
\item Let $n'$ be the largest number in the sequence from Theorem~\ref{thm:dense-cd} with $\delta = \varepsilon/2$ s.t. $n' \leq n^{0.1}$.
\item Let $G' = (A' \dot\cup B', E')$ be the graph from Theorem~\ref{thm:dense-cd} with $|A'| = |B'| = n'$, $|E'| \geq \Omega((n')^{2-\delta})$, and $\tau: A'\dot\cup B' \to \{0, 1\}^{(\log n')^{C_{\text{exp}}/\varepsilon}}$ be a $\beta$-realization of $G'$ where $\beta \in \mathbb{N}$.
\item We use the algorithm from Lemma~\ref{lem:cover} to find $\pi_1, \dots, \pi_k$ where $k = O((n')^{\delta}\log n')$ such that the union of $E_{G'_{\pi_1}}, \ldots, E_{G'_{\pi_k}}$ is $E_{K_{n', n'}}$.
\item We assume w.l.o.g.\footnote{This is without loss of generality, since if $n$ is not divisible by $n'$, we can use brute force for the remainder points. This requires only $O(n \cdot n' \cdot) = O(n^{1.1} \log n)$ which does not affect the overall asymptotic running time of the algorithm.} that $n$ is divisible by $n'$. Partition $A$ and $B$ into $A_1, \dots, A_{n/n'}$ and $B_1, \dots, B_{n/n'}$ each of size $n'$. For each $i, j \in [n/n'], t \in [k]$, do the following:
\begin{enumerate}
\item %Let $X'_t = \{x_{\pi_t^{-1}(1)}, \dots, x_{\pi_t^{-1}(n')}\}, Y'_t = \{y_{\pi_t^{-1}(1)}, \dots, y_{\pi_t^{-1}(n')}\}$ be the appropriate permutations of $X', Y'$ that realizes $G'_{\pi_t}$.
Let $\tau_t$ be an appropriate permutation of $\tau$ that $\beta$-realizes $G'_{\pi_t}$. Label the vertices of $G'_{\pi_t}$ with the points in $A_i\dot\cup B_j$.
\item Let $\alpha' = \alpha + (d + 1) \cdot \beta$, and define $A_i^t, B_j^t$ as
\begin{align*}
A_i^t = \{\ba \circ (\bone_{d + 1} \otimes \tau_t(\ba)) \mid \ba \in A_i\}, B_j^t = \{\bb \circ (\bone_{d + 1} \otimes \tau_t(\bb)) \mid \bb \in B_j\}
\end{align*}
where $\bone_{d+1} \otimes \bv$ simply denotes $\bv \circ \bv \circ \cdots \circ \bv$, i.e., the concatenation of $d + 1$ copies of $\bv$.
\item Run $\cA$ on $(A_i^t \dot\cup B_j^t, \alpha')$. If $\cA$ outputs YES, then output YES and terminate.
\end{enumerate}
\item If none of the executions of $\cA$ returns YES, then output NO.
\end{enumerate}

Observe that the bottleneck in the running time of the algorithm is in the executions of $\cA$. The number of executions is $(n/n')^2 \cdot k$ and each execution takes $O((n')^{2 - \varepsilon})$ time. Hence, in total the running time of the algorithm $\cA'$ is $O((n/n')^2 \cdot k \cdot (n')^{2 - \varepsilon}) \leq O(n^2 \log n \cdot (n')^{-\varepsilon/2})$. Now, from the log-density of the sequence from Theorem~\ref{thm:dense-cd}, we have $n' \geq n^{0.1/C_\varepsilon} = n^{10\varepsilon'/\varepsilon}$. As a result, the running time of $\cA$ is at most $O(n^{2 - 5\varepsilon'}\log n) \leq O(n^{2 - \varepsilon'})$ as desired.

To see the correctness of the algorithm, first observe that the dimensions of vectors in $A_i^t, B_j^t$ are at most $d + (d + 1) \cdot (\log n')^{C_{\text{exp}}/\varepsilon}$ which is at most $(\log {n})^{s_{\varepsilon}}$ for any sufficiently large $n$; that is, the calls to $\cA$ are valid. Next, observe that, if $(A, B, \alpha)$ is a YES instance of \BCP, there must be $i, j \in [n/n']$ and $\ba^* \in A_i, \bb^* \in B_j$ such that $\|\ba^* - \bb^*\|_0$ is at most $\alpha$. Since $G'_{\pi_1}, \dots, G'_{\pi_k}$ covers $K_{n', n'}$, there must be $t \in [k]$ such that $\|\tau_t(\ba^*) - \tau_t(\bb^*)\|_0 =\beta$. As a result, $\|(\ba^* \circ (\bone_{d + 1} \otimes \tau_t(\ba^*))) - (\bb^* \circ (\bone_{d + 1} \otimes \tau_t(\bb^*)))\|_0 \leq \alpha + (d + 1) \cdot \beta = \alpha'$. Thus, $(A_i^t \cup B_j^t, \alpha')$ is a YES instance for \CP and $\cA'$ outputs YES as desired.

Finally, assume that $(A, B, \alpha)$ is a NO instance of \BCP. Consider any $i, j \in [n/n']$ and $t \in [k]$. To argue that $(A_i^t \cup B_j^t, \alpha')$ is a NO instance for \CP, we have to show that any two points in $A_i^t \cup B_j^t$ have distance more than $\alpha'$. To see this, let us consider two cases.
\begin{enumerate}
\item Both points are either from $A_i^t$ or from $B_j^t$. Assume w.l.o.g. that the two points are from $A_i^t$; let them be $\ba \circ (\bone_{d + 1} \otimes \tau_t(\ba))$ and $\ba' \circ (\bone_{d + 1} \otimes \tau_t(\ba'))$. Recall that, from the definition of $\beta$-realization, $\|\tau_t(\ba) - \tau_t(\ba')\|_0 > \beta$. Since $\|\tau_t(\ba) - \tau_t(\ba')\|_0$ is an integer, we must have $\|\tau_t(\ba) - \tau_t(\ba')\|_0 \geq \beta + 1$. As a result, the Hamming distance between the two points is at least $(d + 1) \cdot (\beta + 1) > d + (d + 1) \cdot \beta = \alpha'$.
\item One of the point is from $A_i^t$ and the other from $B_j^t$. Let them be $\ba \circ (\bone_{d + 1} \otimes \tau_t(\ba))$ and $\bb \circ (\bone_{d + 1} \otimes \tau_t(\bb))$. Since $(A, B, \alpha)$ is a NO instance of \BCP, $\|\ba - \bb\|_0 > \alpha$. Furthermore, from definition of $\beta$-realization, we must have $\|\tau_t(\ba) - \tau_t(\bb)\|_0 \geq \beta$. Combining the two implies that the Hamming distance between $\ba \circ (\bone_{d + 1} \otimes \tau_t(\ba))$ and $\bb \circ (\bone_{d + 1} \otimes \tau_t(\bb))$ is more than $\alpha'$. 
\end{enumerate}
Hence, $(A_i^t \dot\cup B_j^t, \alpha')$ must be a NO instance for \CP for every $t\in[k]$ and $i,j\in [n/n']$. Thus, $\cA'$ outputs NO as desired.
\end{proof}

%Let $A,B\subseteq \{0,1\}^d$ where $|A|=|B|=n$, and $\alpha\in [d]$ be the input to a \BCP instance. For every $i\in $ we build an instance $(A_i',B_i',\alpha)$ of \CP where $A',B'\subseteq\{0,1\}^{5d}$, $|A|=|B|=n$, and $\alpha=2d$. 

\section{Gadget Constructions}\label{sec:cover}

In this section, we construct all the gadgets that are used in our reductions, including the basic gadget (Theorem~\ref{thm:dense-cd}) and more advanced gadgets used for \MIP and approximate version of \CP.

\subsection{Finding a Center of a Code via Another Code}

At the heart of all our gadgets is the task of finding  a code $\cC_1$ and a center $\bs$ such that there are  $|\cC_1|^{1-o(1)}$ many codewords at  Hamming distance exactly equal to $r$ (for some $r>0$) from $\bs$ but there is no codeword in $\cC_1$ at  distance less than $r$ from $\bs$. The below lemma is  useful in finding such an $\bs$.

\begin{lemma} \label{lem:finding-center}
Let $\cC_1 \subseteq \cC_2 \subseteq \mathbb{F}_q^N$ be two linear codes with the same block length $N$ and alphabet $\mathbb{F}_q$ such that $\Delta(\cC_2) < \Delta(\cC_1)$. Then, there exists a center $\bs \in \mathbb{F}_q^N$ such that (1) $\Delta(\bs, \cC_1) \geq \Delta(\cC_2)$ and (2) $|\cB(\bs, \Delta(\cC_2)) \cap \cC_1| / |\cC_1| \geq A_{\Delta(\cC_2)}(\cC_2) / |\cC_2|$. Moreover, given $\cC_1, \cC_2$, such an $\bs$ can be found in $O(|\cC_1| \cdot |\cC_2| \cdot qN)$ time.
\end{lemma}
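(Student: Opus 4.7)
The plan is to pick the center $\bs$ from $\cC_2 \setminus \cC_1$; the subset relation $\cC_1 \subseteq \cC_2$ is exactly what lets property~(1) fall out for free, and an averaging argument over $\bs$ gives~(2).

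First I would verify property~(1) deterministically: for any $\bs \in \cC_2 \setminus \cC_1$ and any $\bc \in \cC_1$, the difference $\bs - \bc$ lies in $\cC_2$ (since both $\bs$ and $\bc$ belong to the linear code $\cC_2$) and is nonzero (since $\bs \notin \cC_1$ while $\bc \in \cC_1$). Hence $\Delta(\bs, \bc) = \Delta(\bs - \bc) \geq \Delta(\cC_2)$, which is precisely property~(1). A useful byproduct is that for such $\bs$, the ball $\cB(\bs, \Delta(\cC_2))$ intersects $\cC_1$ only at codewords at \emph{exact} distance $\Delta(\cC_2)$.

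Next I would establish existence of a good $\bs$ via averaging over the larger code. For each $\bc \in \cC_1$, as $\bs$ ranges uniformly over $\cC_2$ the difference $\bs - \bc$ also ranges uniformly over $\cC_2$ (here I use $\cC_1 \subseteq \cC_2$ crucially), so
\[
\Pr_{\bs \in \cC_2}\bigl[\Delta(\bs - \bc) = \Delta(\cC_2)\bigr] \;=\; \frac{A_{\Delta(\cC_2)}(\cC_2)}{|\cC_2|}.
\]
Summing over $\bc \in \cC_1$ gives
\[
\sum_{\bs \in \cC_2} \bigl|\{\bc \in \cC_1 : \Delta(\bs - \bc) = \Delta(\cC_2)\}\bigr| \;=\; |\cC_1| \cdot A_{\Delta(\cC_2)}(\cC_2).
\]
Now the key observation: if $\bs \in \cC_1$, then every $\bc \in \cC_1 \setminus \{\bs\}$ satisfies $\Delta(\bs, \bc) \geq \Delta(\cC_1) > \Delta(\cC_2)$ by hypothesis, so the inner set is empty. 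Therefore the entire sum is supported on $\bs \in \cC_2 \setminus \cC_1$, and by pigeonhole some $\bs^* \in \cC_2 \setminus \cC_1$ achieves
\[
\bigl|\{\bc \in \cC_1 : \Delta(\bs^* - \bc) = \Delta(\cC_2)\}\bigr| \;\geq\; \frac{|\cC_1| \cdot A_{\Delta(\cC_2)}(\cC_2)}{|\cC_2|}.
\]
Since $\bs^* \in \cC_2 \setminus \cC_1$ inherits property~(1) from the deterministic step, and since by the byproduct above the set on the left equals $\cB(\bs^*, \Delta(\cC_2)) \cap \cC_1$, this is exactly property~(2).

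For the algorithmic part I would just brute-force the averaging argument: enumerate every $\bs \in \cC_2$ (each $\cC_2$-codeword can be listed from a basis in $O(qN)$ field operations per codeword, or by a once-off $O(|\cC_2| \cdot qN)$ listing), and for each $\bs$ loop over $\bc \in \cC_1$ computing $\Delta(\bs,\bc)$ in $O(N)$ field operations to determine whether $\bs$ satisfies property~(1) and how many $\bc$ lie on the shell at distance $\Delta(\cC_2)$. Returning any $\bs \in \cC_2 \setminus \cC_1$ that meets the promised count fits comfortably in $O(|\cC_1|\cdot|\cC_2|\cdot qN)$ time, and the existential argument above guarantees such an $\bs$ will be found. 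I don't foresee a real obstacle here — the only subtle point is noticing that the hypothesis $\Delta(\cC_2) < \Delta(\cC_1)$ is what forces all the mass of the averaging sum onto $\cC_2 \setminus \cC_1$, which is what lets properties~(1) and~(2) coexist on the same $\bs$.
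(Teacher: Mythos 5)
Your proposal is correct and follows essentially the same route as the paper: pick $\bs \in \cC_2 \setminus \cC_1$ so that property~(1) is automatic (every $\bs - \bc$ is a nonzero codeword of $\cC_2$), establish~(2) by an averaging/counting argument using translation-invariance of $\cC_2$ under shifts by $\cC_1$, and find $\bs$ by brute force over $\cC_2$. The only cosmetic difference is that the paper averages directly over $\cC_2 \setminus \cC_1$, whereas you average over all of $\cC_2$ and then note that the hypothesis $\Delta(\cC_2) < \Delta(\cC_1)$ kills the contribution of $\bs \in \cC_1$, which yields the same bound.
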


\begin{proof}
We  show that there exists $\bs \in \cC_2 \setminus \cC_1$ such that (2) holds. Note that (1) immediately holds, because $\bs - \bc$ must be a non-zero codeword of $\cC_2$ which implies that $\Delta(\bs, \bc) \geq \Delta(\cC_2)$. %since $\bs$ is a codeword of $\cC_2$ and every $\bc \in \cC_1 \subseteq \cC_2$ is also a codeword of $\cC_2$, and $\bs \ne \bc$ because $\bs \notin \cC_1$, we must have $\Delta(\bs, \bc) \geq \Delta(\cC_2)$. 

\begin{sloppypar}To show that there exists $\bs \in \cC_2 \setminus \cC_1$ such that $|\cB(\bs, \Delta(\cC_2)) \cap \cC_1| \geq |\cC_1| \cdot A_{\Delta(\cC_2)} / |\cC_2|$. We will in fact show a stronger statement: for a random $\bs \in \cC_2 \setminus \cC_1$, we have $\E[|\cB(\bs, \Delta(\cC_2)) \cap \cC_1|] \geq |\cC_1| \cdot A_{\Delta(\cC_2)} / |\cC_2|$. Consider $\E_{\bs \in \cC_2 \setminus \cC_1}[|\cB(\bs, \Delta(\cC_2)) \cap \cC_1|]$. Due to linearity of expectation, we have\end{sloppypar}
\begin{align*}
\E_{\bs \in \cC_2 \setminus \cC_1}[|\cB(\bs, \Delta(\cC_2)) \cap \cC_1|] 
&= \sum_{\bc \in \cC_1} \Pr_{\bs \in \cC_2 \setminus \cC_1}[\bc \in \cB(\bs, \Delta(\cC_2))] \\
&= \sum_{\bc \in \cC_1} \Pr_{\bs \in \cC_2 \setminus \cC_1}[\Delta(\bs - \bc) \leq \Delta(\cC_2)] \\
&= \sum_{\bc \in \cC_1} \Pr_{\bs \in \cC_2 \setminus \cC_1}[\Delta(\bs) \leq \Delta(\cC_2)] \\
&= |\cC_1| \cdot \frac{|(\cC_2 \setminus \cC_1) \cap \cB(\bzero, \Delta(\cC_2))|}{|\cC_2 \setminus \cC_1|}.
\end{align*}
Now, since $\Delta(\cC_1) > \Delta(\cC_2)$, we have $\cC_1 \cap \cB(\bzero, \Delta(\cC_2)) = \{\bzero\}$. That is, $|(\cC_2 \setminus \cC_1) \cap \cB(\bzero, \Delta(\cC_2))| = |(\cC_2 \setminus \{\bzero\}) \cap \cB(\bzero, \Delta(\cC_2))| = A_{\Delta(\cC_2)}(\cC_2)$. Plugging this back into the above equality, we have
\begin{align*}
\E_{\bs \in \cC_2 \setminus \cC_1}[|\cB(\bs, \Delta(\cC_2)) \cap \cC_1|] &= |\cC_1| \cdot \frac{A_{\Delta(\cC_2)}(\cC_2)}{|\cC_2 \setminus \cC_1|} \geq |\cC_1| \cdot \frac{A_{\Delta(\cC_2)}(\cC_2)}{|\cC_2|}.
\end{align*}
Thus, there must exist a center $\bs \in \cC_2 \setminus \cC_1$ that satisfies (2) (and also (1)) as desired.

Finally, note that $\bs$ can be found by a brute force algorithm that tries every $\bs \in \cC_2$ and check whether (2) is satisfied; this algorithm takes $O(|\cC_1| \cdot |\cC_2| \cdot qN)$ time.
\end{proof}

\subsection{Gadgets based on Reed-Solomon Codes}

In this subsection, we construct gadgets based on the Reed Solomon codes, which are defined below.

\begin{theorem}[Reed-Solomon Codes]
For every prime power $q$, and every $K \leq N \leq q$, there exists a $[N, K, N - K + 1]_q$ linear code, denoted by $\rs_q[N, K]$. The generator matrix of this code can be computed in time $\poly(N, K, q)$. Moreover, for every $q \geq N \geq K_2 > K_1$, we have $\rs_q[N, K_1] \subseteq \rs_q[N, K_2]$.
\end{theorem}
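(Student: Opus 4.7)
The plan is to instantiate $\rs_q[N,K]$ as the classical Reed--Solomon code given by polynomial evaluation. Fix any $N$ distinct elements $\alpha_1,\dots,\alpha_N \in \mathbb{F}_q$ (this is possible since $N \leq q$); for concreteness, I would pick the canonical choice $\alpha_i = i-1$ when $q$ is prime, or any fixed enumeration of $\mathbb{F}_q$ in general. Then define
\[
\rs_q[N,K] \;=\; \bigl\{\,(f(\alpha_1),\dots,f(\alpha_N)) \;:\; f \in \mathbb{F}_q[x],\ \deg f < K\,\bigr\}.
\]
The map $f \mapsto (f(\alpha_1),\dots,f(\alpha_N))$ is $\mathbb{F}_q$-linear on the $K$-dimensional space of polynomials of degree below $K$, so $\rs_q[N,K]$ is a linear subspace of $\mathbb{F}_q^N$.

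Next I would verify the parameters $[N,K,N-K+1]_q$. Block length is $N$ by construction. For the message length, injectivity of the evaluation map follows because a nonzero polynomial of degree $<K$ has at most $K-1 < N$ roots, so it cannot evaluate to zero at all $N$ points $\alpha_i$; hence the image has dimension $K$. For the distance, it suffices by linearity to lower bound the weight of a nonzero codeword: a nonzero polynomial $f$ of degree $<K$ vanishes at at most $K-1$ of the $\alpha_i$, so $(f(\alpha_1),\dots,f(\alpha_N))$ has weight at least $N-K+1$. This matches the Singleton bound, confirming $\rs_q[N,K]$ is MDS with distance exactly $N-K+1$.

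For efficient construction of a generator matrix, take the standard monomial basis $1,x,x^2,\dots,x^{K-1}$ of the polynomial space. The corresponding generator matrix is the $K \times N$ Vandermonde-type matrix whose $(k,i)$ entry is $\alpha_i^{k-1}$. Each entry is computed by repeated multiplication in $\mathbb{F}_q$, so the whole matrix is produced in $\mathrm{poly}(N,K,q)$ time (arithmetic in $\mathbb{F}_q$ is $\mathrm{poly}(\log q)$ per operation, but even a naive bound suffices here).

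Finally, the nesting property $\rs_q[N,K_1] \subseteq \rs_q[N,K_2]$ for $K_1 < K_2$ is immediate from the construction with a fixed choice of evaluation points $\alpha_1,\dots,\alpha_N$: any polynomial $f$ of degree $<K_1$ also has degree $<K_2$, so its evaluation vector lies in both codes. The only subtlety worth flagging is that this inclusion depends on both codes being built from the \emph{same} evaluation points; since the theorem allows us to fix a single canonical choice of $\alpha_1,\dots,\alpha_N$ once and for all for each pair $(N,q)$, no obstacle arises. There is no hard step in this proof; it is a straightforward verification of the textbook Reed--Solomon construction.
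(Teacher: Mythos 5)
Your proposal is correct and is exactly the standard evaluation-code construction that the paper relies on: the paper states this theorem as a classical fact without proof (and implicitly uses the same view in Section~2.1, where codewords are evaluations of low-degree polynomials at the points of $\mathbb{F}_q$). Your verification of the dimension, the distance via root counting, the Vandermonde generator matrix, and the nesting from a fixed set of evaluation points is complete and matches what the paper takes for granted.
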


In order to  find a good center $\bs$, we use the following (well-known) bound on the number of minimum weight codewords of Reed Solomon codes (and more generally MDS codes). For a reference of this bound, see e.g.~\cite[Ch. 11, Theorem 6]{MS77}. 

\begin{lemma} \label{lem:mds}
Let $\cC$ be any linear $[N, K, D]_q$ code that is MDS. Then, $A_{D}(\cC) = \binom{N}{K - 1} \cdot (q - 1)$.
\end{lemma}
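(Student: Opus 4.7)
My plan is to prove this via a direct counting argument that exploits the defining property of MDS codes: every $K$ coordinates form an information set. I will count the number of weight-$D$ codewords by partitioning them according to their support (which, for a weight-$D$ codeword in an $[N, K, D{=}N{-}K{+}1]_q$ code, has size exactly $D$, so its zero-set has size exactly $K-1$).

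Concretely, first I would fix an arbitrary $S \subseteq [N]$ with $|S| = K-1$ and study $\cC_S := \{\bc \in \cC : \bc|_S = \bzero\}$. Pick any $i \in [N]\setminus S$; then $|S \cup \{i\}| = K$, so by the MDS property the evaluation map $\cC \to \mathbb{F}_q^{S \cup \{i\}}$ sending $\bc \mapsto \bc|_{S \cup \{i\}}$ is a bijection (both sides have size $q^K$). The preimages of vectors that vanish on $S$ form exactly $\cC_S$, and there are precisely $q$ such vectors in $\mathbb{F}_q^{S \cup \{i\}}$ (one for each choice of the $i$-th coordinate). Hence $|\cC_S| = q$.

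Next I would observe that any nonzero $\bc \in \cC_S$ has weight at least $\Delta(\cC) = D = N-K+1$, i.e.\ at most $K-1$ zero coordinates; since it already vanishes on all $K-1$ coordinates of $S$, its support must equal exactly $T := [N]\setminus S$. Thus the number of codewords with support exactly $T$ is $|\cC_S| - 1 = q-1$. Finally, summing over all $S$ of size $K-1$ (equivalently, all $T$ of size $D$) gives
\[
A_D(\cC) \;=\; \sum_{S \in \binom{[N]}{K-1}} (q-1) \;=\; \binom{N}{K-1}(q-1),
\]
which is the claimed identity. No step here looks like a real obstacle; the only thing to be careful about is invoking the MDS information-set property correctly when establishing $|\cC_S| = q$, which is where the hypothesis $D = N-K+1$ enters.
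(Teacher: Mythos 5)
Your proof is correct. The paper itself does not prove Lemma~\ref{lem:mds}; it simply cites MacWilliams--Sloane (Ch.~11, Theorem~6), which derives the \emph{complete} weight distribution of an MDS code (typically via inclusion--exclusion over shortened codes). Your argument is a self-contained, more elementary route that extracts only the minimum-weight count: fixing a zero-set $S$ of size $K-1$, using that any $K$ coordinates of an MDS code form an information set to get $|\cC_S| = q$ (the cardinality argument needs injectivity of the projection, which is exactly what $D = N-K+1$ gives: two codewords agreeing on $K$ coordinates differ in at most $N-K < D$ positions, hence coincide), and then observing that every nonzero element of $\cC_S$ must have support exactly $[N]\setminus S$, so each of the $\binom{N}{K-1}$ choices of $S$ contributes exactly $q-1$ codewords of weight $D$, and these contributions are disjoint since a weight-$D$ codeword determines its zero-set. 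What your approach buys is a short, purely combinatorial proof needing nothing beyond the information-set characterization of MDS codes; what the cited theorem buys is the full weight enumerator $A_w(\cC)$ for all $w$, of which the lemma is the special case $w = D$.
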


\subsubsection{The Basic Gadget: Dense Bipartite Graphs with Low Contact Dimensions} \label{sec:gadet-cp}

Now we construct a dense bipartite graph with low contact dimension. A proof sketch of this construction was provided in Section~\ref{sec:overviewRS} and was formally stated as Theorem~\ref{thm:dense-cd}.

\begin{proof}[Proof of Theorem~\ref{thm:dense-cd}]
Let $q_i$ be the $i$-th prime number and let $n_i = (q_i)^{(\lfloor q_i^{\delta}\rfloor)}$; it is simple to see that the sequence $(n_i)_{i \in \mathbb{N}}$ is log-dense. For $q = q_i$, consider the Reed-Solomon codes $\cC_1 = \rs_q[q, K_1]$ and $\cC_2 = \rs_q[q, K_2]$ where $K_1 = \lfloor q^{\delta} \rfloor$ and $K_2 = K_1 + 1$. Applying Lemma~\ref{lem:finding-center} with $(\cC_1, \cC_2)$ implies that there exists a center $\bs \in \cC_2$ such that
\begin{align*}
\frac{|\cB(\bs, \Delta(\cC_2)) \cap \cC_1|}{|\cC_1|} 
&\geq \frac{A_{\Delta(\cC_2)}}{|\cC_2|} \\
(\text{By Lemma}~\ref{lem:mds}) &= \frac{\binom{q}{K_2 - 1} \cdot (q - 1)}{q^{K_2}} \\
&\geq \frac{\left(\frac{q}{K_2 - 1}\right)^{K_2 - 1} \cdot (q - 1)}{q^{K_2}} \\
&= \frac{q - 1}{q} \cdot \left(\frac{1}{K_2 - 1}\right)^{K_2 - 1} \\
&= \frac{q - 1}{q} \cdot \frac{1}{K_1^{K_1}} \\
&\geq \frac{1}{2} \cdot \frac{1}{q^{\delta K_1}} \\
&= \Omega(|\cC_1|^{-\delta}),
\end{align*}
where the last equality follows from the fact that $|\cC_1| = q^{K_1}$.

We construct the graph $G_i = (A_i, B_i, E_i)$ and a realization $\tau$ as follows. Let $A_i = \cC_1, B_i = \{\bs + \bc \mid \bc \in \cC_1\}$ and $E_i = \{(\ba, \bb) \in A_i \times B_i \mid \Delta(\ba, \bb) = \Delta(\cC_2)\}$. $G_i$ can be easily realized by applying the mapping $\psi: \mathbb{F}_q^q \to \{0, 1\}^{q^2}$ from Proposition~\ref{prop:simplex}. More precisely, let $\tau$ be the restriction of $\psi$ on $A_i \cup B_i$. Below we argue about the density of $G_i$ and that $\tau$ is a $2\Delta(\cC_2)$-realization of $G_i$.

\begin{itemize}
\item First, notice that $|E_i|$ is exactly $|\cC_1| \cdot |\cB(\bs, \Delta(\cC_2)) \cap \cC_1| \geq \Omega(|\cC_1|^{2 - \delta}) = \Omega(n_i^{2 - \delta})$. 
\item Second, notice that, for every $\bv_1, \bv_2$ both from $A_i$ or both from $B_i$, we have $\bv_1 - \bv_2 \in \cC_1 \setminus \{\bzero\}$. This implies that $\|\tau(\bv_1) - \tau(\bv_2)\|_0 = 2\Delta(\bv_1, \bv_2) \geq 2\Delta(\cC_1) > 2\Delta(\cC_2)$.
\item Third, for every $\ba \in A_i$ and $\bb \in B_i$, we have $\ba - \bb \in \cC_2 \setminus \{\bzero\}$. Thus, $\Delta(\ba, \bb) \geq \Delta(\cC_2)$. Hence, $\|\tau(\ba) - \tau(\bb)\|_0 = 2\Delta(\ba, \bb) \geq 2\Delta(\cC_2)$. Moreover, the inequality is an equality if and only if $\Delta(\ba, \bb) = \Delta(\cC_2)$, i.e., $(\ba, \bb) \in E_i$ as desired. 
\item Finally, observe that the dimension is $q^2 = (\log n_i)^{O(1/\delta)}$.
\end{itemize}

As for the running time of constructing $G_i$ and $\tau$, observe that the bottleneck is the running time needed to find the center $\bs$; according to Lemma~\ref{lem:finding-center}, $\bs$ can be computed in $O(|\cC_1| \cdot |\cC_2| \cdot q^2) = O(n_i^2 \cdot q^2)$, which is $n_i^{2 + o(1)}$ as desired.
\end{proof}

\subsubsection{A Gadget for Maximum Inner Product}

Now, we build gadgets (stated below) which will be used for proving the inapproximability of \MIP.

\begin{theorem} \label{thm:mip-pre-tensor}
For every $0 < \delta < 1$, there exists a log-dense sequence $(n_i)_{i \in \mathbb{N}}$ such that, for every $i \in \mathbb{N}$, there is a bipartite graph $G_i = (A_i \dot \cup B_i,E_i)$ where $|A_i| = |B_i| = n_i$ and $|E_i| \geq \Omega(n_i^{2 - \delta})$, such that 3-$\ipd(G) = (\log n_i)^{O(1/\delta)}$. Moreover, for all $i \in \mathbb{N}$, a 3-gap-\IP-realization $\tau: A_i \dot \cup B_i \to \{0, 1\}^{(\log n_i)^{O(1/\delta)}}$ of $G_i$ can be constructed in time $n_i^{4 + o(1)}$.
\end{theorem}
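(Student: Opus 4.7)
The plan is to mimic the Reed-Solomon construction used in Theorem~\ref{thm:dense-cd}, but to pick the two code parameters so that the distance of the inner code is at least three times the distance of the outer code. Concretely, I would let $q_i$ be the $i$-th prime and set $K_1 = \lfloor q_i^{\delta/c} \rfloor$ for a constant $c$ (say $c=4$) to be fixed in the analysis, $K_2 = 3K_1 - 2$, and $n_i = q_i^{K_1}$. The sequence $(n_i)$ is then log-dense by the prime number theorem. Let $\cC_1 = \rs_{q_i}[q_i, K_1]$ and $\cC_2 = \rs_{q_i}[q_i, K_2]$; since $K_2 > K_1$ we have $\Delta(\cC_2) = q_i - K_2 + 1 < q_i - K_1 + 1 = \Delta(\cC_1)$, so Lemma~\ref{lem:finding-center} applies and yields a center $\bs \in \cC_2 \setminus \cC_1$. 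Define $A_i = \cC_1$, $B_i = \bs + \cC_1$, and put $(\ba,\bb) \in E_i$ iff $\Delta(\ba,\bb) = \Delta(\cC_2)$. Then take $\tau$ to be the restriction of the map $\psi: \mathbb{F}_{q_i}^{q_i} \to \{0,1\}^{q_i^2}$ from Proposition~\ref{prop:simplex} to $A_i \dot\cup B_i$.

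To verify that $\tau$ is a $3$-gap-\IP-realization, I would use the identity $\langle \psi(\bv_1), \psi(\bv_2)\rangle = q_i - \Delta(\bv_1,\bv_2)$ from Proposition~\ref{prop:simplex}. This gives three inner-product estimates matching the three cases in the definition of $\lambda$-gap-\IP-realization (for $\lambda = 3$, $\beta = K_2 - 1$):
\begin{itemize}
\item On edges $(\ba,\bb) \in E_i$, $\Delta(\ba,\bb) = q_i - K_2 + 1$, so $\langle \tau(\ba),\tau(\bb)\rangle = K_2 - 1 = \beta$.
\item On bipartite non-edges, $\ba-\bb \in \cC_2 \setminus \{0\}$ but $\Delta(\ba,\bb) > \Delta(\cC_2)$, giving $\langle \tau(\ba),\tau(\bb)\rangle < K_2 - 1 = \beta$.
\item For same-side pairs, $\bv_1 - \bv_2 \in \cC_1 \setminus \{0\}$, so $\Delta(\bv_1,\bv_2) \geq q_i - K_1 + 1$, which gives $\langle \tau(\bv_1), \tau(\bv_2)\rangle \leq K_1 - 1 = (K_2 - 1)/3 = \beta/\lambda$.
\end{itemize}
The arithmetic identity $K_2 - 1 = 3(K_1 - 1)$ forced by $K_2 = 3K_1 - 2$ is exactly what gives the factor-$3$ gap.

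The step I expect to require the most care is bounding the edge density. By Lemmas~\ref{lem:finding-center} and~\ref{lem:mds}, the degree of every vertex in $A_i$ is at least $|\cC_1|\binom{q_i}{K_2-1}(q_i-1)/q_i^{K_2}$, so $|E_i| \geq |\cC_1|^2 \cdot \Omega(1/(K_2-1)^{K_2-1})$. The larger choice $K_2 = 3K_1 - 2$ inflates $(K_2-1)^{K_2-1}$ compared with the $(K_1)^{K_1}$ bound in Theorem~\ref{thm:dense-cd}, and this is precisely what forces $K_1$ to be taken smaller, around $q_i^{\delta/c}$ with $c > 3$, rather than $q_i^{\delta}$. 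Taking logs, the condition to obtain density $\Omega(n_i^{2-\delta})$ reduces to $3 \alpha + o(1) \leq \delta$ when $K_1 = q_i^{\alpha}$, which is satisfied comfortably by $\alpha = \delta/4$. This is essentially the only nontrivial computation in the proof.

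Finally I would check dimension and running time. The embedding lands in $\{0,1\}^{q_i^2}$, and since $\log n_i = K_1 \log q_i = q_i^{\delta/4} \log q_i$, we have $q_i \leq (\log n_i)^{O(1/\delta)}$, hence dimension $(\log n_i)^{O(1/\delta)}$. For the running time, the dominant cost is the brute-force search for $\bs$ from Lemma~\ref{lem:finding-center}, which is $O(|\cC_1| \cdot |\cC_2| \cdot q_i^2) = O(n_i \cdot q_i^{3K_1 - 2} \cdot q_i^2) = n_i^{4 + o(1)}$, matching the claim. Generating the edge set $E_i$ and the images $\tau(a)$, $\tau(b)$ is easily done within the same time budget.
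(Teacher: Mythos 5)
Your construction is the same as the paper's --- Reed--Solomon codes $\cC_1 = \rs_q[q,K_1] \subseteq \cC_2 = \rs_q[q,K_2]$, a center obtained from Lemma~\ref{lem:finding-center}, edges defined by distance exactly $\Delta(\cC_2)$, the embedding $\psi$ of Proposition~\ref{prop:simplex}, and the same density, dimension and running-time accounting --- except for the choice of $K_2$, and that choice is where your write-up has a concrete flaw. You set $K_2 = 3K_1 - 2$, so $\beta = K_2 - 1 = 3(K_1 - 1)$, and for same-side pairs you only establish $\left<\tau(\bv_1), \tau(\bv_2)\right> \leq K_1 - 1 = \beta/3$. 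The definition of a $\lambda$-gap-\IP-realization requires the \emph{strict} inequality $\left<\tau(u), \tau(v)\right> < \beta/\lambda$ for same-side pairs, and with your parameters this genuinely fails: $\cC_1$ is MDS, so by Lemma~\ref{lem:mds} it has $\binom{q}{K_1 - 1}(q-1)$ codewords of minimum weight $q - K_1 + 1$, hence there are many pairs in $A_i$ (and in the coset $B_i$) at distance exactly $q - K_1 + 1$, i.e., with inner product exactly $\beta/3$. Thus the map you produce is a $\lambda$-gap-\IP-realization for every $\lambda < 3$ but not for $\lambda = 3$, so the theorem as stated is not proved. The repair is a one-line change: take $K_2 \geq 3K_1$ (the paper uses $K_2 = 3K_1 + 1$), so that $K_1 - 1 < (K_2 - 1)/3$ and condition (iii) holds strictly; none of your other estimates is affected, since the density argument only needs $(K_2 - 1)^{K_2 - 1} \leq q^{\delta K_1 - o(1)}$, which still holds with $K_1 \approx q^{\delta/4}$ (the paper's $K_1 = \lfloor q^{0.3\delta}/3 \rfloor$ plays the same role).

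Everything else matches the paper's proof: the degree bound via Lemma~\ref{lem:finding-center} combined with Lemma~\ref{lem:mds} giving $|E_i| \geq \Omega(n_i^{2-\delta})$, the strict bound for bipartite non-edges, the dimension bound $q^2 = (\log n_i)^{O(1/\delta)}$, the $O(|\cC_1| \cdot |\cC_2| \cdot q^2) = n_i^{4+o(1)}$ running time dominated by the brute-force search for the center, and the log-density of the sequence $(n_i)_{i \in \mathbb{N}}$.
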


\begin{proof}
The proof here is exactly the same as the proof of Theorem~\ref{thm:dense-cd}, except that we will not pick $K_2 = K_1 + 1$, but rather pick $K_2 > 3K_1$ (and $n_i$ accordingly). 

More precisely, let $q_i$ be the $i$-th prime number and let $n_i = (q_i)^{(\lfloor q_i^{0.3\delta}/3\rfloor)}$; it is simple to see that the sequence $(n_i)_{i \in \mathbb{N}}$ is log-dense. For $q = q_i$, consider the Reed-Solomon codes $\cC_1 = \rs_q[q, K_1]$ and $\cC_2 = \rs_q[q, K_2]$ where $K_1 = \lfloor q^{0.3\delta}/3 \rfloor$ and $K_2 = 3K_1 + 1$. Similar to the proof of Theorem~\ref{thm:dense-cd}, applying Lemma~\ref{lem:finding-center} with $(\cC_1, \cC_2)$ implies that there exists $\bs \in \cC_2 \setminus \cC_1$ such that
\begin{align*}
\frac{|\cB(\bs, \Delta(\cC_2)) \cap \cC_1|}{|\cC_1|} 
\geq \frac{q - 1}{q} \cdot \left(\frac{1}{K_2 - 1}\right)^{K_2 - 1}
= \frac{q - 1}{q} \cdot \frac{1}{(3K_1)^{(3K_1)}}
\geq \frac{1}{2} \cdot \frac{1}{q^{\delta K_1}}
= \Omega(|\cC_1|^{-\delta}).
\end{align*}

We construct the graph $G_i = (A_i, B_i, E_i)$ and a realization $\tau$ as follows. Let $A_i = \cC_1, B_i = \{\bs + \bc \mid \bc \in \cC_1\}$ and $E_i = \{(\ba, \bb) \in A_i \times B_i \mid \Delta(\ba, \bb) = \Delta(\cC_2)\}$. $G_i$ can be easily 3-gap-\IP-realized by applying the mapping $\psi: \mathbb{F}_q^q \to \{0, 1\}^{q^2}$ from Proposition~\ref{prop:simplex}. More precisely, let $\tau$ be the restriction of $\psi$ on $A_i \cup B_i$. Below we argue about the density of $G_i$ and that $\tau$ is a $(K_2 - 1, 3)$-gap-\IP-realization of $G_i$.

%Let $X_i = \{\psi(\bc) \mid \bc \in \cC_1\}, Y_i = \{\psi(\bs + \bc) \mid \bc \in \cC_1\}$, and $\rho = K_2$.

\begin{itemize}
\item First, notice that $|E_i|$ is exactly $|\cC_1| \cdot |\cB(\bs, \Delta(\cC_2)) \cap \cC_1| \geq \Omega(|\cC_1|^{2 - \delta}) = \Omega(n_i^{2 - \delta})$. 
\item Second, for every $\bv_1, \bv_2$ both from $A_i$ or both from $B_i$, we have $\bv_1 - \bv_2 \in \cC_1 \setminus \{\bzero\}$. Thus, $\left<\tau(\bv_1), \tau(\bv_2)\right> = q - \Delta(\bv_1, \bv_2) \leq q - \Delta(\cC_1) = K_1 - 1 < (K_2 - 1)/3$.
\item Third, for every $\ba \in A_i$ and $\bb \in B_i$, we have $\ba - \bb \in \cC_2 \setminus \{\bzero\}$. Thus, $\Delta(\ba, \bb) \geq \Delta(\cC_2)$. Hence, $\left<\tau(\ba), \tau(\bb)\right> = q - \Delta(\ba, \bb) \leq q - \Delta(\cC_2) = K_2 - 1$. Moreover, the inequality is an equality if and only if $\Delta(\ba, \bb) = \Delta(\cC_2)$, i.e., $(\ba, \bb) \in E_i$ as desired. 
\item Finally, observe that the dimension is $q^2 = (\log n_i)^{O(1/\delta)}$.
\end{itemize}
Once again, the running time of the construction is $O(|\cC_1| \cdot |\cC_2| \cdot q^2) \leq n_i^{4 + o(1)}$.
\end{proof}

\subsection{Gadgets based on AG Codes}

In this subsection, we  construct gadgets based on algebraic geometric (AG) codes. The definitions of AG Codes are well beyond the scope of this work and we refer the readers to \cite{Stichtenoth08,VNT07} for more thorough introductions.

Once again to find a good center, we need a bound on the number of minimum weight codewords. On this front, we use the following bound\footnote{Note that most of the proof of this bound was from~\cite{AshikhminBV01}; \cite{vluaduct2018lattices} simply makes the bound more explicit, which is more convenience for us.} from~\cite{vluaduct2018lattices}.  Throughout this subsection, we follow the notations from~\cite{vluaduct2018lattices}.

\begin{theorem}[Theorem 4.3 of~\cite{vluaduct2018lattices}] \label{thm:vladut}
Let $q$ be a prime power, $X$ be a curve of genus $g$ over $\mathbb{F}_q$, let $S \subseteq X(\mathbb{F}_q)$ such that $|S| = N$, and let $a \in \mathbb{N}$ with $1 \leq a \leq N - 1$. Then, there exists an $\mathbb{F}_q$-positive divisor $D \geq 0$, $\deg(D) = a$, such that the corresponding AG Code $\cC = \cC(X, D, S)$ has  minimum distance $N - a$ and
\begin{align*}
A_{N - a}(\cC) \geq \frac{\binom{N}{a}}{(\sqrt{q} + 1)^{2g}}.
\end{align*}
\end{theorem}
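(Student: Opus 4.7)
The plan is to prove the theorem by a pigeonhole argument on the Picard group of $X$, combined with Weil's bound on the Jacobian. Recall that for a positive divisor $D$ of degree $a<N$ with $\operatorname{supp}(D)\cap S=\emptyset$, the AG code $\cC(X,D,S)$ is the image of the Riemann--Roch space $L(D)$ under the evaluation map $\operatorname{ev}_S(f)=(f(P))_{P\in S}$. Since $\deg D<N$, this map is injective on $L(D)$, and the standard Goppa argument already shows that the minimum distance is at least $N-a$; so it suffices to exhibit many codewords that attain weight exactly $N-a$.

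The key observation is that, for every $a$-subset $T\subseteq S$, the effective divisor $D_T:=\sum_{P\in T}P$ determines a class $[D_T]\in\operatorname{Pic}^a(X)(\mathbb{F}_q)$. By the Weil conjectures for curves, $|\operatorname{Pic}^0(X)(\mathbb{F}_q)|\le(\sqrt q+1)^{2g}$, and the same bound holds for $\operatorname{Pic}^a$ since it is a torsor over $\operatorname{Pic}^0$. Hence, applying pigeonhole to the map $T\mapsto[D_T]$, some class $c^*$ is hit by at least $\binom{N}{a}/(\sqrt q+1)^{2g}$ subsets. Pick a positive $\mathbb{F}_q$-rational representative $D$ of $c^*$; then for each such $T$ we have $D_T\sim D$, so there is a rational function $f_T$, unique up to $\mathbb{F}_q^\times$ (since two such functions differ by a global unit), with $\operatorname{div}(f_T)=D_T-D$. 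One verifies $f_T\in L(D)$ and that among points of $S$ the function $f_T$ vanishes exactly on $T$, so $\operatorname{ev}_S(f_T)$ is a codeword of $\cC(X,D,S)$ of weight exactly $N-a$. Distinct $T$'s produce codewords with distinct supports, yielding at least $\binom{N}{a}/(\sqrt q+1)^{2g}$ codewords of weight $N-a$. Since $N-a$ saturates Goppa's lower bound, the minimum distance is then exactly $N-a$, and the count on $A_{N-a}(\cC)$ follows.

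The main technical obstacle I anticipate is ensuring that the chosen representative $D$ of $c^*$ has support disjoint from $S$, so that the standard AG-code definition applies and each $\operatorname{ev}_S(f_T)$ is well-defined. A priori the most natural representatives $D_T$ live inside $S$. To work around this, one can either enlarge the pool of $\mathbb{F}_q$-rational points outside $S$ (which are generically available if $|X(\mathbb{F}_q)|$ exceeds $N$) and use Riemann--Roch to move $D_T$ to an equivalent divisor supported there, or invoke a slightly more general formulation of AG codes that permits $\operatorname{supp}(D)$ to meet $S$ at the cost of reinterpreting evaluations at those points. Once this bookkeeping is handled, the remaining ingredients --- the Weil bound on $|\operatorname{Pic}^0|$, the uniqueness of $f_T$ up to scalar, and the support analysis of $f_T$ on $S$ --- are essentially immediate, so the heart of the argument is just the single pigeonhole step above.
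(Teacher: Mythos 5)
This theorem is not proved in the paper at all --- it is imported verbatim as Theorem 4.3 of~\cite{vluaduct2018lattices} (building on~\cite{AshikhminBV01}) --- so there is no in-paper argument to compare against; your reconstruction is, in substance, exactly the standard proof behind that citation: weight-$(N-a)$ codewords of $\cC(X,D,S)$ correspond (up to scalars) to $a$-subsets $T \subseteq S$ with $D_T \sim D$, one pigeonholes the $\binom{N}{a}$ classes $[D_T]$ over the $h$ rational divisor classes of degree $a$, and bounds $h \leq (\sqrt{q}+1)^{2g}$ by Weil, with the Goppa designed distance giving the matching lower bound $N-a$.

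One caution on the technicality you flagged: of your two proposed fixes, the first is not robust --- in the intended applications (e.g.\ Garcia--Stichtenoth towers) one may have $S = X(\mathbb{F}_q)$, so there need not be rational points outside $S$, and a class of degree $a$ need not contain \emph{any} effective divisor with support disjoint from $S$ (its complete linear system can have base points in $S$). The correct resolution is your second option, which is also what the literature does: keep the positive divisor $D$ (e.g.\ $D = D_{T_0}$ for one of the pigeonholed subsets) and define $\cC(X,D,S)$ via an equivalent divisor $D' = D + \operatorname{div}(h)$ with $\operatorname{supp}(D')\cap S=\emptyset$ (such $D'$, not necessarily effective, always exists); the resulting code is unique up to monomial equivalence, so the minimum distance and the count $A_{N-a}$ are unaffected, and the weight-exactly-$(N-a)$ analysis you describe goes through with $D'$ in place of $D$. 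With that bookkeeping done as in your second option, the proof is correct.
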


We also need the following well-known (central) fact about the parameters of AG codes.

\begin{theorem} \label{thm:ag-basic}
Let $q$ be a prime power, $X$ be a curve of genus $g$ over $\mathbb{F}_q$, let $S \subseteq X(\mathbb{F}_q)$ such that $|S| = N$, and let $a \in \mathbb{N}$ with $1 \leq a \leq N - 1$. Then, the corresponding AG Code $C = C(X, D, S)$ is a linear code over $\mathbb{F}_q$ with block length $N$, distance at least $N - a$ and message length $k \geq a - g + 1$.
\end{theorem}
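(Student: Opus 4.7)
The plan is to invoke the standard construction of an algebraic geometric code via Riemann--Roch spaces and then read off each of the three properties (block length, distance, message length) directly from basic facts in the Riemann--Roch theory of curves. Concretely, for the divisor $D \ge 0$ of degree $a$ (with support disjoint from $S$, which can be arranged, e.g., by a small perturbation using a principal divisor when needed), one defines the code
\[
C(X, D, S) \;=\; \{\, (f(P_1), \ldots, f(P_N)) \;:\; f \in L(D)\,\},
\]
where $S = \{P_1, \ldots, P_N\}$ and $L(D) = \{f \in \mathbb{F}_q(X)^\times : (f) + D \ge 0\} \cup \{0\}$ is the Riemann--Roch space associated to $D$. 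The code is $\mathbb{F}_q$-linear and of block length $N$ by construction, since $L(D)$ is an $\mathbb{F}_q$-vector space and the evaluation map is $\mathbb{F}_q$-linear.

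For the message length, the bound comes from the Riemann--Roch theorem, which states that $\dim L(D) - \dim L(K - D) = \deg(D) - g + 1$, where $K$ is a canonical divisor. Since $\dim L(K - D) \ge 0$, one obtains $\dim L(D) \ge \deg(D) - g + 1 = a - g + 1$. The evaluation map $f \mapsto (f(P_1), \ldots, f(P_N))$ is injective when $\deg(D) < N$ (which holds since $a \le N - 1$): a nonzero element of its kernel would be an $f \in L(D)$ vanishing on all of $S$, and the argument used below for the distance shows this is impossible. Hence the image has dimension at least $a - g + 1$, giving the claimed bound on the message length $k$.

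For the minimum distance, suppose some nonzero $f \in L(D)$ produces a codeword of weight $w$, so $f$ vanishes at $N - w$ points of $S$. Then the principal divisor $(f)$ satisfies $(f) + D \ge 0$ and in particular has at least $N - w$ zeros among $S$ (each counted with multiplicity at least $1$). Since $(f)$ has degree $0$, writing $(f) = (f)_0 - (f)_\infty$ we get $\deg(f)_0 \ge N - w$ and $\deg(f)_\infty \le \deg(D) = a$, so $N - w \le a$, i.e., $w \ge N - a$. Thus the minimum distance is at least $N - a$, completing the three claims.

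The main conceptual step is the application of Riemann--Roch; everything else is formal. Since the paper is only citing this as a ``well-known (central) fact,'' no real obstacle arises provided one is willing to import Riemann--Roch as a black box. The only subtle hypothesis to verify in a detailed write-up would be that $D$ can indeed be taken with support disjoint from $S$; this is a standard maneuver (move $D$ in its linear equivalence class by a suitable principal divisor), and does not affect the parameters stated in the theorem.
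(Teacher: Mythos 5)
Your proof is correct, and it is the canonical argument: evaluation of the Riemann--Roch space $L(D)$ at the points of $S$, with the dimension bound $\dim L(D) \geq \deg(D) - g + 1$ from Riemann--Roch, injectivity of the evaluation map from $\deg(D) = a < N$, and the distance bound from the degree-zero property of principal divisors (a nonzero $f \in L(D)$ vanishing at $N-w$ points of $S$ forces $N - w \leq \deg(D) = a$). Note that the paper does not prove this statement at all; it is invoked as a ``well-known (central) fact'' with pointers to standard references on AG codes, so your write-up simply supplies the textbook proof the authors chose to omit. Your handling of the two implicit hypotheses --- that $D$ is a divisor of degree $a$ (not explicitly introduced in the theorem statement) and that its support can be taken disjoint from $S$ by replacing $D$ with a linearly equivalent divisor, which changes the code only by coordinate scalings and hence preserves all three parameters --- is exactly the right way to fill those gaps.
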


Recall also the tower of functions of Garcia and Stichtenoth~\cite{GS96}, whose parameters approach the TVZ bound. We note here that,  it suffices for us to have the genus approaching $\Omega(N/\sqrt{q})$ and there are also other curves that satisfy this.

\begin{theorem}[\cite{GS96}] \label{thm:gs}
For any $\zeta > 0$ and any square of prime $q$, there exists a dense sequence\footnote{A sequence $(N_i)_{i \in \mathbb{N}}$ of increasing positive integers is said to be \emph{dense} if there exists a constant $C \geq 1$ such that $N_{i + 1} \leq C \cdot N_i$ for all $i \in \mathbb{N}$. 
} $(N_i)_{i \in \mathbb{N}}$ such that there exists a curve $X_i$ with genus at most $\frac{N_i}{\sqrt{q} - 1} + \zeta$ where $|X_i(\mathbb{F}_q)| \geq N_i$.
\end{theorem}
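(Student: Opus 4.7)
The plan is to realize the sequence via an explicit tower of function fields, specifically the Garcia--Stichtenoth tower over $\mathbb{F}_q$. Since $q$ is a square of a prime, write $q = p^2$. Define $F_1 = \mathbb{F}_{p^2}(x_1)$ and recursively $F_{n+1} = F_n(z_{n+1})$ where $z_{n+1}$ satisfies the Artin--Schreier-type relation $z_{n+1}^p + z_{n+1} = x_n^{p+1}$, with $x_n := z_n/x_{n-1}$ for $n\ge 2$. The plan is then to show that the smooth projective curve $X_n$ with function field $F_n$ has enough rational points and small enough genus so that $|X_n(\mathbb{F}_q)|/g(X_n) \to \sqrt{q}-1$, i.e., the tower attains the Drinfeld--Vl\u{a}du\c{t} bound.

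First I would verify that $[F_{n+1}:F_n] = p$: the right-hand side $x_n^{p+1}$ is not in the image of the Artin--Schreier operator $\wp(y) = y^p + y$ on $F_n$, so each extension is a genuine degree-$p$ Artin--Schreier extension. This gives $[F_n:F_1] = p^{n-1}$. Next, for the lower bound on rational places, the key observation is that every $\alpha \in \mathbb{F}_{p^2}$ with $\alpha^{p+1} \in \mathbb{F}_p$ (i.e. whose norm lies in $\mathbb{F}_p$) gives rise to a place of $F_1$ at which the Artin--Schreier polynomial splits completely, hence the place splits completely all the way up the tower. A counting argument shows this yields at least $(p^2 - p)p^{n-1}$ rational places in $F_n$, giving $|X_n(\mathbb{F}_q)| \geq (p^2-p)\,p^{n-1}$.

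For the genus bound, I would use the Riemann--Hurwitz formula iteratively along the tower. The ramification in each step is wild (characteristic $p$) but is concentrated only at the zeros and poles of $x_n$. Computing the different exponents via the local analysis of Artin--Schreier extensions yields $g(F_n) \leq \frac{p^n + p^{n-1}}{p-1} + O(p^{n/2})$ (the precise expression differs by parity of $n$). Combining with the place count, $|X_n(\mathbb{F}_q)|/g(X_n) \to p - 1 = \sqrt{q} - 1$. To produce the sequence claimed in the theorem, set $N_n := \lfloor (\sqrt{q}-1)(g(X_n) - \zeta) \rfloor$; then genus $\leq N_n/(\sqrt{q}-1) + \zeta$ holds by construction, and for all sufficiently large $n$ the place count $|X_n(\mathbb{F}_q)| \geq N_n$ since the ratio $|X_n(\mathbb{F}_q)|/g(X_n)$ is eventually within $o(1)$ of $\sqrt{q}-1$. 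Discard the finitely many small $n$ for which this fails. Density of the resulting sequence $(N_n)$ is automatic because $g(F_{n+1})/g(F_n) \to p$, so consecutive ratios $N_{n+1}/N_n$ are bounded by a constant depending only on $q$.

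The main obstacle is the ramification/genus computation in the second paragraph: the wild ramification analysis must be done with enough precision to show that the excess over the Drinfeld--Vl\u{a}du\c{t} bound is $o(g(F_n))$, not merely $O(g(F_n))$. This requires writing down the conductor exponents of each Artin--Schreier step exactly and verifying that the only ramified places in the tower are those lying above the zero and pole of $x_1$; the recursive relation $x_n = z_n/x_{n-1}$ is chosen precisely so that this ramification locus stays small as $n$ grows. Everything else (the split-place count, the passage from the tower levels to a dense sequence, and the incorporation of the additive slack $\zeta$) is straightforward once this step is in hand.
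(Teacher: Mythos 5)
The paper does not actually prove this statement: it is imported wholesale from Garcia--Stichtenoth \cite{GS96}, and only the conclusion (a dense sequence $N_i$, genus at most $\frac{N_i}{\sqrt{q}-1}+\zeta$, at least $N_i$ rational points) is used downstream. So your proposal amounts to re-deriving the cited theorem, and your route --- the recursive Artin--Schreier tower over $\mathbb{F}_{p^2}$ attaining the Drinfeld--Vl\u{a}du\c{t} bound --- is exactly the content of \cite{GS96}; in that sense the strategy is the right one rather than a different one.

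As written, though, the sketch is wrong in precisely the quantitative steps that carry the proof. First, your splitting criterion is vacuous: $\alpha^{p+1}=N_{\mathbb{F}_{p^2}/\mathbb{F}_p}(\alpha)$ lies in $\mathbb{F}_p$ for \emph{every} $\alpha\in\mathbb{F}_{p^2}$, so ``norm in $\mathbb{F}_p$'' selects all $p^2$ elements and cannot yield the $(p^2-p)p^{n-1}$ count; the Garcia--Stichtenoth condition is a trace-type condition (such as $\alpha^p+\alpha\neq 0$), and the real work is showing that the good set is stable under the recursion $x_{n+1}=z_{n+1}/x_n$, so that complete splitting propagates up the tower. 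Second, the genus bound you state, $g(F_n)\le \frac{p^n+p^{n-1}}{p-1}+O(p^{n/2})$, cannot be correct: combined with your own point count $N(F_n)\ge(p^2-p)p^{n-1}$ it would give $N/g$ tending to roughly $(p-1)^2$, far above the Drinfeld--Vl\u{a}du\c{t} upper bound $\sqrt{q}-1=p-1$ that you claim to attain. The true genus of the GS towers is of order $p^n$ (for instance $(p^{n/2}-1)^2$ at even levels of the second tower), i.e.\ larger by a factor of about $p-1$ than your target --- and this wild-ramification computation is exactly the step you defer. Third, the final passage is not automatic: choosing $N_n\approx(\sqrt{q}-1)(g_n-\zeta)$ requires $|X_n(\mathbb{F}_q)|\ge(\sqrt{q}-1)g_n-O(1)$, and knowing only that the ratio $|X_n(\mathbb{F}_q)|/g_n$ is within $o(1)$ of $\sqrt{q}-1$ does not give this: a multiplicative deficit of $o(1)$ still permits an additive deficit growing with $g_n$, which would swamp the constant slack $\zeta$. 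One has to use the explicit GS formulas, which fortunately give $N\ge(\sqrt{q}-1)g$ outright with room to spare (e.g.\ $(p^2-p)p^{n-1}$ versus $(p-1)(p^{n/2}-1)^2$); with that in hand, your choice of $N_i$ (taking a ceiling rather than a floor, so the genus inequality points the right way) and the density argument do go through.
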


Plugging the bound from~\cite{vluaduct2018lattices} into the above family of curves immediately yields the following:

\begin{lemma} \label{lem:ag-pair}
For any $\zeta > 0$ and any square of prime $q$, there exists a dense sequence $(N_i)_{i \in \mathbb{N}}$ such that the following holds. For any $i \in \mathbb{N}$ and any $a_1, a_2 \in \mathbb{N}$ such that $1 \leq a_1 < a_2 \leq N_i - 1$, there exists linear codes $\cC_1 \subseteq \cC_2 \subseteq \mathbb{F}_q^{N_i}$ such that the following holds, where $g_i = \frac{N_i}{\sqrt{q} - 1} + \zeta$:
\begin{itemize}
\item $\cC_1$ has message length at least $a_1 - g_i + 1$ and distance at least $N_i - a_1$.
\item $\cC_2$ has message length at least $a_2 - g_i + 1$ and distance exactly $N_i - a_2$ and
\begin{align} \label{thm:min-codewords-bound}
A_{N_i - a_2}(\cC_2) \geq \frac{\binom{N_i}{a_2}}{(\sqrt{q} + 1)^{2g_i}}.
\end{align}
\end{itemize}  
Moreover, the generator matrices of $\cC_1, \cC_2$ can be computed in $O\left(\binom{N + a_2 - 1}{a_2} \cdot |\cC_2| \cdot \poly(N_i)\right)$ time.
\end{lemma}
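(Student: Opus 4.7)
The plan is to combine the Garcia--Stichtenoth tower (Theorem~\ref{thm:gs}) with Vladut's bound on minimum weight codewords (Theorem~\ref{thm:vladut}), producing $\cC_2$ directly from Vladut and obtaining $\cC_1$ as a sub-code of $\cC_2$ attached to a sub-divisor. The denseness of $(N_i)$ is inherited verbatim from Theorem~\ref{thm:gs}.

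First, apply Theorem~\ref{thm:gs} with the given $\zeta$ and $q$ to produce the dense sequence $(N_i)$ and curves $X_i$ of genus at most $g_i = N_i/(\sqrt q - 1) + \zeta$, and fix a set $S \subseteq X_i(\mathbb{F}_q)$ of size $N_i$ (taking $|X_i(\mathbb{F}_q)|$ strictly larger than $N_i$ if necessary, so that there is room for divisors supported off $S$). Then apply Theorem~\ref{thm:vladut} with $a=a_2$ to obtain an effective divisor $D_2 \geq 0$ of degree $a_2$ (with support disjoint from $S$) so that $\cC_2 := \cC(X_i, D_2, S)$ has distance exactly $N_i - a_2$ and $A_{N_i-a_2}(\cC_2) \geq \binom{N_i}{a_2}/(\sqrt q + 1)^{2g_i}$. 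Because $D_2$ is effective of degree $a_2 > a_1$, I can extract an effective sub-divisor $D_1$ with $0 \leq D_1 \leq D_2$ and $\deg D_1 = a_1$ simply by reducing the coefficients in the expansion $D_2 = \sum_P n_P P$ until the total degree drops to $a_1$. Set $\cC_1 := \cC(X_i, D_1, S)$. The divisor inequality $D_1 \leq D_2$ forces the Riemann--Roch space inclusion $L(D_1) \subseteq L(D_2)$ and hence $\cC_1 \subseteq \cC_2$, which is the key nesting property. The message-length lower bounds $a_j - g_i + 1$ and the distance lower bounds $N_i - a_j$ for $j=1,2$ follow from the standard AG-code estimates in Theorem~\ref{thm:ag-basic}.

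For the running time, once $D_2$ is known the generator matrices of $\cC_1$ and $\cC_2$ are written down in $\poly(N_i)$ time via the evaluation map on $L(D_1), L(D_2)$, so the bottleneck is finding $D_2$. I plan to make Vladut's existence proof constructive by brute-force search: the averaging argument underlying Theorem~\ref{thm:vladut} shows that the average of $A_{N_i - a_2}$ over the $\binom{N_i + a_2 - 1}{a_2}$ effective divisors of degree $a_2$ supported on a fixed $N_i$-point set (disjoint from $S$) already meets the claimed bound, so at least one divisor in this finite family works. For each candidate $D$, I build the generator matrix of $\cC(X_i, D, S)$ and enumerate its $|\cC_2|$ codewords, counting those of weight exactly $N_i - a_2$ in $O(|\cC_2| \cdot \poly(N_i))$ time; multiplying by the number of candidates yields the stated bound.

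The main conceptual obstacle is the nesting requirement $\cC_1 \subseteq \cC_2$: Vladut's theorem only hands us the outer code $\cC_2$ via a specific divisor $D_2$ that has to meet the codeword-count lower bound, and we do not get to choose it freely. The rescue is that this $D_2$ is \emph{effective}, so we can shrink it to a sub-divisor $D_1$ at no cost, and the order-preserving property of Riemann--Roch spaces then delivers the code containment automatically. A minor technical point is the customary disjoint-support convention in defining $\cC(X_i, D, S)$; this is handled by confining the brute-force search to divisors supported outside $S$, which is legitimate because the Garcia--Stichtenoth tower supplies curves with enough rational points.
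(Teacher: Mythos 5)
Your proposal is correct and follows essentially the same route as the paper: take $D_2$ from Theorem~\ref{thm:vladut}, shrink it to an effective sub-divisor $D_1 \leq D_2$ of degree $a_1$ so that $L(D_1) \subseteq L(D_2)$ yields $\cC_1 \subseteq \cC_2$, get the parameter bounds from Theorem~\ref{thm:ag-basic}, and make the construction algorithmic by brute-forcing over the $\binom{N_i + a_2 - 1}{a_2}$ candidate divisors and counting minimum-weight codewords. Your extra care about the disjoint-support convention and restricting the search to divisors supported off $S$ is a harmless refinement of the same argument.
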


\begin{proof}
Let $(N_i)_{i \in \mathbb{N}}$ be a dense sequence as in Theorem~\ref{thm:gs}. From Theorem~\ref{thm:vladut}, there exists an $\mathbb{F}_q$-positive divisor $D_2$ of degree $a_2$ such that the corresponding code $\cC_2 = C(X_i, D_2, S_i)$ (where $S \subseteq X_i(\mathbb{F}_q)$ of size $N_i$) satisfies~\eqref{thm:min-codewords-bound} and that its distance is $N_i - a_2$; from Theorem~\ref{thm:ag-basic}, its message length must also be at least $a_2-g_i+1$. Next, let $D_1$ be any $\mathbb{F}_q$-positive divisor of degree $a_1$ such that $D_2 - D_1 \geq 0$. Let $\cC_1 = C(X_i, D_1, S_i)$ be the corresponding AG code; once again, Theorem~\ref{thm:ag-basic} yields the desired bounds on its message length and distance. Finally, observe that $D_2 - D_1 \geq 0$ implies that $\cC_1 \subseteq \cC_2$ as desired.

The main bottleneck to algorithmically construct such codes lies in finding $D_2$. Nevertheless, the total number of degree-$a_2$ $\mathbb{F}_q$-positive divisor is only $\binom{N_i + a_2 - 1}{a_2}$. We can use brute force to enumerate all of them and check whether the corresponding code satisfies~\eqref{thm:min-codewords-bound}, which further takes $|\cC_2|$ time. This results in the claimed running time.
\end{proof}

Finally, we can now construct our gadgets, by an appropriate setting of parameters. In particular, $a_1$ and $a_2$ will be selected to be close to each other and to both be slightly larger than $N/\sqrt{q}$. This results in the graphs whose degrees are roughly square root of the number of vertices.

\begin{theorem} \label{thm:ag-gadget-new}
For every $0 < \delta < 1$, there exist $\mu > 0$ and a log-dense sequence $(n_i)_{i \in \mathbb{N}}$ such that, for every $i \in \mathbb{N}$, there is a bipartite graph $G_i = (A_i \dot \cup B_i,E_i)$ where $|A_i| = |B_i| = n_i$ and $|E_i| \geq \Omega(n_i^{2 - \delta})$, such that $(1 + \mu)$-$\cd(G) = O(\log n_i)$. Moreover, for all $i \in \mathbb{N}$, a $(\beta, 1 + \mu)$-gap-realization $\tau: A_i \dot \cup B_i \to \{0, 1\}^{O(\log n_i)}$ of $G_i$ can be constructed in time $O(n_i^3)$ for some $\beta = \Theta(\log n_i)$.
\end{theorem}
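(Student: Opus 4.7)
The plan is to mirror the Reed-Solomon-based construction of Theorem~\ref{thm:dense-cd}, replacing the nested Reed-Solomon pair with nested AG codes from Lemma~\ref{lem:ag-pair}. The point of switching is that AG codes allow the block length $N_i$ to be arbitrarily larger than the alphabet size $q$, which lets us fix $q$ as a constant (depending only on $\delta$ and $\mu$) while still having $N_i=\Theta(\log n_i)$; this is what keeps the eventual contact dimension $qN_i$ at $O(\log n_i)$. Reed-Solomon codes cannot play this role: by the Singleton bound any constant-factor ratio $\Delta(\cC_1)/\Delta(\cC_2)\ge 1+\mu$ between two nested RS codes forces one of them to have small dimension, and the resulting ratio (dimension of realization)/$\log|\cC_1|$ blows up.

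Concretely, I would first choose a small $\mu=\mu(\delta)>0$ and then a large square prime $q=q(\delta,\mu)$. Take the dense sequence $(N_i)$ of Lemma~\ref{lem:ag-pair} with $g_i\le N_i/(\sqrt{q}-1)+\zeta$, and pick two rates $\lambda_1<\lambda_2$, both barely above $1/(\sqrt{q}-1)$, with $(1-\lambda_1)/(1-\lambda_2)\ge 1+\mu$. Setting $a_j=\lfloor \lambda_j N_i\rfloor$ and invoking Lemma~\ref{lem:ag-pair} yields nested AG codes $\cC_1\subseteq \cC_2\subseteq \mathbb{F}_q^{N_i}$ with distances at least $N_i-a_1$ and exactly $N_i-a_2$. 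Applying Lemma~\ref{lem:finding-center} produces a center $\bs$ at distance at least $N_i-a_2$ from $\cC_1$, with many codewords of $\cC_1$ at distance exactly $N_i-a_2$ from $\bs$. I set $A_i=\cC_1$, $B_i=\bs+\cC_1$, join each cross-pair at Hamming distance exactly $N_i-a_2$ by an edge, and apply the embedding $\psi$ of Proposition~\ref{prop:simplex} to land in $\{0,1\}^{qN_i}$, doubling Hamming distances so that $\beta=2(N_i-a_2)=\Theta(\log n_i)$. Writing $n_i=|\cC_1|=q^{a_1-g_i+1}$, log-denseness of $(n_i)$ follows from denseness of $(N_i)$ since $\log n_i=\Theta(N_i\log q)$.

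The three gap-realization conditions are then inherited from the distance structure of $\cC_1\subseteq \cC_2$ exactly as in the proof of Theorem~\ref{thm:dense-cd}. Edge pairs sit at distance $\beta$ by construction; cross non-edge pairs differ by an element of $\cC_2\setminus\cC_1$ whose Hamming weight strictly exceeds $N_i-a_2$, so their distance after $\psi$ strictly exceeds $\beta$; and same-side pairs differ by a nonzero codeword of $\cC_1$ of weight at least $N_i-a_1$, so they sit at distance at least $2(N_i-a_1)\ge (1+\mu)\beta$ by our choice of $\lambda_1,\lambda_2$.

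The main obstacle is the density bound $|E_i|\ge \Omega(n_i^{2-\delta})$. Via Lemma~\ref{lem:finding-center} and the explicit bound of Lemma~\ref{lem:ag-pair}, this reduces to showing
\[
\frac{\binom{N_i}{a_2}}{(\sqrt{q}+1)^{2g_i}\,q^{a_2-g_i+1}}\ \ge\ n_i^{-\delta}.
\]
Taking $\log_q$, this becomes a linear inequality in $N_i,a_1,a_2,g_i$ and $\log q$. The key point is that enlarging $q$ helps simultaneously in two ways: the genus ratio $g_i/N_i=1/(\sqrt{q}-1)$ shrinks (so $k_1=a_1-g_i+1$ is essentially $\lambda_1 N_i$), and the factor $\log q$ in $\log n_i=k_1\log q$ amplifies the $\delta\,k_1\log q$ term on the right against the $N_i$-linear penalty on the left coming from $q^{a_2-g_i+1}$ and the genus factor $(\sqrt{q}+1)^{2g_i}$. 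A short calculation then shows that for $q=q(\delta,\mu)$ sufficiently large the inequality holds, and the same largeness of $q$ keeps the total running time within $O(n_i^3)$: the AG code construction of Lemma~\ref{lem:ag-pair} costs $O(\binom{N_i+a_2-1}{a_2}\cdot|\cC_2|\cdot\poly(N_i))$, which is $n_i^{1+O(\mu)+o(1)}$ because $2^{N_i}=n_i^{o(1)}$ and $|\cC_2|=n_i^{1+O(\mu)}$; and the center search of Lemma~\ref{lem:finding-center} costs $O(|\cC_1|\cdot|\cC_2|\cdot qN_i)=O(n_i^{2+O(\mu)}\log n_i)$.
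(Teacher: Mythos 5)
Your construction is the same as the paper's (nested AG codes from Lemma~\ref{lem:ag-pair}, a center via Lemma~\ref{lem:finding-center}, the embedding of Proposition~\ref{prop:simplex}), and the gap-realization conditions, the value of $\beta$, and the dimension bound all go through essentially as you say. The genuine gap is in the density step: the ``short calculation'' you defer does not exist for $\delta<1/2$. Since the Garcia--Stichtenoth curves have genus $g_i=\Theta(N_i/\sqrt{q})$ and you must take $a_2>a_1>g_i$ for $\cC_1$ to be nontrivial, you have $a_2/N_i\ge \Omega(1/\sqrt{q})$, hence $\binom{N_i}{a_2}\le (eN_i/a_2)^{a_2}\le q^{(1/2+o_q(1))a_2}$, while $(\sqrt{q}+1)^{2g_i}\cdot|\cC_2|\ge q^{g_i}\cdot q^{a_2-g_i+1}= q^{a_2+1}$ (the same lower bound holds if you instead bound $|\cC_2|$ by Singleton, as the paper does; note also that Lemma~\ref{lem:ag-pair} only gives a \emph{lower} bound $a_2-g_i+1$ on the message length, so writing $|\cC_2|=q^{a_2-g_i+1}$ is itself unjustified). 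Therefore the fraction certified by Lemma~\ref{lem:finding-center} is at most $q^{-(1/2-o_q(1))a_2}\le n_i^{-(1/2-o_q(1))}$, because $\log_q n_i = a_1-g_i+1 < a_2$. Increasing $q$ does not rescue this: the ``penalty'' is itself $\exp(\Theta(a_2\log q))$, so the $\log q$ factors cancel and the exponent $1/2$ is a hard barrier for this framework. Relatedly, your parameter choice is internally inconsistent: with $\lambda_1$ barely above $1/(\sqrt{q}-1)$ the message length is $(\lambda_1-\tfrac{1}{\sqrt{q}-1})N_i+1$, which is far from ``essentially $\lambda_1 N_i$''.

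Consequently the bound $|E_i|\ge\Omega(n_i^{2-\delta})$ cannot be obtained this way; what the argument actually yields (with the paper's choice $a_2=N_i q^{-0.5(1-\delta)}$, $a_1=a_2-N_i/q$) is a fraction $\Omega(n_i^{-0.5-\delta})$ and hence $|E_i|\ge\Omega(n_i^{1.5-\delta})$. This is exactly what the paper proves and uses downstream (it is why the hardness for approximate \CP in Theorem~\ref{thm:gapCP01} is only $n^{1.5-\varepsilon}$), and improving the density to $\Omega(n_i^{2-\delta})$ with ratio $\Delta(\cC_1)/\Delta(\cC_2)\ge 1+\mu$ is explicitly posed in Section~\ref{sec:open} as an open coding-theoretic question; the ``$2-\delta$'' in the theorem statement as quoted is not what the paper's proof establishes. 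Two smaller points: your running-time estimate uses ``$2^{N_i}=n_i^{o(1)}$'', which is false for constant $q$ (it is $n_i^{\Theta(1/\log q)}$); the paper instead bounds $\binom{N_i+a_2-1}{a_2}\le (2e\sqrt{q})^{a_2}\le|\cC_1|$ for large $q$. And conditions (ii)--(iii) of a gap-realization require strict inequalities, which is why the paper sets $\mu=\frac{\Delta(\cC_1)-1}{\Delta(\cC_2)}-1$ rather than using $\Delta(\cC_1)\ge(1+\mu)\Delta(\cC_2)$ directly.
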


\begin{proof}
Once again, the proof here is similar to those of Theorems~\ref{thm:dense-cd} and~\ref{thm:mip-pre-tensor}, except that we use the (pairs of) AG codes from Lemma~\ref{lem:ag-pair} instead of Reed-Solomon codes. 

%More precisely, let $q \geq 49$ be a sufficiently large even power of prime (to be specified later). Let $$\delta_2 = 1 - \frac{1}{q^{0.5(1 - \delta)}}, \delta_1 = \delta_2 - \frac{2.5\gamma}{\sqrt{q} - 1}, R_1 = 1 - \frac{1.5}{\sqrt{q} - 1} - \delta_1, R_2 = 1 - \frac{1.5}{\sqrt{q} - 1} - \delta_2.$$ We pick $q$ to be large enough so that $\delta_1, \delta_2, R_1, R_2 > 0$ and
%\begin{align} \label{eq:large-q-cond}
%1 - \delta_2 \leq (1 + 0.5\delta) \cdot R_1.
%\end{align}
%Note that it is possible to select such $q$ as~\eqref{eq:large-q-cond} is equivalent to $1 \leq (1 + 0.5\delta)(1 - O(1/q^{0.5\delta}))$.

Let $q \geq 49$ be any sufficiently large square of prime and $\zeta > 0$ be any sufficiently small positive real number (both to be precisely specified later).

Let $(N_i)_{i \in \mathbb{N}}$ be the sequence guarantee by Lemma~\ref{lem:ag-pair}. Let $a_1 = N_i \cdot \left(\frac{1}{q^{0.5(1 - \delta)}} - \frac{1}{q}\right)$ and $a_2 = \frac{N_i}{q^{0.5(1 - \delta)}}$. For convenience, we assume that $a_1$ and $a_2$ are integers\footnote{Note that, for sufficiently large $N_i$, one can take the ceilings (or floors) of the specified values to get integers with negligible affect to the calculations.}. Let $\cC_1, \cC_2$ be the codes given by Lemma~\ref{lem:ag-pair}. The sequence $(n_i)_{i \in \mathbb{N}}$ is defined as $n_i = |\cC_1|$.

%Let $(n_i)_{i \in \mathbb{N}}$ and $((\cC_i^1, \cC_i^2))_{i \in \mathbb{N}}$ be the sequences guaranteed by Theorem~\ref{thm:ag-code}. For convenient, suppose that $\cC_i^1$ and $\cC_i^2$ are $[N, K_1, D_1]_q$ and $[N, K_2, D_2]_q$ linear codes; note that $K_1 \geq R_1 \cdot N, K_2 \geq R_2 \cdot N, D_1 \geq \delta_1 \cdot N$ and $D_2 \geq \delta_2 \cdot N$.

Applying Lemma~\ref{lem:finding-center} to $(\cC_1, \cC_2)$ implies that there exists $\bs \in \cC_2 \setminus \cC_1$ such that
\allowdisplaybreaks
\begin{align}
\frac{|\cB(\bs, \Delta(\cC_2)) \cap \cC_1|}{|\cC_1|}
&\geq \frac{A_{\Delta(\cC_2)}(\cC_2)}{|\cC_2|} \nonumber \\
(\text{From Lemma}~\ref{lem:ag-pair}) &\geq \frac{\binom{N_i}{a_2}}{(\sqrt{q} + 1)^{2g_i} \cdot |C_2|} \nonumber \\
(\text{Singleton Bound}) &\geq  \frac{\binom{N_i}{a_2}}{(\sqrt{q} + 1)^{2g_i} \cdot q^{a_2 + 1}} \nonumber \\
&\geq  \frac{\left(N_i/a_2\right)^{a_2}}{(\sqrt{q} + 1)^{2g_i} \cdot q^{a_2 + 1}} \nonumber \\
&= \frac{q^{0.5(1 - \delta)a_2}}{(\sqrt{q} + 1)^{2g_i} \cdot q^{a_2 + 1}} \nonumber \\
&= \frac{1}{(\sqrt{q} + 1)^{2g_i} \cdot q^{(0.5 + 0.5\delta)a_2 + 1}} \nonumber \\
&= \frac{1}{q^{(0.5 + 0.5\delta + o(1))a_2}} \nonumber \\
&= \frac{1}{q^{(0.5 + 0.5\delta + o(1))(a_1 + o(1))}} \nonumber \\
&= \frac{1}{|\cC_1|^{(0.5 + 0.5\delta + o(1))}} \nonumber \\
&\geq \Omega(|\cC_1|^{-0.5-0.5\delta - o(1)}) \label{eq:tmp1}
\end{align}
where $o(1)$ terms above denote the terms that go to zero as $q \to \infty$ and $\zeta \to 0$. As a result, by picking $q$ sufficiently large and $\zeta$ sufficiently small, the term in~\eqref{eq:tmp1} is at least $\Omega(|\cC_1|^{-0.5-\delta})$.

We construct the graph $G_i = (A_i, B_i, E_i)$ and a realization $\tau$ as follows. Let $A_i = \cC_1, B_i = \{\bs + \bc \mid \bc \in \cC_1\}$ and $E_i = \{(\ba, \bb) \in A_i \times B_i \mid \Delta(\ba, \bb) = \Delta(\cC_2)\}$. $G_i$ can be easily realized by applying the mapping $\psi: \mathbb{F}_q^q \to \{0, 1\}^{q^2}$ from Proposition~\ref{prop:simplex}. More precisely, let $\tau$ be the restriction of $\psi$ on $A_i \cup B_i$. Below we argue about the density of $G_i$ and that $\tau$ is a $(2\Delta(\cC_2), 1 + \mu)$-gap-realization of $G_i$ where $\mu = \frac{\Delta(\cC_1) - 1}{\Delta(\cC_2)} - 1$. Note that
\begin{align*}
\mu \geq \frac{a_2 - a_1 - 1}{N_i - a_2} = \Omega(1/q).
\end{align*}

Let us now check that $G_i$ and  $\tau$ satisfy all the claimed properties:
\begin{itemize}
\item First, notice that $|E_i|$ is exactly $|\cC_1| \cdot |\cB(\bs, \Delta(\cC_2)) \cap \cC_1| \geq \Omega(|\cC_1|^{1.5 - \delta}) = \Omega(n_i^{1.5 - \delta})$. 
\item For any $\bv_1 =\psi(\bc_1), \bv_2 = \psi(\bc_2)$ both from $X_i$ or both from $Y_i$, we have $\bc_1 - \bc_2 \in \cC_1 \setminus \{\bzero\}$. Hence, $\|\bv_1 - \bv_2\|_0 = 2 \cdot \Delta(\bv_1, \bv_2) \geq 2 \cdot \Delta(\cC_1) > (1 + \mu) \cdot (2 \Delta(\cC_2))$.
\item Next, for every $\ba \in A_i$ and $\bb \in B_i$, we have $\ba - \bb \in \cC_2 \setminus \{\bzero\}$. Thus, $\Delta(\ba, \bb) \geq \Delta(\cC_2)$. Hence, $\|\tau(\ba) - \tau(\bb)\|_0 = 2\Delta(\ba, \bb) \geq 2\Delta(\cC_2)$. Moreover, the inequality is an equality if and only if $\Delta(\ba, \bb) = \Delta(\cC_2)$, i.e., $(\ba, \bb) \in E_i$ as desired. 
\end{itemize}
Given $\cC_1, \cC_2$, the running time of constructing $(X_i, Y_i)$ is $O(|\cC_1| \cdot |\cC_2| \cdot q^2) = O(n_i^3)$. Moreover, the running time to construct $\cC_1$ and $\cC_2$, as given by Lemma~\ref{lem:ag-pair}, is
\begin{align*}
O\left(\binom{N + a_2 - 1}{a_2} \cdot |\cC_2| \cdot \poly(N_i)\right) &\leq O\left((e(N+a_2)/a_2)^{a_2} \cdot |\cC_2| \cdot \poly(N_i) \right) \\
&\leq O\left((2e\sqrt{q})^{a_2} \cdot |\cC_2| \cdot \poly(N_i) \right) \\
&\leq O\left(|\cC_1| \cdot |\cC_2| \cdot \poly(N_i) \right) \\
&\leq O(n_i^3),
\end{align*}
where the last two inequalities are true for any sufficiently large $q$.
\end{proof}

\section{Inapproximability of Maximum Inner Product}\label{sec:MIP}

In this section, we prove the hardness of approximating \MIP. Once again, we  show a stronger version (than Theorem~\ref{thm:MIP}) where every point has Boolean coordinates, as stated below.

\begin{theorem} \label{thm:MIP-01}
Assuming \OVH, for every $\varepsilon>0$, there is no algorithm running in $O(n^{2-\varepsilon})$ time for $\gamma$-\MIP even for points in $\{0, 1\}^{n^{o(1)}}$, for any $\gamma\le 2^{(\log n)^{1-o(1)}}$.
\end{theorem}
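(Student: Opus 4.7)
Approach. The plan is to establish Theorem~\ref{thm:MIP-01} from an intermediate $(1+\mu)$-gap hardness result, which we would call Theorem~\ref{thm:MIP-pre-tensor}, and then amplify the gap through tensor powers. This follows the outline already given in Section~\ref{sec:mip-overview}.

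Step 1: the pre-tensor hardness. I would first prove Theorem~\ref{thm:MIP-pre-tensor}: under OVH, for every $\varepsilon>0$ there exists $\mu=\mu(\varepsilon)>0$ such that no $O(n^{2-\varepsilon})$-time algorithm solves $(1+\mu)$-MIP even on points in $\{0,1\}^{D}$ with $D=(\log n)^{O_\varepsilon(1)}$. The proof follows the template of Theorem~\ref{thm:CP-01}, but replaces the gap contact-dimension gadget with the $3$-gap-$\ipd$ gadget $(G_i,\tau)$ of Theorem~\ref{thm:mip-pre-tensor}. Concretely, starting from a BMIP instance I would partition $A,B$ into $n/n'$ blocks of size $n'\sim n^{\Theta(\varepsilon)}$ (matching the graph size), invoke Lemma~\ref{lem:cover} to cover $K_{n',n'}$ by $O((n')^{\varepsilon/2}\log n')$ side-preserving permutations $\{\pi_t\}$, and for each triple $(i,j,t)$ form a MIP sub-instance where each $a$ is encoded by $\tilde a=(\bone-a)^{\otimes M}\circ \tau_t(a)^{\otimes N}$ and each $b$ by $b^{\otimes M}\circ \tau_t(b)^{\otimes N}$ (after padding to make $|b|$ constant). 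The integers $M,N$ are chosen so that for a lucky permutation the witness pair lands on an edge of $G'_{\pi_t}$ and achieves IP exactly $\alpha'$; cross non-edges lose $\Omega(N)$ in gadget IP relative to edges; and same-side pairs are bounded using the $\beta/3$ same-side IP of the gadget. The resulting multiplicative $(1+\mu)$ MIP gap then follows by an argument analogous to the NO-case analysis in the proof of Theorem~\ref{thm:CP-01}.

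Step 2: gap amplification by tensoring. Given a Step~1 instance $\{p_1,\ldots,p_n\}\subseteq \{0,1\}^{D}$ with threshold $\alpha$ and gap $1+\mu$, I would produce the tensored instance $\{p_1^{\otimes t},\ldots,p_n^{\otimes t}\}\subseteq \{0,1\}^{D^t}$ at threshold $\alpha^t$. The identity $\langle v^{\otimes t},u^{\otimes t}\rangle=\langle v,u\rangle^{t}$ (for $\{0,1\}$-vectors in fact coordinatewise $\mathrm{AND}$) sends the $(1+\mu)$ gap to $(1+\mu)^t$. Choosing $t=\lceil(\log n)^{1-o(1)}/\log_2(1+\mu)\rceil$ yields gap $\gamma\ge 2^{(\log n)^{1-o(1)}}$. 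The new dimension $D^t = 2^{O_\varepsilon(t\log\log n)}$ stays at $n^{o(1)}$ provided the $o(1)$ term in $t$ is chosen so that $t\log\log n = o(\log n)$, e.g.\ $t = \log n/(\log\log n)^{2}$. The reduction runs in $n^{1+o(1)}$ time, so any $O(n^{2-\varepsilon})$-time algorithm for $\gamma$-MIP would solve the Step-1 instance in $O(n^{2-\varepsilon})$ time, contradicting Theorem~\ref{thm:MIP-pre-tensor} and hence OVH.

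Main obstacle. Step 2 is essentially mechanical once Step 1 is in place, so the substantive work lies in Step 1. The gadget of Theorem~\ref{thm:mip-pre-tensor} only guarantees the factor-$3$ gap between cross edges and \emph{same-side} pairs, whereas cross non-edges sit only additively below cross edges. One must therefore tune the encoding so that the additive BMIP gap on cross pairs is amplified into a constant multiplicative separation, without allowing the $O(d)$ noise from the original coordinates to bring same-side IPs above the NO threshold. Balancing the replication parameters $M$ and $N$ to hit this regime simultaneously—analogous to but more delicate than the choice of the scaling factor $k$ in the proof of Theorem~\ref{thm:CP-01}—is where the main care is required.
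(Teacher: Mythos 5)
Your Step 2 (tensoring with $t \approx \log n/(\log\log n)^2$, using $\langle \bx^{\otimes t},\by^{\otimes t}\rangle = \langle \bx,\by\rangle^t$) is exactly the paper's derivation of Theorem~\ref{thm:MIP-01} from Theorem~\ref{thm:MIP-pre-tensor}, and it works just as well with a subconstant gap $1+\frac{1}{\log\log n}$ as with your constant $1+\mu$. The problem is Step 1.

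The genuine gap is that you start from (exact) \BMIP and hope to manufacture the multiplicative gap yourself by ``balancing the replication parameters $M$ and $N$,'' tensoring the original coordinates. This cannot work in the stated dimension regime. From exact \BMIP (or from \OV via the complement trick $\bone-\ba$, which is what your encoding suggests), the cross-pair separation is an additive $1$ out of a threshold $\alpha=\Theta(\log n)$, i.e.\ a multiplicative gap of only $1+O(1/\log n)$. Tensoring the original points to the $M$-th power amplifies this to roughly $\bigl(1+O(1/\log n)\bigr)^M$: with $M=O(1)$ the gap stays $1+O(1/\log n)$ (too small even for your Step 2 to reach $2^{(\log n)^{1-o(1)}}$ inside $n^{o(1)}$ dimensions, since you would need $t=(\log n)^{2-o(1)}$ and hence dimension at least $2^{t}$), while $M=\Theta(\log n)$ forces dimension $d^{M}=(\log n)^{\Theta(\log n)}=n^{\Theta(\log\log n)}$, which is superpolynomial and destroys both the dimension bound and the running time of the reduction. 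So your claimed Step-1 statement (constant $\mu(\varepsilon)$ in $(\log n)^{O_\varepsilon(1)}$ dimensions) is not established by this route. The missing ingredient is Rubinstein's hardness of \emph{additive-gap} \BMIP (Theorem~\ref{thm:add-r18}): one must start from an instance that already has an additive gap $\gamma=\log n$ against threshold values of order $d=\log n\sqrt{\log\log n}$. With that starting point no tensoring of the original coordinates is needed at all; plain replication (concatenating $\beta$ copies of the point and $3d$ copies of the gadget vector $\tau_t(\cdot)$ from Theorem~\ref{thm:mip-pre-tensor}) turns the additive gap $\beta\log n$ against $\alpha'\le 4d\beta$ into a multiplicative $1+\Omega(1/\sqrt{\log\log n})\ge 1+\frac{1}{\log\log n}$ gap, while the factor-$3$ same-side guarantee of the gadget keeps same-side inner products below $\tfrac{2}{3}\alpha'$. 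This subconstant gap is all that Theorem~\ref{thm:MIP-pre-tensor} asserts, and it suffices for your Step 2; but without invoking the additive-gap \BMIP hardness, Step 1 as you describe it fails.
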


The proof proceeds in two steps: first, we show hardness of approximating \MIP in low dimension but with a small ($1 + o(1)$) approximation factor. Second, we  use tensor product operation to amplify the gap to be almost polynomial, as stated in Theorem~\ref{thm:MIP-01}. More specifically, in the first step, we prove the following:
\begin{theorem} \label{thm:MIP-pre-tensor}
Assuming \OVH, for every $\varepsilon > 0$, there exists $s_{\varepsilon} > 0$ such that no algorithm running in $O(n^{2-\varepsilon})$  time can solve $\left(1 + \frac{1}{\log \log n}\right)$-\MIP even for points in $\{0, 1\}^{\left(\log n\right)^{s_{\varepsilon}}}$.
\end{theorem}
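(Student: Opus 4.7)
The plan is to adapt the \BCP-to-\CP template of Theorem~\ref{thm:CP-01} to the inner product setting, substituting the $3$-gap-\IP-realization of Theorem~\ref{thm:mip-pre-tensor} for the exact realization used there. Starting from an \OV instance $A,B \subseteq \{0,1\}^d$ with $d = c_{\varepsilon'}\log n$, I would first apply the $T_A, T_B$ encoding from the proof of Theorem~\ref{thm:BCP} to obtain a \BMIP instance $(A', B', \alpha_0)$ in $\{0,1\}^{5d}$ whose points all have weight $2d$; a YES \OV instance then produces a cross pair $(a',b')$ with $\langle a', b'\rangle = d$, while a NO \OV instance forces every cross pair to satisfy $\langle a', b'\rangle \leq d-1$.

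Next, for a small $\delta = \delta(\varepsilon) > 0$ I would invoke Theorem~\ref{thm:mip-pre-tensor} to obtain a balanced bipartite graph $G^* = (A^* \dot\cup B^*, E^*)$ with $|A^*|=|B^*| = n'$ (for $n'$ from the log-dense sequence, chosen so that $n' \leq n^{0.1}$), density $|E^*| \geq \Omega((n')^{2-\delta})$, and a $(K_2{-}1, 3)$-gap-\IP-realization $\tau\colon A^* \dot\cup B^* \to \{0,1\}^{(\log n')^{O(1/\delta)}}$. Using Lemma~\ref{lem:cover}, I obtain $k = O((n')^\delta \log n')$ side-preserving permutations $\pi_1,\dots,\pi_k$ covering $K_{n', n'}$. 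After partitioning $A'$ and $B'$ into $n/n'$ blocks of size $n'$, for each triple $(i, j, t)$ I would build an \MIP instance whose points are $a' \circ (\mathbf{1}_{k_0} \otimes \tau_t(a'))$ for $a' \in A'_i$ and $b' \circ (\mathbf{1}_{k_0} \otimes \tau_t(b'))$ for $b' \in B'_j$, where $k_0$ is an integer scaling factor. Set the \MIP threshold to $\alpha^{\mathrm{new}} = d + k_0(K_2{-}1)$.

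The correctness check is standard: in the YES case, there exist block indices $(i^*, j^*)$ and a permutation $t^*$ under which the witnessing pair $(a^*, b^*)$ is a cross edge of $G^*_{\pi_{t^*}}$, so its inner product in $P^{i^*, j^*, t^*}$ is exactly $\alpha^{\mathrm{new}}$; in the NO case, one verifies, using properties (i)--(iii) of the gadget, that every cross-edge pair has inner product at most $(d{-}1) + k_0(K_2{-}1)$, every cross non-edge pair at most $(d{-}1) + k_0(K_2{-}2)$, and every same-side pair at most $(2d{-}2) + k_0(K_1{-}1) \leq (2d{-}2) + k_0(K_2{-}1)/3$. Choosing $k_0$ large enough that the cross-edge term dominates the same-side bound (roughly $k_0 \gg d/(K_2{-}K_1)$) and tuning the gadget parameters $K_1, K_2$ yields the desired $(1 + 1/\log\log n)$-\MIP instance. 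Running an $n^{2-\varepsilon}$-time algorithm on each of the $(n/n')^2 \cdot k = n^{2 + o(1)}/(n')^{\Omega(\varepsilon)}$ subinstances gives a subquadratic \OV algorithm, contradicting \OVH; the dimension of each subinstance is $5d + k_0 \cdot (\log n')^{O(1/\delta)} \leq (\log n)^{s_\varepsilon}$ by choosing $\delta$ and $s_\varepsilon$ appropriately.

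The main obstacle is the gap analysis. The ``native'' output of the reduction pins the NO cross-edge inner product within $1$ of the YES value, giving a ratio $\alpha^{\mathrm{new}}/\mathrm{MAX\text{-}NO} = 1 + 1/(d{-}1 + k_0(K_2{-}1))$; to reach $1 + 1/\log\log n$ one must carefully choose the gadget parameters $K_1, K_2$ in Theorem~\ref{thm:mip-pre-tensor} to scale mildly with $\log\log n$ and pick $k_0$ to balance the same-side and cross-edge constraints, so that the ``effective base'' of the ratio becomes $\Theta(\log\log n)$. It is precisely this parameter balancing, coordinated with the polylogarithmic dimension budget $(\log n)^{s_\varepsilon}$, that produces the $\log \log n$ factor in the theorem's statement, and it is where most of the real work in the proof lies.
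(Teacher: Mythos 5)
Your reduction skeleton (block partition, covering $K_{n',n'}$ by permuted copies of the gadget via Lemma~\ref{lem:cover}, concatenating scaled copies of the original vectors with scaled gadget vectors, threshold equal to the sum of the two contributions) is the same as the paper's. The genuine gap is in your choice of starting point, and it is exactly the issue you flag in your last paragraph but cannot be fixed by ``parameter balancing.'' Starting from exact \OV via the $T_A,T_B$ encoding gives a bichromatic instance whose YES/NO separation is \emph{additive~$1$} on vectors whose cross inner products are already $\Theta(d) = \Theta(\log n)$. If you repeat the original part $k_1$ times and the gadget part $k_0$ times, the YES value is $k_1 d + k_0(K_2-1)$ while a NO cross-edge pair can reach $k_1(d-1)+k_0(K_2-1)$, so the best multiplicative gap you can certify is at most $\frac{k_1 d + k_0(K_2-1)}{k_1(d-1)+k_0(K_2-1)} \le \frac{d}{d-1} = 1+O(1/\log n)$, no matter how $k_0,k_1,K_1,K_2$ are chosen (increasing the gadget weight only shrinks the gap, and $K_1,K_2$ are in any case tied to $n'$ by the construction of Theorem~\ref{thm:mip-pre-tensor}, not freely tunable to $\log\log n$ scale). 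Concretely, for a NO instance you would need every cross-edge pair to fall below $\alpha'/(1+\tfrac{1}{\log\log n})\approx \alpha'(1-\tfrac{1}{\log\log n})$, which forces $k_1 \gtrsim \alpha'/\log\log n \ge k_1 d/\log\log n$, i.e.\ $d \lesssim \log\log n$ --- impossible since $d=\Theta(\log n)$. Hardness with gap $1+O(1/\log n)$ is strictly weaker than the theorem's $1+\tfrac{1}{\log\log n}$, and it would also break the downstream tensoring in Theorem~\ref{thm:MIP-01}, where $t=\log n/(\log\log n)^2$ copies must amplify the gap to $2^{(\log n)^{1-o(1)}}$ within dimension $n^{o(1)}$.

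The missing idea is that the source problem must itself carry an additive gap that is large relative to its scale. The paper reduces not from exact \OV but from Rubinstein's additive-gap bichromatic problem: $\gamma$-\ABMIP with $\gamma=\log n$ in dimension $d=\log n\sqrt{\log\log n}$ (Theorem~\ref{thm:add-r18}). With scalings $\bone_\beta\otimes\ba$ and $\bone_{3d}\otimes\tau_t(\ba)$ and threshold $\alpha'=\beta\alpha+3d\beta\le 4d\beta$, a NO cross pair is bounded by $\alpha'-\beta\log n\le \alpha'\bigl(1-\tfrac{1}{4\sqrt{\log\log n}}\bigr)$, and the same-side pairs are handled by the factor-$3$ gap of the \IP-gadget, giving at most $(2/3)\alpha'$. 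That additive slack of $\beta\log n$ against scale $O(d\beta)$ is precisely what produces the $1+\tfrac{1}{\log\log n}$ factor; it cannot be manufactured from a unit-gap source by rescaling.
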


Note that the factor $\frac{1}{\log \log n}$ is not significant, and this can be replaced by any $o(1)$ factor; we use this just to make the calculations more concrete. Before we move on to the proof of Theorem~\ref{thm:MIP-pre-tensor}, let us first show how it implies Theorem~\ref{thm:MIP-01}.

\begin{proof}[Proof of Theorem~\ref{thm:MIP-01} from Theorem~\ref{thm:MIP-pre-tensor}]
Let $(P, \alpha)$ be an instance of $\left(1 + \frac{1}{\log \log n}\right)$-\MIP where $P \subseteq \{0, 1\}^{\left(\log n\right)^{s_{\varepsilon}}}$. For $t = \frac{\log n}{(\log \log n)^2}$, define $P' = \{\bx^{\otimes t} \mid \bx \in P\}, \alpha' = \alpha^t$ and $\gamma = \left(1 + \frac{1}{\log \log n}\right)^t = 2^{(\log n)^{1 - o(1)}}$. The dimension of points in $P'$ is $(\log n)^{s_{\varepsilon} \cdot t} = n^{o(1)}$. Moreover, it is easy to check, based on the identity $\left<\bx^{\otimes t}, \by^{\otimes t}\right> = \left<\bx, \by\right>^t$, that $(P', \alpha')$ is a YES (resp. no) instance of $\gamma$-MIP iff $(P, \alpha)$ is a YES (resp. NO) instance of $\left(1 + \frac{1}{\log \log n}\right)$-\MIP.

In other words, if there is an $O(n^{2 - \varepsilon})$ time algorithm for $\gamma$-\MIP in $n^{o(1)}$ dimension, then there also exist an $O(n^{2 - \varepsilon})$ subquadratic time algorithm for $\left(1 + \frac{1}{\log \log n}\right)$-\MIP in $(\log n)^{s_\varepsilon}$ dimension. Thus, Theorem~\ref{thm:MIP-01} follows from Theorem~\ref{thm:MIP-pre-tensor}. 
\end{proof}

The rest of this section is devoted to proving Theorem~\ref{thm:MIP-pre-tensor}. To do so, we consider the  gap-\ABMIP problem.

\begin{definition}[$\gamma$-\ABMIP problem]
 Let $\gamma\ge 0$. In the $\gamma$-\ABMIP problem we are given two sets $A,B$ each of $n$ points in $\{0, 1\}^d$ and an integer $\alpha\in[d]$ as input, and the goal is to distinguish between the following two cases.
\begin{itemize}
\item \textbf{Completeness.} There exists $(a,b)\in A\times B$ such that $\langle a,b\rangle\ge \alpha$.
\item \textbf{Soundness.} For every $(a,b)\in A\times B$ we have  $\langle a,b\rangle< \alpha - \gamma$.
\end{itemize}
\end{definition}

We need the below hardness result from~\cite{R18}. Note that the result is stated differently in~\cite{R18}; for how the result in~\cite{R18} implies the one below, see Section 3.2 of~\cite{C18}.
\begin{theorem}[\cite{R18}] \label{thm:add-r18}
Assuming \OVH, for every $\varepsilon > 0$, there is no algorithm running in $O(n^{2-\varepsilon})$ time for the $\gamma$-\ABMIP problem, for any $d=\omega(\log n)$ and $\gamma=o(d)$.
\end{theorem}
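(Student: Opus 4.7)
Since Theorem~\ref{thm:add-r18} is attributed to Rubinstein~\cite{R18}, with the additive-gap restatement spelled out in Section~3.2 of~\cite{C18}, the plan is to derive the theorem from Rubinstein's hardness of $(1+\delta)$-\BCP in the Euclidean metric and then convert a Hamming-distance gap into an additive inner-product gap.

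The first step is to invoke Rubinstein's base theorem as a black box: under \OVH, for some absolute constant $\delta_0>0$ and every $\varepsilon>0$, no $O(n^{2-\varepsilon})$ algorithm solves $(1+\delta_0)$-\BCP on Boolean vector sets $A,B\subseteq\{0,1\}^{d_0}$ in dimension $d_0=n^{o(1)}$. Rubinstein proves this via a distributed PCP driven by an efficient Merlin--Arthur protocol for approximate set-disjointness: the OV characteristic vectors are encoded by low-degree (Reed--Muller) extensions; each OV vector $u$ is replaced by the Boolean table of Arthur's acceptance indicator indexed by (Merlin message, Arthur randomness); completeness of the protocol forces small Hamming distance between the tables of orthogonal OV pairs, while soundness guarantees that the tables of non-orthogonal pairs are a $(1+\delta_0)$ factor farther apart in Euclidean distance (which equals the square root of Hamming distance for $\{0,1\}$-vectors).

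The second step is the standard conversion of a Hamming-metric gap to an inner-product additive gap via equal-weight padding. Let $k$ be the maximum Hamming weight across $A\cup B$. For each $a\in A$ append the block $\mathbf{1}_{k-w_a}\circ\mathbf{0}_{w_a}\circ\mathbf{0}_k$, and for each $b\in B$ append $\mathbf{0}_k\circ\mathbf{1}_{k-w_b}\circ\mathbf{0}_{w_b}$. One checks that both $a'$ and $b'$ have weight exactly $k$, that the appended 1-blocks sit in disjoint coordinate ranges and thus contribute zero to every cross inner product (so $\langle a',b'\rangle=\langle a,b\rangle$), and therefore
\begin{equation*}
\|a'-b'\|_0 \;=\; 2k - 2\langle a',b'\rangle.
\end{equation*}
Consequently a Rubinstein gap between Hamming distances $R_c$ (completeness) and $R_s=(1+\delta_0)^2 R_c$ (soundness) translates, after the equal-weight normalization, into an additive inner-product gap of magnitude $\gamma_0 = (R_s-R_c)/2 = \Theta(\delta_0 R_c) = \Theta(d_0)$ at threshold $\alpha = k - R_c/2$. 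The dimension of the resulting instance is $d_0 + 2k = n^{o(1)}$, which is $\omega(\log n)$.

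Finally, to match the quantifiers ``for any $d=\omega(\log n)$ and $\gamma=o(d)$'', I would append zero coordinates to every vector until the total dimension reaches the target $d$; this leaves inner products and $\alpha$ unchanged but drives $\gamma_0/d$ to $0$ since $d_0=n^{o(1)}$ whereas $d$ grows asymptotically faster. The main obstacle of the plan is the first step: the distributed-PCP / MA-protocol construction is the technical heart of~\cite{R18}, and verifying that its output vectors either already have essentially equal weight or can be tweaked to admit the equal-weight padding without destroying the $(1+\delta_0)$ Hamming gap is the only nonroutine bit of the subsequent manipulations; the zero-padding and threshold rescaling are standard gadget constructions.
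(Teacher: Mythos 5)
The paper itself does not prove Theorem~\ref{thm:add-r18}; it imports the statement from~\cite{R18} via the restatement in Section~3.2 of~\cite{C18}, so your attempt has to be judged as a self-contained derivation, and as written it has two genuine gaps in exactly the places where the quantifiers of the theorem live. First, the dimension quantifier: you invoke Rubinstein's $(1+\delta_0)$-\BCP hardness with base dimension $d_0 = n^{o(1)}$ and then claim you can zero-pad ``until the total dimension reaches the target $d$'' because $d$ ``grows asymptotically faster'' than $d_0$. That is false: $d=\omega(\log n)$ includes targets such as $d = \log^2 n$, which is far smaller than a typical $n^{o(1)}$ bound, and zero-padding only increases dimension. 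The fix is to invoke the form the paper itself records as Theorem~\ref{thm:R18}, where $d_0 = \Theta_\varepsilon(\log n)$ and $\alpha = \Theta_\varepsilon(\log n)$. Second, and more seriously, the gap quantifier: Theorem~\ref{thm:add-r18} asserts hardness for \emph{every} $\gamma = o(d)$, and since a larger additive gap only makes the promise problem easier, hardness for a gap $\gamma_0$ yields hardness only for $\gamma \le \gamma_0$, not above it. Your construction produces a fixed additive gap $\gamma_0 = \Theta(\delta_0 R_c) = \Theta(\log n)$, and zero-padding leaves $\gamma_0$ unchanged while $d$ grows; so for, say, $d = n^{0.1}$ and $\gamma = d/\log d$ (a legitimate $o(d)$ choice, and note the paper itself applies the theorem with $\gamma = \log n$ which may already exceed $\kappa\alpha/2$) your instances establish nothing. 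What is missing is a coordinate-replication step: concatenate $m = \Theta(d/d_0)$ copies of each vector so that $\alpha$, the gap, and the dimension all scale by $m$, giving gap $\Theta_\varepsilon(d) \ge \gamma$, and only then zero-pad the remainder. Your closing remark that zero-padding ``drives $\gamma_0/d$ to $0$'' has the monotonicity backwards: shrinking the relative gap weakens the statement you are trying to prove.

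The second gap is the weight issue you flag but do not resolve, and it is not a cosmetic one. Your padding makes the \emph{padded} vectors have equal weight $k$, so $\|a'-b'\|_0 = 2k - 2\langle a',b'\rangle$, but the \BCP promise is a threshold on the \emph{original} distances: $\|a-b\|_0 \le R_c$ is equivalent to $\langle a,b\rangle \ge (w_a + w_b - R_c)/2$, a pair-dependent threshold, and since your padding preserves $\langle a',b'\rangle = \langle a,b\rangle$ it does not turn the distance promise into a single inner-product threshold unless the original vectors already share one weight. Rather than auditing Rubinstein's distributed-PCP tables for this property, use the standard complementation trick: map $a \mapsto a \circ \bar{a}$ and $b \mapsto b \circ \bar{b}$, so every vector has weight exactly $d_0$ and $\langle a \circ \bar{a}, b \circ \bar{b}\rangle = d_0 - \|a-b\|_0$ identically; the $(1+\kappa)$ multiplicative Hamming gap then becomes an additive inner-product gap of $\kappa\alpha$ at threshold $d_0 - \alpha$ with no assumption on weights. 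With that substitution plus the replication-then-pad bookkeeping above, your outline does recover Theorem~\ref{thm:add-r18}; as written, the $d$- and $\gamma$-quantifier handling and the unverified equal-weight assumption are real gaps.
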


\begin{proof}[Proof of Theorem~\ref{thm:MIP-pre-tensor}]
For any $\varepsilon > 0$, let $C_{\text{exp}}$ be the constant such that the dimension of $\tau$ in Theorem~\ref{thm:mip-pre-tensor} is at most $(\log n_i)^{C_{\text{exp}}/\varepsilon}$ for $\delta = \varepsilon/2$. We define $s_\varepsilon$ as $2 \cdot C_{\text{exp}} / \varepsilon + 2$.

Suppose contrapositively that there exists $\varepsilon > 0$ and an algorithm $\cA$ that can solve $\left(1 + \frac{1}{\log \log n}\right)$-\MIP of dimension $(\log n)^{s_{\varepsilon}}$ in time $n^{2 - \varepsilon}$. We will construct an algorithm $\cA'$ that solves $(\log n)$-\ABMIP in time $n^{2 - \varepsilon'}$ for some constant $\varepsilon' > 0$ (to be specified below) for $d = (\log n \sqrt{\log \log n})$ dimensions. Together with Theorem~\ref{thm:add-r18}, this implies that \OVH is false, as desired.

Let $C_\varepsilon$ denote the constant of the log-dense sequence from Theorem~\ref{thm:mip-pre-tensor} for $\delta = \varepsilon/2$, and let $\varepsilon'$ be $0.01 \cdot \varepsilon / C_{\varepsilon}$. The algorithm $\cA'$ on input $(A, B, \alpha)$ where $A, B \subseteq \{0, 1\}^d, \alpha \in [d]$ works as follows:
\begin{enumerate}
\item Let $n'$ be the largest number in the sequence from Theorem~\ref{thm:mip-pre-tensor} with $\delta = \varepsilon/2$ s.t. $n' \leq n^{0.1}$.
\item Let $G' = (A' \dot\cup B', E')$ be the graph from Theorem~\ref{thm:mip-pre-tensor} with $|A'| = |B'| = n'$, $|E'| \geq \Omega((n')^{2-\delta})$, and $\tau: A'\dot\cup B' \to \{0, 1\}^{(\log n')^{C_{\text{exp}}/\varepsilon}}$ be a $(\beta, 3)$-gap-\IP-relization of $G'$ where $\beta \in \mathbb{N}$.
\item We use the algorithm from Lemma~\ref{lem:cover} to find $\pi_1, \dots, \pi_k$ where $k = O((n')^{\delta}\log n')$ such that the union of $E_{G'_{\pi_1}}, \ldots, E_{G'_{\pi_k}}$ is $E_{K_{n', n'}}$
\item We assume w.l.o.g. that $n$ is divisible by $n'$. Partition $A$ and $B$ into $A_1, \dots, A_{n/n'}$ and $B_1, \dots, B_{n/n'}$ each of size $n'$. For each $i, j \in [n/n'], t \in [k]$, do the following:
\begin{enumerate}
\item %Let $X'_t = \{x_{\pi_t^{-1}(1)}, \dots, x_{\pi_t^{-1}(n')}\}, Y'_t = \{y_{\pi_t^{-1}(1)}, \dots, y_{\pi_t^{-1}(n')}\}$ be the appropriate permutations of $X', Y'$ that realizes $G'_{\pi_t}$.
Let $\tau_t$ be an appropriate permutation of $\tau$ that $(\beta, 3)$-gap-\IP-realizes $G'_{\pi_t}$.
\item Let $\alpha' = \beta \cdot \alpha + 3d \cdot \beta$, and define $A_i^t, B_j^t$ as
$$
A_i^t = \{(\bone_{\beta} \otimes \ba) \circ (\bone_{3d} \otimes \tau_t(\ba)) \mid \ba \in A_i\}, B_j^t = \{(\bone_{\beta} \otimes \bb) \circ (\bone_{3d} \otimes \tau_t(\bb)) \mid \bb \in B_j\}.
$$
\item Run $\cA$ on $(A_i^t \dot\cup B_j^t, \alpha')$. If $\cA$ outputs YES, then output YES and terminate.
\end{enumerate}
\item If none of the executions of $\cA$ returns with YES, then output NO.
\end{enumerate}

Observe that the bottleneck in the running time of the algorithm is in the executions of $\cA$. The number of executions is $(n/n')^2 \cdot k$ and each execution takes $O((n')^{2 - \varepsilon})$ time. Hence, in total the running time of the algorithm $\cA'$ is $O((n/n')^2 \cdot k \cdot (n')^{2 - \varepsilon}) \leq O(n^2 \log n \cdot (n')^{-\varepsilon/2})$. Now, from the log-density of the sequence from Theorem~\ref{thm:mip-pre-tensor}, we have $n' \geq n^{0.1/C_\varepsilon} = n^{10\varepsilon'/\varepsilon}$. As a result, the running time of $\cA$ is at most $O(n^{2 - 5\varepsilon'}\log n) \leq O(n^{2 - \varepsilon'})$ as desired.

To see the correctness of the algorithm, first observe that the dimensions of vectors in $A_i^t, B_j^t$ are at most $\beta \cdot d + 3 d \cdot (\log n')^{C_{\text{exp}}/\varepsilon}$ which is at most $(\log {n})^{s_{\varepsilon}}$ for any sufficiently large $n$; that is, the calls to $\cA$ are valid. Next, observe that, if $(A, B, \alpha)$ is a YES instance of \ABMIP, there must be $i, j \in [n/n']$ and $\ba^* \in A_i, \bb^* \in B_j$ such that $\left<\ba^*, \bb^*\right>$ is at least $\alpha$. Since $G'_{\pi_1}, \dots, G'_{\pi_k}$ covers $K_{n', n'}$, there must be $t \in [k]$ such that $\left<\tau_t(\ba^*), \tau_t(\bb^*)\right> \geq \beta$. As a result, $\left<(\bone_\beta \otimes \ba^*) \circ (\bone_{3d} \otimes \tau_t(\ba^*), (\bone_\beta \otimes \bb^*) \circ (\bone_{3d} \otimes \tau_t(\bb^*))\right> \geq \beta \cdot \alpha + 3 d \cdot \beta = \alpha'$. Thus, $(A_i^t \cup B_j^t, \alpha')$ is a YES instance for \MIP and $\cA'$ outputs YES as desired.

Finally, let us assume that $(A, B, \alpha)$ is a NO instance of $(\log n)$-\ABMIP. Consider any $i, j \in [n/n']$ and $t \in [k]$. To argue that $(A_i^t \cup B_j^t, \alpha')$ is a NO instance for $\left(1 + \frac{1}{\log \log {n'}}\right)$-\MIP, we have to show that any two points in $A_i^t \cup B_j^t$ have inner product less than $\alpha'/\left(1 + \frac{1}{\log \log {n'}}\right)$. To see this, let us consider two cases.
\begin{enumerate}
\item The two points are either both from $A_i^t$ or both from $B_j^t$. Assume w.l.o.g. that the two points are from $A_i^t$; let them be $(\bone_\beta \otimes \ba) \circ (\bone_{3d} \otimes \tau_t(\ba))$ and $(\bone_\beta \otimes \ba') \circ (\bone_{3d} \otimes \tau_t(\ba'))$. Recall that, from Theorem~\ref{thm:mip-pre-tensor}, we must have $\left<\tau_t(\ba), \tau_t(\ba')\right> < \beta/3$. Moreover, since $\ba, \ba' \in \{0, 1\}^d$, we have $\left<\ba, \ba'\right> \leq d$. Thus, we can conclude that
\begin{align*}
\left<(\bone_\beta \otimes \ba) \circ (\bone_{3d} \otimes \tau_t(\ba)), (\bone_\beta \otimes \ba') \circ (\bone_{3d} \otimes \tau_t(\ba'))\right> &< \beta \cdot d + 3d \cdot (\beta/3) \\
&< (2/3) \cdot \alpha',
\end{align*}
which is less than $\alpha'/\left(1 + \frac{1}{\log \log {n'}}\right)$ for any sufficiently large $n$.
\item One of the point is from $A_i^t$ and the other from $B_j^t$. Let them be $(\bone_\beta \otimes \ba) \circ (\bone_{3d} \otimes \tau_t(\ba))$ and $(\bone_\beta \otimes \bb) \circ (\bone_{3d} \otimes \tau_t(\bb))$. Since $(A, B, \alpha)$ is a NO instance of $(\log n)$-\ABMIP, we must have $\left<\ba, \bb\right> < \alpha - \log n$. Furthermore, from Theorem~\ref{thm:mip-pre-tensor}, we must have $\left<\tau_t(\ba), \tau_t(\bb)\right> \leq \beta$. Combining the two implies that 
\begin{align*}
\left<(\bone_\beta \otimes \ba) \circ (\bone_{3d} \otimes \tau_t(\ba)), (\bone_\beta \otimes \bb) \circ (\bone_{3d} \otimes \tau_t(\bv))\right> &< \beta \cdot (\alpha - \log n) + 3d \cdot \beta \\
&= \alpha' - \beta \cdot (\log n) \\
(\text{Since } \alpha' \leq 4d\beta) &\leq \alpha' \left(1 - \frac{1}{4\sqrt{\log \log n}}\right) \\
&\leq \alpha'\left(1 - \frac{1}{\log \log n'}\right) \\
&\leq \alpha'/\left(1 + \frac{1}{\log \log n'}\right),
\end{align*} 
where the second-to-last inequality holds for any sufficiently large $n$.
\end{enumerate}

Hence, $(A_i^t \dot\cup B_j^t, \alpha')$ must be a NO instance for $\left(1 + \frac{1}{\log \log {n'}}\right)$-\MIP for every $t\in[k]$ and $i,j\in [n/n']$. Thus, $\cA'$ outputs NO as desired.
\end{proof}

%For large enough $d$, we can apply \cite{C18a} result. 

\section{Inapproximability of Closest Pair}\label{sec:ACP}

In this section, we prove the hardness of approximating \CP (Theorem~\ref{thm:gapCP}). As usual, we reduce from the bichromatic version of the problem, and the lower bound for the bichromatic version is stated below:

\begin{theorem}[Rubinstein \cite{R18}]\label{thm:R18}
Assuming \OVH, for every $\varepsilon>0$ there exists $\kappa > 0$ such that there is no algorithm running in $n^{2-\varepsilon}$ time for $(1 + \kappa)$-\BCP in the Hamming metric. Moreover, this holds even for instances $(A, B, \alpha)$ of $(1 + \kappa)$-\BCP when $d = \Theta_\varepsilon(\log n), \alpha = \Theta_\varepsilon(\log n)$ and $A, B \subseteq \{0, 1\}^d$.
\end{theorem}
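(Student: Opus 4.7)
My plan is to prove this via the distributed PCP framework of Abboud--Rubinstein--Williams~\cite{ARW17}, followed by a weight-equalization step to convert an inner-product gap to a Hamming-distance gap, and finally standard coordinate subsampling to bring the dimension down to $\Theta_\varepsilon(\log n)$.

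The first and hardest step is to obtain $n^{2-o(1)}$-hardness (under \OVH) of constant-gap \BMIP on vectors in $\{0,1\}^{n^{o(1)}}$. Starting from an \OV instance on $n$ vectors in $d_0 = O_\varepsilon(\log n)$ dimensions, I apply the distributed PCP compilation to the Aaronson--Wigderson MA communication protocol for set-disjointness: Merlin sends a proof of length $\tilde O(\sqrt{d_0})$, and the verifier, using $\tilde O(\sqrt{d_0})$ public random bits, reads $\tilde O(\sqrt{d_0})$ entries each of $a$ and of $b$ and evaluates a predicate of the form $f_A(a|_Q,\pi,r)\wedge f_B(b|_Q,\pi,r)$; standard amplification brings the soundness to $1-\Omega(1)$ while preserving perfect completeness. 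The compiler then produces vectors $\tilde a,\tilde b\in\{0,1\}^D$ with $D \leq 2^{\tilde O(\sqrt{d_0})} = n^{o(1)}$, whose coordinates are indexed by (proof, random coins), with $\tilde a$'s coordinate equal to $f_A$ evaluated at Alice's local view of $a$ (analogously for $\tilde b$). Because the verifier's predicate factors, $\langle \tilde a, \tilde b\rangle$ equals the count of jointly accepting configurations, so after the standard trick of enumerating Merlin's proofs in parallel blocks, YES instances achieve $\langle\tilde a,\tilde b\rangle\geq C$ while NO instances satisfy $\langle \tilde a,\tilde b\rangle\leq (1-\eta)C$ for a constant $\eta=\eta(\varepsilon)>0$.

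Next I convert the constant-gap \BMIP instance to \BCP by weight equalization. For a target weight $W = (1+\zeta)C$ with a small constant $\zeta$ to be chosen, I append to each $\tilde a$ a block of ones in $A$-only coordinates, and to each $\tilde b$ a block of ones in disjoint $B$-only coordinates, padding with zeros; this leaves the inner product invariant but equalizes all Hamming weights to $W$. The identity $\|\tilde a-\tilde b\|_0 = 2W - 2\langle\tilde a,\tilde b\rangle$ then converts the \BMIP gap ($C$ vs.\ $(1-\eta)C$) into a Hamming-distance gap ($2(W-C)$ vs.\ $2(W-C)+2\eta C$), whose ratio is $1+\eta/\zeta$; choosing $\zeta=\eta$ yields a constant \BCP gap $1+\kappa_0$ with $\kappa_0=\kappa_0(\varepsilon)>0$. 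Finally, to compress the dimension from $n^{o(1)}$ down to $\Theta_\varepsilon(\log n)$, I subsample a uniformly random set $S$ of $\Theta_\varepsilon(\log n)$ coordinates. A Chernoff bound shows that, for any fixed pair, the restricted distance lies within a $(1\pm\kappa_0/4)$-multiplicative factor of $\tfrac{|S|}{D}\|\tilde a - \tilde b\|_0$ with probability $1-n^{-10}$, and a union bound over the $O(n^2)$ pairs preserves a $(1+\kappa)$-gap for some $\kappa=\kappa(\varepsilon)>0$; standard small-bias derandomization removes the randomness if desired. Both $d$ and $\alpha$ end up $\Theta_\varepsilon(\log n)$, as claimed.

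The main obstacle is the first step: executing the distributed PCP compilation so that (i) a constant soundness gap survives, (ii) $D$ stays $n^{o(1)}$, and (iii) the verifier's predicate remains in a product form that is compatible with inner-product encoding. The $\tilde O(\sqrt{d_0})$ communication bound of Aaronson--Wigderson is essentially tight, so any soundness amplification must be done carefully (e.g.\ via a constant number of parallel repetitions of the underlying protocol before compilation) to avoid blowing up the proof length asymptotically. Verifying that the product structure is preserved under amplification and that the encoding of Merlin's proof as coordinate enumeration really does convert ``$\exists\pi$'' into a multiplicative inner-product gap is delicate but standard; this is exactly what Rubinstein carries out in detail in~\cite{R18}.
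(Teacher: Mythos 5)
There is a genuine gap, and it sits exactly where you wave it away as ``delicate but standard.'' First, note that the paper does not prove this theorem at all --- it imports it from \cite{R18} --- so a blind proof must reconstruct Rubinstein's argument, and the route you sketch is essentially the \cite{ARW17} argument, which provably cannot reach the stated parameters. In the distributed-PCP compilation of the Aaronson--Wigderson protocol, a point on Alice's side (for a fixed proof $\pi$) is the indicator, over all $R$ random strings, of her message in $\{0,1\}^{2^{c}}$ with $c=\tilde{\Theta}(\sqrt{d_0})$, so its weight is $R$ and the completeness value is $C=R$; but Bob's point is the indicator of all $(r,\text{message})$ pairs he would accept, and its weight is typically $\Theta(R\cdot 2^{c})\gg C$. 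Your weight-equalization step therefore cannot set $W=(1+\zeta)C$: padding only increases weights, so you are forced to take $W\gtrsim R\cdot 2^{c}$, and the identity $\|\tilde a-\tilde b\|_0 = 2W-2\langle \tilde a,\tilde b\rangle$ then turns a constant \emph{multiplicative} inner-product gap into a Hamming gap of only $1+\Theta(\eta\, 2^{-c}) = 1+2^{-\tilde{\Omega}(\sqrt{\log n})} = 1+o(1)$. This is precisely the barrier that kept \cite{ARW17} at a $1+o(1)$ factor in $n^{o(1)}$ dimensions; no choice of $\zeta$ fixes it.

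The second step fails for the same reason: subsampling $\Theta_\varepsilon(\log n)$ coordinates can only preserve a \emph{constant} multiplicative gap (the additive gap $\eta C \approx \eta D 2^{-c}$ scales under subsampling to far below a single coordinate), so it cannot upgrade the $1+o(1)$ gap you would actually have, and the Chernoff argument also needs pairwise distances that are a constant fraction of the ambient dimension, which the AW-based vectors do not guarantee. The missing idea --- and the actual content of \cite{R18} --- is a new MA protocol in which Alice's communication per check is $O(1)$ bits (achieved via constant-rate, constant-alphabet algebraic-geometric codes with a multiplication property, rather than the $\tilde{O}(\sqrt{d_0})$-bit AW messages), the number of checks is $O_\varepsilon(\log n)$, and Merlin's proof has length $\delta'\log n$ so that enumerating proofs costs only an $n^{\delta'}$ blowup in the number of points. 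With those parameters $2^{c}=O(1)$, the completeness value is a constant fraction of the weights and of the dimension, and the constant Hamming gap in $d=\Theta_\varepsilon(\log n)$ with $\alpha=\Theta_\varepsilon(\log n)$ follows; without such a protocol your outline establishes only the weaker \cite{ARW17}-type statement, not the theorem as claimed.
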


Again, we prove below the inapproximability of the gap-\CP problem  for Boolean vectors. Clearly, this immediately implies Theorem~\ref{thm:gapCP}.

\begin{theorem}\label{thm:gapCP01}
Assuming \OVH, for every $\varepsilon>0$, there exists $\theta > 0$ and $c > 0$ such that there is no algorithm running in $n^{1.5-\varepsilon}$ time for $(1 + \theta)$-\CP in the Hamming metric for point-set in $\{0, 1\}^{c \cdot \log n}$.
\end{theorem}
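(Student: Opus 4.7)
The plan is to reduce $(1+\kappa)$-BCP (whose subquadratic Hamming-metric hardness in $\Theta_\varepsilon(\log n)$ dimension is given by Theorem~\ref{thm:R18}) to $(1+\theta)$-CP using the AG-code gadget of Theorem~\ref{thm:ag-gadget-new}, following exactly the scaffolding of the proof of Theorem~\ref{thm:CP-01} but now extracting a multiplicative (rather than merely additive) separation from the $(1+\mu)$-gap guarantee on same-side pairs. Fix $\varepsilon>0$, invoke Theorem~\ref{thm:R18} with parameter $\varepsilon$ to obtain $\kappa=\kappa(\varepsilon)>0$ with instances $(A,B,\alpha)$ in dimension $d=\Theta_\varepsilon(\log n)$ and $\alpha=\Theta_\varepsilon(\log n)$, and invoke Theorem~\ref{thm:ag-gadget-new} with $\delta=\varepsilon/4$ to obtain a log-dense sequence $(n'_i)$, a constant $\mu>0$, and, for each $i$, a $(\beta,1+\mu)$-gap realization $\tau\colon A^*\dot\cup B^*\to\{0,1\}^{O(\log n'_i)}$ of a bipartite graph $G^*=(A^*\dot\cup B^*,E^*)$ with $|A^*|=|B^*|=n'_i$, $|E^*|\geq\Omega((n'_i)^{1.5-\delta})$, and $\beta=\Theta(\log n'_i)$.

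Given a $(1+\kappa)$-BCP instance $(A,B,\alpha)$ with $|A|=|B|=n$, pick $n'=n'_i$ to be the largest element of the sequence with $n'\leq n^{0.1}$; by log-density, $n'=n^{\Omega(1)}$. Apply Lemma~\ref{lem:cover} to get $k=O((n')^{0.5+\delta}\log n')$ side-preserving permutations $\pi_1,\dots,\pi_k$ whose images of $E^*$ cover $E_{K_{n',n'}}$. Partition $A$ and $B$ into blocks $A_1,\dots,A_{n/n'}$ and $B_1,\dots,B_{n/n'}$ of size $n'$, and for every triple $(i,j,t)$ form the $(1+\theta)$-CP instance $(\tilde A_i^t\cup\tilde B_j^t,\;\alpha')$ with $\alpha'=\alpha+K\beta$, where
\begin{align*}
\tilde A_i^t &= \{\,a\circ(\bone_K\otimes\tau_t(a))\mid a\in A_i\,\},\\
\tilde B_j^t &= \{\,b\circ(\bone_K\otimes\tau_t(b))\mid b\in B_j\,\},
\end{align*}
$\tau_t$ is the appropriate permutation of $\tau$ that $(\beta,1+\mu)$-gap realizes $G^*_{\pi_t}$, and $K$ is a sufficiently large constant to be specified. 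The overall algorithm answers YES iff at least one subinstance does.

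For correctness, in the YES case the close BCP pair $(a^*,b^*)$ lies in some $A_i\times B_j$ and some $\pi_t$ turns it into an edge of $G^*_{\pi_t}$, whence $\|\tau_t(a^*)-\tau_t(b^*)\|_0=\beta$ and $\|\tilde a^*-\tilde b^*\|_0\leq\alpha+K\beta=\alpha'$. In the NO case, any same-side pair in $\tilde A_i^t\cup\tilde B_j^t$ has distance $>K(1+\mu)\beta$ by the gap realization, while any cross pair has distance $\geq\|a-b\|_0+K\beta>(1+\kappa)\alpha+K\beta$, using $\|a-b\|_0>(1+\kappa)\alpha$ and $\|\tau_t(a)-\tau_t(b)\|_0\geq\beta$. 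Since $\alpha,\beta=\Theta(\log n)$ with a bounded ratio $r:=\alpha/\beta$, choosing $K$ a large enough constant and then $\theta=\min(\mu/4,\;\kappa/(2(1+Kr)))$ gives both $K(1+\mu)\beta>(1+\theta)(\alpha+K\beta)$ (same-side) and $(1+\kappa)\alpha+K\beta>(1+\theta)(\alpha+K\beta)$ (cross), so every pair in the NO case lies at distance strictly greater than $(1+\theta)\alpha'$. The total dimension of the CP points is $d+K\cdot O(\log n')=c\log n$ for some constant $c=c(\varepsilon)$, as required.

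The running-time accounting mirrors that of Theorem~\ref{thm:CP-01}: a hypothetical $n^{1.5-\varepsilon}$ algorithm for $(1+\theta)$-CP would solve the $(n/n')^2\cdot k=O(n^2(n')^{-1.5+\delta+o(1)})$ subinstances in total time $n^2(n')^{\delta-\varepsilon+o(1)}\log n$; since $n'=n^{\Omega(1)}$ and $\delta=\varepsilon/4<\varepsilon$, this is $n^{2-\Omega(1)}$, contradicting Theorem~\ref{thm:R18}. The main obstacle, and the sole reason we only obtain an $n^{1.5-\varepsilon}$ lower bound (rather than $n^{2-\varepsilon}$), is the density ceiling of Theorem~\ref{thm:ag-gadget-new}: the AG-code-based gadget yields only $(n')^{1.5-o(1)}$ edges because the known lower bound on the number of minimum-weight codewords of AG codes from~\cite{AshikhminBV01,vluaduct2018lattices} pays a factor of $(\sqrt q+1)^{2g}$ relative to the MDS count, which forces a square-root loss once the density/distance trade-off is optimized; any asymptotic improvement there would immediately push the exponent from $1.5$ towards $2$.
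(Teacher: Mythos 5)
Your proposal is correct and is essentially the paper's own proof: the same reduction from Rubinstein's $(1+\kappa)$-\BCP hardness via the AG-code gadget of Theorem~\ref{thm:ag-gadget-new}, the same covering of $K_{n',n'}$ by Lemma~\ref{lem:cover}, the same concatenation-with-repetition construction, and the same running-time accounting explaining the $n^{1.5-\varepsilon}$ barrier. The only deviations are cosmetic: you fix the multiplicities to $(1,K)$ with $K$ a large constant rather than choosing $r_1,r_2$ inside the paper's sandwich $\frac{\theta}{\kappa-\theta}\cdot\frac{\beta}{\alpha}\le\frac{r_1}{r_2}\le\frac{\mu-\theta}{1+\theta}\cdot\frac{\beta}{\alpha}$, and in your explicit formula for $\theta$ the ratio $r$ should be $\beta/\alpha$ rather than $\alpha/\beta$ (the cross-side requirement is $\theta<\kappa\alpha/(\alpha+K\beta)$, which your stated value can exceed when $\alpha/\beta$ is small and $K$ is large), but with that trivial correction your "choose $K$ large, then $\theta$ small" argument goes through verbatim.
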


\begin{proof}
Assume towards a contradiction that there exists an $\varepsilon > 0$ and an algorithm $\cA$ that, for every $\theta > 0$ solves $(1 + \theta)$-\CP of dimension $c \cdot \log n$ in time $O(n^{1.5 - \varepsilon})$, where $c:=c(\varepsilon)$ is a constant that will be specified later. Let $\varepsilon' > 0$ be a small constant (depending on $\varepsilon$) that we will specify below and let $\kappa = \kappa(\varepsilon')$ be as in Theorem~\ref{thm:R18}. We construct below an algorithm $\cA'$ that solves $(1 + \kappa)$-\BCP in time $O(n^{2 - \varepsilon'})$ for any instance $(A, B, \alpha)$ such that $A, B \subseteq \{0, 1\}^{O(\log n)}$ and $\alpha = \Theta(\log n)$. Together with Theorem~\ref{thm:R18}, this implies that \OVH is false, as desired.

Let $C_\varepsilon$ denote the constant of the log-dense sequence from Theorem~\ref{thm:ag-gadget-new} for $\delta = \varepsilon/2$, and let $\varepsilon'$ be $0.01 \cdot \varepsilon / C_{\varepsilon}$. Let $\mu$ be the constant from Theorem~\ref{thm:ag-gadget-new}. Select $\theta > 0$ be a sufficiently small constant such that $\frac{\mu - \theta}{1 + \theta} > \frac{\theta}{\kappa - \theta}$.

The algorithm $\cA'$ on $(A, B, \alpha)$ where $A, B \subseteq \{0, 1\}^{O(\log n)}, \alpha = \Theta(\log n)$ works as follows:
\begin{enumerate}
\item Let $n'$ be the largest number in the sequence from Theorem~\ref{thm:ag-gadget-new} with $\delta = \varepsilon/2$ s.t. $n' \leq n^{0.1}$.
\item Let $G' = (A' \dot\cup B', E')$ be the graph from Theorem~\ref{thm:ag-gadget-new} with $|A'| = |B'| = n'$, $|E'| \geq \Omega((n')^{1.5-\delta})$, and $\tau: A'\dot\cup B' \to \{0, 1\}^{O(\log n')}$ be a $(\beta, 1 + \mu)$-gap-relization of $G'$ where $\beta \in \mathbb{N}$ and $\beta = \Theta(\log n')$.
\item We use the algorithm from Lemma~\ref{lem:cover} to find $\pi_1, \dots, \pi_k$ where $k = O((n')^{0.5 + \delta}\log n')$ such that the union of $E_{G'_{\pi_1}}, \ldots, E_{G'_{\pi_k}}$ is $E_{K_{n', n'}}$
\item We assume w.l.o.g. that $n$ is divisible by $n'$. Partition $A$ and $B$ into $A_1, \dots, A_{n/n'}$ and $B_1, \dots, B_{n/n'}$ each of size $n'$. For each $i, j \in [n/n'], t \in [k]$, do the following:
\begin{enumerate}
\item Let $\tau_t$ be an appropriate permutation of $\tau$ that $(\beta, 1 + \mu)$-gap-realizes $G'_{\pi_t}$.
\item Pick $r_1, r_2$ such that 
\begin{align} \label{eq:sel}
\frac{\theta}{\kappa - \theta} \cdot \frac{\beta}{\alpha} \leq \frac{r_1}{r_2} \leq \frac{\mu - \theta}{1 + \theta} \cdot \frac{\beta}{\alpha}.
\end{align}
Notice that the upper and lower bounds are $\Theta(1)$ and they are also $\Theta(1)$ apart. Hence, we can pick these $r_1, r_2$ so that $r_1, r_2 = \Theta(1)$. 
\item Let $\alpha' = r_1 \cdot \alpha  + r_2 \cdot \beta$ and define $A_i^t, B_j^t$ as
\begin{align*}
A_i^t = \{(\bone_{r_1} \otimes \ba) \circ (\bone_{r_2} \otimes \tau_t(\ba)) \mid \ba \in A_i\}, B_j^t = \{(\bone_{r_1} \otimes \bb) \circ (\bone_{r_2} \otimes \tau_t(\bb)) \mid \bb \in B_j\}.
\end{align*}
\item Run $\cA$ on $(A_i^t \cup B_j^t, \alpha')$. If $\cA$ outputs YES, then output YES and terminate.
\end{enumerate}
\item If none of the executions of $\cA$ returns with YES, then output NO.
\end{enumerate}

Observe that the bottleneck in the running time of the algorithm is in the executions of $\cA$. The number of executions is $(n/n')^2 \cdot k$ and each execution takes $O((n')^{1.5 - \varepsilon})$ time. Hence, in total the running time of the algorithm $\cA'$ is $O((n/n')^2 \cdot k \cdot (n')^{1.5 - \varepsilon}) \leq O(n^2 \log n \cdot (n')^{-\varepsilon/2})$. Now, from the log-density of the sequence from Theorem~\ref{thm:ag-gadget-new}, we have $n' \geq n^{0.1/C_\varepsilon} = n^{10\varepsilon'/\varepsilon}$. As a result, the running time of $\cA$ is at most $O(n^{2 - 5\varepsilon'}\log n) \leq O(n^{2 - \varepsilon})$ as desired.

To see the correctness of the algorithm, first observe that the dimensions of vectors in $A_i^t, B_j^t$ are at most $r_1 \cdot \alpha + r_2 \cdot \beta$ which is $O(\log {n'})$; that is, the calls to $\cA$ are valid. Next, observe that, if $(A, B, \alpha)$ is a YES instance of \BCP, there must be $i, j \in [n/n']$ and $\ba^* \in A_i, \bb^* \in B_j$ such that $\|\ba^* - \bb^*\|_0$ is at most $\alpha$. Since $G'_{\pi_1}, \dots, G'_{\pi_k}$ covers $K_{n', n'}$, there must be $t \in [k]$ such that $\|\tau_t(\ba^*) - \tau_t(\bb^*)\|_0 \leq \beta$. As a result, $\|((\bone_{r_1} \otimes \ba^*) \circ (\bone_{r_2} \otimes \tau_t(\ba^*)) - ((\bone_{r_1} \otimes \bb^*) \circ (\bone_{r_2} \otimes \tau_t(\bb^*)))\|_0 \leq r_1 \cdot \alpha + r_2 \cdot \beta = \alpha'$. Thus, $(A_i^t \cup B_j^t, \alpha')$ is a YES instance for \CP and $\cA'$ outputs YES as desired.

Finally, let us assume that $(A, B, \alpha)$ is a NO instance of $(1 + \kappa)$-\BCP. Consider any $i, j \in [n/n']$ and $t \in [k]$. To argue that $(A_i^t \cup B_j^t, \alpha')$ is a NO instance for $(1 + \theta)$-\CP, we have to show that any two points in $A_i^t \cup B_j^t$ have distance more than $\alpha'$. To see this, let us consider two cases.
\begin{enumerate}
\item Both points are either from $A_i^t$ or from $B_j^t$. Assume w.l.o.g. that they are from $A_i^t$; let them be $(\bone_{r_1} \otimes \ba) \circ (\bone_{r_2} \otimes \tau_t(\ba))$ and $(\bone_{r_1} \otimes \ba') \circ (\bone_{r_2} \otimes \tau_t(\ba'))$. Recall that, from the definition of $X'_t$ and Theorem~\ref{thm:ag-gadget-new}, we must have $\|\tau_t(\ba) - \tau_t(\ba')\|_0 > (1 + \mu) \cdot \beta$. Thus, the Hamming distance between the two points is more than $r_2 \cdot (1 + \mu) \cdot \beta \geq (1 + \theta) \cdot \alpha'$, where the inequality comes from our choice of $r_1, r_2$.
\item One of the point is from $A_i^t$ and the other from $B_j^t$. Let them be $(\bone_{r_1} \otimes \ba) \circ (\bone_{r_2} \otimes \tau_t(\ba))$ and $(\bone_{r_1} \otimes \bb) \circ (\bone_{r_2} \otimes \tau_t(\bb))$. Since $(A, B, \alpha)$ is a NO instance of $(1 + \kappa)$-\BCP, $\|\ba - \bb\|_0 > (1 + \kappa) \cdot \alpha$. Moreover, from definition of $\tau_t$, we must have $\|\tau_t(\ba) - \tau_t(\bb)\|_0 \geq \beta$. Combining the two implies that the distance between $(\bone_{r_1} \otimes \ba) \circ (\bone_{r_2} \otimes \tau_t(\ba))$ and $(\bone_{r_1} \otimes \bb) \circ (\bone_{r_2} \otimes \tau_t(\bb))$ is more than $r_1 \cdot (1 + \kappa) \cdot \alpha + r_2 \cdot \beta \geq (1 + \theta) \cdot \alpha'$, where the inequality is once again from our choice of $r_1, r_2$. 
\end{enumerate}
Hence, $(A_i^t \dot\cup B_j^t, \alpha')$ must be a NO instance for $(1+\theta)$-\CP for every $t\in[k]$ and $i,j\in [n/n']$. Thus, $\cA'$ outputs NO as desired.
\end{proof}

\section{Discussion and Open Questions}\label{sec:open}

It remains open to completely resolve Open Questions~\ref{open:Q1} and~\ref{open:Q2}. It is still possible that our framework can be used to resolve these problems: we just need to construct gadgets with better parameters! In particular, to resolve Question~\ref{open:Q1}, we have to improve the dimension bound in Theorem~\ref{thm:dense-cd} to $O_{\delta}(\log n_i)$. For Question~\ref{open:Q2}, we just have to improve the bound on the number of pairs in (3) of Theorem~\ref{thm:ag-gadget-new} to $\Omega(n_i^{2 - \delta})$. Following our observation from Lemma~\ref{lem:finding-center}, this motivates us to ask the following purely coding theoretic question:

\begin{open}
For every $0<\delta<1$, are there linear codes $\cC_1\subseteq \cC_2\subseteq \mathbb F_q^N$ both of block length $N$ over alphabet $\mathbb{F}_q$ such that the following holds:
\begin{itemize}
\item $\Delta(\cC_1)\ge (1+f(\delta))\cdot \Delta(\cC_2)$, for some $f:(0,1)\to (0,1)$.
\item $|A_{\Delta(\cC_2)}(\cC_2)|/|\cC_2|\ge |\cC_1|^{-\delta}$.
\end{itemize}  
\end{open}

Apart from the aforementioned questions, Rubinstein \cite{R18} pointed out an interesting obstacle, aptly dubbed the ``triangle inequality barrier'', to obtain fine-grained lower bounds against 3-approximation algorithms for \BCP (see Open Question 3 in \cite{R18}). In the case of \CP, this barrier turns out to be against 2-approximation algorithms as noted in \cite{DKL18}. We reiterate this below as an open problem to be resolved:

\begin{open}
Can we show that assuming \SETH, for some constant $\varepsilon > 0$, no algorithm running in time $n^{1+\varepsilon}$ can solve 2-\CP in \emph{any} metric when the points are in $\omega(\log n)$ dimensions?
\end{open}

Another interesting direction is to extend the hardness of \MIP to the $k$-vector generalization of the problem, called $k$-\MIP. In $k$-\MIP, we are given a set of $n$ points $P \subseteq \mathbb{R}^d$ and we would like to select $k$ distinct points $\ba_1, \dots, \ba_k \in P$ that maximizes
\begin{align*}
\left<\ba_1, \dots, \ba_k\right> := \sum_{j \in [d]} (\ba_1)_j \cdots (\ba_k)_j.
\end{align*}

It is known that the $k$-chromatic variant of $k$-\MIP is hard to approximate (see Appendix B of~\cite{KLM18}) but this is not known to be true for $k$-\MIP itself. Our approach seems quite compatible to tackling this problem as well; in particular, if we can construct a certain (natural) generalization of our gadget for \MIP, then we would immediately arrive at the inapproximability of $k$-\MIP even for $\{0, 1\}$-entries vectors. The issue in constructing this gadget is that we are now concerned about agreements of more than two vectors, which does not correspond to error-correcting codes anymore and some additional tools are needed to argue  for this more general case.

It should be noted that the hardness of approximating $k$-\MIP for $\{0, 1\}$-entry vectors is equivalent to the \emph{one-sided $k$-biclique} problem~\cite{Lin15}, in which a bipartite graph is given and the goal is to select $k$ vertices on the right that  maximize the number of their common neighbors. The equivalence can be easily seen by viewing the coordinates as the left-hand-side vertices and the vectors as the right-hand-side vertices. The one-sided $k$-biclique is shown to be \Wone{}-hard to approximate by Lin~\cite{Lin15} who also showed a lower bound of $n^{\Omega(\sqrt{k})}$ for the problem assuming \ETH. If the generalization of our gadget for $k$-\MIP works as intended, then this lower bound can be improved to $n^{\Omega(k)}$ under \ETH and even $n^{k - o(1)}$ under \SETH{}.

The one-sided $k$-biclique is closely related to the (two-sided) $k$-biclique problem, where we are given a bipartite graph and we wish to decide whether it contains $K_{k, k}$ as a subgraph. The $k$-biclique problem was consider a major open problem in parameterized complexity (see e.g.,~\cite{DowneyF13}) until it was shown by Lin to be \Wone{}-hard~\cite{Lin15}. Nevertheless, the running time lower bound known is still not tight: currently, the best lower bound known for this problem is $n^{\Omega(\sqrt{k})}$ both for the exact version (under \ETH{})~\cite{Lin15} and its approximate variant (under \GapETH{})~\cite{CCKLMNT17}. It remains an interesting open question to close the gap between the above lower bounds and the trivial upper bound of $n^{O(k)}$. Progresses on the one-sided $k$-biclique problem could lead to improved lower bounds for $k$-biclique problem too, although several additional steps have to be taken care of.

\subsection*{Acknowledgements}
We are grateful to Madhu Sudan for extremely helpful and informative discussion about AG codes; in particular, Madhu pointed us to~\cite{vluaduct2018lattices}. We thank Bundit Laekhanukit and Or Meir for general discussions, and the Simons Institute for their wonderful work-space. Finally, we would like to thank Lijie Chen for sharing \cite{CW19}, and Orr Paradise for useful comments on an earlier draft of this manuscript. 

\addcontentsline{toc}{section}{\protect\numberline{}References}%

\bibliographystyle{alpha}
\bibliography{ref}

\appendix

\section{Lower Bound on Gap Closest Pair in Edit Distance Metric}\label{sec:edit}

In this section we prove Theorem~\ref{thm:CP-Edit}. The proof is almost identical to Rubinstein's \cite{R18} proof for the \OVH-hardness of gap-\BCP in the edit distance metric and uses the following technical tool established in \cite{R18}.

\begin{lemma}[Rubinstein \cite{R18}]\label{lem:edit}
For large enough $d\in\mathbb{N}$, there is a function $\zeta:\{0,1\}^d\to \{0,1\}^{d'}$, where $d'=O(d\log d)$, such that for all $a,b\in\{0,1\}^d$ the following holds for some constant $\lambda>0$:
$$
\left|\ed(\zeta(a),\zeta(b))-\lambda\cdot \log d\cdot \|a-b\|_0\right|= o(d').
$$
Moreover, for any $a\in \{0,1\}^d$, $\zeta(a)$ can be computed in $2^{o(d)}$ time.
\end{lemma}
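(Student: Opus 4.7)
The plan is to construct $\zeta$ by replacing each bit $a_i$ of the input with a length-$\Theta(\log d)$ block that simultaneously encodes the bit and tags its position $i$, so that any optimal edit-distance alignment between $\zeta(a)$ and $\zeta(b)$ is, up to $o(d')$ lower-order terms, forced to be the block-by-block alignment. Concretely, I would set $\ell = \Theta(\log d)$, choose a ``position marker'' $m_i \in \{0,1\}^{\Theta(\log d)}$ for each $i \in [d]$ (for instance, the binary encoding of $i$ surrounded by carefully chosen synchronization padding) from a set of pairwise-edit-far markers, and a bit encoder $\Phi\colon \{0,1\} \to \{0,1\}^{\Theta(\log d)}$ with $\ed(\Phi(0),\Phi(1)) = \Theta(\log d)$ (e.g., $\Phi(b) = b^{\ell}$). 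The final map is
\[
\zeta(a) \;=\; m_1 \cdot \Phi(a_1) \cdot m_2 \cdot \Phi(a_2) \cdot \ldots \cdot m_d \cdot \Phi(a_d),
\]
so $d' = O(d \log d)$ and $\zeta(a)$ is computable in time $\poly(d) = 2^{o(d)}$.

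For the upper bound $\ed(\zeta(a),\zeta(b)) \le \lambda \log d \cdot \|a-b\|_0 + o(d')$, I would use the obvious block-by-block alignment: leave matching blocks untouched, and on a mismatched coordinate $i$ convert $\Phi(a_i)$ into $\Phi(b_i)$ at cost $\Theta(\log d)$. Summing over the $\|a-b\|_0$ mismatched coordinates yields exactly $\lambda \log d \cdot \|a-b\|_0$ for the right choice of $\lambda$, with no additive slack even needed on this side.

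The main obstacle is the lower bound $\ed(\zeta(a),\zeta(b)) \ge \lambda \log d \cdot \|a-b\|_0 - o(d')$. The strategy is to take an optimal edit script and extract from it a monotone block-assignment $\sigma\colon [d] \to [d]$ recording, for each source block $i$, which target block ``covers'' the bulk of its characters after alignment. One then shows by a hybrid/potential argument that (a) if $\sigma$ is the identity on all but an $o(d)$ fraction of blocks, then each identity-aligned mismatched coordinate contributes $\Theta(\log d)$ to the cost while identity-aligned matched coordinates contribute $0$, giving $\ge \lambda \log d \cdot \|a-b\|_0 - o(d\log d)$; and (b) every block where $\sigma(i) \neq i$ forces either a marker $m_i$ to be converted into (a portion of) $m_{\sigma(i)}$, or a block's worth of characters to be inserted/deleted, each incurring $\Omega(\log d)$ cost — so a non-identity $\sigma$ on an $\Omega(1)$ fraction of blocks already gives total cost $\Omega(d \log d)$, dominating the target bound. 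The delicate part is case (a), because partial shifts of an alignment can appear to save cost; the key idea is to show that the marker layer behaves like a ``synchronization string,'' so that any shift accumulates amortized cost proportional to the length over which it persists.

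Assembling these pieces: choose the markers $\{m_i\}_{i \in [d]}$ to be a collection of binary strings of length $\Theta(\log d)$ with pairwise edit distance $\Omega(\log d)$ and with additional anti-synchronization structure (obtained from explicit constructions, e.g., via Reed-Solomon-based synchronization strings), so that the potential argument in (a) goes through cleanly. Once both bounds are in place, we get
\[
\bigl| \ed(\zeta(a),\zeta(b)) - \lambda \log d \cdot \|a-b\|_0 \bigr| \;=\; o(d\log d) \;=\; o(d'),
\]
as required. I expect essentially all the technical difficulty to live in the marker-shift accounting for the lower bound; the construction, the upper bound, and the running-time guarantee are routine.
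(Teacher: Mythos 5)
Your plan has a genuine gap in the lower bound, and it is a gap of constants, not just of missing details. The lemma demands a \emph{two-sided} estimate with the \emph{same} constant $\lambda$: since your block-by-block upper bound gives $\ed(\zeta(a),\zeta(b))\le \ell\cdot\|a-b\|_0$ with $\ell=\lambda\log d$ exactly, the lower bound must show that \emph{every} alignment pays at least $(1-o(1))\,\ell$ per mismatched coordinate on average whenever $\|a-b\|_0=\Theta(d)$ (for smaller $\|a-b\|_0$ the $o(d')$ slack makes the bound vacuous). Your case (a) only promises ``$\Theta(\log d)$ per identity-aligned mismatch'' and your case (b) only promises total cost $\Omega(d\log d)$ when $\sigma$ is non-identity on a constant fraction of blocks; neither statement implies $\ed\ge\lambda\log d\cdot\|a-b\|_0-o(d')$, because $\lambda\log d\cdot\|a-b\|_0$ is itself $\Theta(d\log d)$ in the hard regime, so an $\Omega(d\log d)$ bound with an unspecified (possibly smaller) constant does not ``dominate'' it. Moreover the marker accounting is not merely delicate but highly sensitive to the marker choice: with your suggested ``binary encoding of $i$ plus padding,'' consecutive markers are edit-close, and for $a=(01)^{d/2}$, $b=(10)^{d/2}$ the shift-by-one alignment (match block $i+1$ of $\zeta(a)$ to block $i$ of $\zeta(b)$, which are identical, and convert $m_{i+2}$ into $m_{i+1}$) drives $\ed(\zeta(a),\zeta(b))$ down to $O(d)$ while the lemma requires $(1-o(1))\ell d=\Theta(d\log d)$. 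Blocking all such cheats forces every displaced or mismatched block to cost essentially the full $\ell$, not just $\Omega(\log d)$ (e.g.\ markers substantially longer than data blocks with relative edit distance large enough that $\delta_0\ell'\ge\ell$, plus control of partial matches of $0^\ell/1^\ell$ into the $0$'s and $1$'s inside markers); this quantitative accounting is exactly the unproven core of your argument, and your sketch does not identify it as a constraint on the construction.

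For contrast, the paper (following Rubinstein) sidesteps this entirely: $\zeta$ replaces the $i$-th bit $u_i$ by a string $s_{i,u_i}$ drawn uniformly (and then via $\log d$-wise independence, deterministically) from $\{0,1\}^{O(\log d)}$, independently for each pair $(i,x)\in[d]\times\{0,1\}$. There is no hand-designed marker layer and no exact upper bound; instead \emph{both} directions follow from known concentration results for the edit distance of (concatenations of) random strings, and $\lambda$ is by definition the concentration constant of that random model, so the constants on the two sides match automatically. If you want to pursue a fully explicit marker-based construction, you would need to (i) fix the constant mismatch by proving a per-block cost of $(1-o(1))\lambda\log d$ for every deviation from the identity alignment, and (ii) specify markers for which the shift and partial-borrowing attacks above provably fail --- neither of which is routine.
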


At a high level, $\zeta$ picks a random $O(\log d)$-bit string $s_{i,x}$ uniformly and independently for every $(i,x)\in[d]\times \{0,1\}$, and for every vector $u\in\{0,1\}^d$, replaces the $i^{\text{th}}$ coordinate $u_i$ by $s_{i,u_i}$. The claims in the lemma statement follow by the known concentration bounds on the edit distance of random strings \cite{M89,L09}. This construction is further efficiently derandomized  by using $\log d$-wise independent strings  \cite{K13}.

\begin{proof}[Proof of Theorem~\ref{thm:CP-Edit}]
We show that if there exists an algorithm $\cA$ running in time $O(n^{1.5 - \varepsilon})$ for some $\varepsilon>0$ that can solve $(1+\delta)$-\CP in the edit distance metric for some $\delta>0$ over point-sets in $\{0,1\}^{d'}$, then $\cA$ can be used to solve $(1+\delta-o(1))$-\CP in the Hamming metric in time $O(n^{1.5 - \varepsilon})$ over point-sets in $\{0,1\}^d$, where $d'=O(d\log d)$.   Together with Theorem~\ref{thm:gapCP01}, this implies that \OVH is false, as desired.

Let $(P,\alpha)$ be an instance of $(1+\delta)$-\CP in the Hamming metric over point-sets in $\{0,1\}^{d}$. It is clear\footnote{In fact, one can design a $2^{\alpha}\cdot n\log n$ time algorithm for \CP in the Hamming metric, and therefore to assume \OVH, we require $\alpha=\Omega(d)$.} from the proofs of Theorem~\ref{thm:R18} and Theorem~\ref{thm:gapCP01} that $\alpha=\Omega(d)$. We now define an instance of $(P',\alpha':=(1+o(1))\cdot \lambda\log d\cdot \alpha)$ of $(1+\delta-o(1))$-\CP in the edit distance metric as follows.
Recall the function $\zeta$ from Lemma~\ref{lem:edit} and define the set $P'=\{\zeta(p)\mid p\in P\}$. Notice that for every pair of distinct points $p,q\in P$, we have $\left|\ed(\zeta(p),\zeta(q))=\lambda\cdot \log d\cdot \|p-q\|_0 \right|= o(d')$. In other words if we had a pair of distinct points $p,q$ in $P$ such that $\|p-q\|_0\le \alpha$ then, $\ed(\zeta(p),\zeta(q))\le\lambda \log d\cdot  \alpha + o(d')=(1+o(1))\cdot \lambda \log d\cdot  \alpha$ and suppose for all pairs of distinct points $p,q\in P$ we had $\|p-q\|_0> (1+\delta)\cdot \alpha$ then $\ed(\zeta(p),\zeta(q))>\lambda \log d\cdot  (1+\delta)\cdot \alpha - o(d')>(1+\delta -o(1))\lambda \log d\cdot  \alpha$, since $\alpha=\Omega(d)$.
This completes the analysis of the completeness and soundness cases, and we can conclude that running $\cA$ on input $(P',\alpha')$ solves the instance $(P,\alpha)$ of $(1+\delta)$-\CP in the Hamming metric.
\end{proof}

\section{Covering Biclique By Isomorphic Graphs: Proof of Lemma~\ref{lem:cover}} \label{app:biclique-cover}

Below we prove Lemma~\ref{lem:cover}. The proof strategy is similar to how the greedy approximation algorithms for the set cover problem are analyzed: we show that at each step, we can pick a graph isomorphic to $G$ that covers at least $|E_G|/n^2$ fraction of the remaining edges of the biclique. By doing so, we guarantee that the process ends in $O(\log n) \cdot n^2/|E_G|$ steps. Note however that, there are exponential number of isomorphisms and thus we cannot simply enumerate all isomorphisms to find one that covers the desired fraction of uncovered edges. Nevertheless, it is not hard to see that we can use the method of conditional expectation to find one such isomorphism in polynomial time. This is formalized below.

\begin{lemma} \label{lem:derandomized}
For any two bipartite graphs $G = (A \dot\cup B, E_G)$ and $H = (A \dot\cup B, E_H)$, there exists a side-preserving permutation $\pi: A\dot\cup B \to A\dot\cup B$ such that $$|E_H \cap E_{G_{\pi}}| \geq \frac{|E_G| \cdot |E_H|}{|A| \cdot |B|}.$$
Moreover, such a permutation  $\pi$ can be found (deterministically) in $O((|A| + |B|)^4)$ time.
\end{lemma}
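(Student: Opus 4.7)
The plan is to establish the existence statement by the probabilistic method and then derandomize via the method of conditional expectations.

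For the existence part, I would identify a side-preserving permutation $\pi$ with an independent pair $(\sigma_A, \sigma_B)$ of permutations of $A$ and $B$, and choose each uniformly at random. For any fixed edge $(a,b) \in E_G$, the marginal of $(\sigma_A(a), \sigma_B(b))$ is uniform on $A \times B$, so $\Pr[(\sigma_A(a), \sigma_B(b)) \in E_H] = |E_H|/(|A|\cdot|B|)$. By linearity of expectation,
$$\mathbb{E}\bigl[|E_H \cap E_{G_\pi}|\bigr] \;=\; \sum_{(a,b) \in E_G} \Pr\bigl[(\sigma_A(a), \sigma_B(b)) \in E_H\bigr] \;=\; \frac{|E_G| \cdot |E_H|}{|A| \cdot |B|},$$
so some $\pi$ must attain at least this expected value.

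To derandomize, I will build $\sigma_A$ and $\sigma_B$ one image at a time, always making an assignment that does not decrease the conditional expectation. Suppose that at some intermediate step $\sigma_A$ is defined on $A' \subseteq A$ with image $A'' \subseteq A$, and $\sigma_B$ on $B' \subseteq B$ with image $B'' \subseteq B$; set $U_A = A \setminus A''$ and $U_B = B \setminus B''$. A short counting argument over permutations extending the current partial assignment shows that, under a uniformly random such extension, for any edge $(a,b) \in E_G$ the probability that $(\sigma_A(a), \sigma_B(b)) \in E_H$ is
\begin{itemize}
\item $\mathbf{1}[(\sigma_A(a), \sigma_B(b)) \in E_H]$, if $a \in A'$ and $b \in B'$;
\item $\deg_H(\sigma_A(a), U_B)/|U_B|$, if $a \in A'$ and $b \notin B'$;
\item $\deg_H(\sigma_B(b), U_A)/|U_A|$, if $a \notin A'$ and $b \in B'$;
\item $|E_H \cap (U_A \times U_B)|/(|U_A|\cdot |U_B|)$, if $a \notin A'$ and $b \notin B'$.
\end{itemize}
Summing these contributions over $(a,b) \in E_G$ gives the conditional expectation in closed form. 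At each step I pick the next unassigned source (say $a \in A \setminus A'$), try each candidate target in $U_A$, and commit to the one that maximizes the resulting conditional expectation; such a choice cannot decrease the expectation, since the average over the $|U_A|$ candidates equals its current value. After $|A|+|B|$ such extensions the permutation is fully specified, and the resulting $|E_H \cap E_{G_\pi}|$ is at least the initial expectation $|E_G|\cdot |E_H|/(|A|\cdot |B|)$.

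For the running time, there are $O(|A|+|B|)$ extension steps, each considering $O(|A|+|B|)$ candidate targets; maintaining the running quantities $\{\deg_H(\cdot, U_B)\}$, $\{\deg_H(\cdot, U_A)\}$, $|E_H \cap (U_A \times U_B)|$, and the contribution of already-fixed edges incrementally, each conditional-expectation evaluation reduces to an $O(|E_G|) = O((|A|+|B|)^2)$ sum, yielding $O((|A|+|B|)^4)$ overall. The only real (and mild) subtlety will be justifying the closed form above in the fourth case: it rests on the fact that for a uniformly random extension of a partial permutation, the marginal distribution of a single pair of unassigned sources remains uniform on the product of unassigned targets, which is a straightforward double-counting over extensions of the partial assignment.
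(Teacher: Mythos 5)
Your proposal is correct and follows essentially the same route as the paper: a uniformly random side-preserving permutation gives the expectation bound by linearity (you sum over $E_G$ while the paper sums over $E_H$, which is the same computation by symmetry), followed by the method of conditional expectations assigning one image at a time, with the same $O((|A|+|B|)^4)$ accounting. Your explicit four-case closed form for the conditional probabilities is a more detailed spelling-out of what the paper leaves implicit, but it is not a different argument.
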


\begin{proof}
Notice that, if we pick $\pi|_A$ and $\pi|_B$ randomly among all permutations of $A$ and $B$ respectively, then, for a fixed $(a, b) \in E_H$, the probability that $(a, b)$ belongs to $E_{G_{\pi}}$ is $\frac{|E_G|}{|A| \cdot |B|}$. Thus,
\begin{align*}
\E_{\pi}\left[|E_H \cap E_{G_{\pi}}|\right] = \frac{|E_G| \cdot |E_H|}{|A| \cdot |B|}.
\end{align*}
This proves the existence part of the claim. To deterministically find such a $\pi$, we use the method of conditional expectation. Suppose $A \dot\cup B = \{1, \dots, n\}$. The algorithm works as follows:
\begin{enumerate}
\item Let $V_{\text{assigned}} \leftarrow \emptyset$.
\item For $i = 1, \dots, n$:
\begin{enumerate}
\item If $i \in A$, let $V_{\text{candidate}} = A \setminus V_{\text{assigned}}$. Otherwise, if $i \in B$, let $V_{\text{candidate}} = B \setminus V_{\text{assigned}}$.
\item For each $k \in V_{\text{candidate}}$, compute the conditional expectation: $$\E_{\pi}\left[|E_H \cap E_{G_\pi}| \,\middle\vert\, \pi(i) = k \wedge \left(\bigwedge_{j=1}^{i-1} \pi(j) = \pi^*(j)\right)\right].$$ Let $k^*$ be the maximizer for the above conditional expectation. We set $\pi^*(i) = k^*$.
\end{enumerate}
\item Output $\pi^*$.
\end{enumerate}
It is simple to see that the conditional expectation never decreases as we fill in the permutation. As a result, we must have $|E_H \cap E_{G_\pi}| \geq \frac{|E_G| \cdot |E_H|}{|A| \cdot |B|}$ as desired. Moreover, it is easy to see that the conditional expectation can be computed in time $O(|A| \cdot |B|)$ because, for each edge $(a, b) \in E_H$, we can compute the probability that $(a, b) \in E_{G_{\pi}}$ in $O(1)$ time. As a result, the overall running time of the algorithm is $O((|A| + |B|)^4)$.
\end{proof}

Finally using Lemma~\ref{lem:derandomized}, we prove Lemma~\ref{lem:cover} using the strategy outlined earlier in this section.

\begin{proof}[Proof of Lemma~\ref{lem:cover}]
We describe below an algorithm for finding $\pi_1, \dots, \pi_k$. It works as follows.
\begin{enumerate}
\item Let $k \leftarrow 0$.
\item While $E_H := E_{K_{n,n}} \setminus \underset{i\in [k]}{\cup}E_{G_{\pi_i}}$ is non-empty, do the following:
\begin{enumerate}
\item Let $k \leftarrow k + 1$.
\item Let $H = (A\dot\cup B, E_H)$.
\item Use the algorithm from Lemma~\ref{lem:derandomized} to find $\pi_k$ such that $|E_H \cap E_{G_{\pi_k}}| \geq |E_H| \cdot \frac{|E_G|}{n^2}$. \label{step:find-perm}
\end{enumerate}
\item Output $\pi_1, \dots, \pi_k$.
\end{enumerate}
It is obvious that the permutations are all side-preserving permutations and that the union of $E_{G_{\pi_i}}$ over $i\in [k]$ is equal to $E_{K_{n,n}}$. To see that $k\le \frac{2n^2\ln n}{|E_G|}+1$, observe that due to the guarantee of Lemma~\ref{lem:derandomized}, $|E_H|$ decreases by a multiplicative factor of (at most) $(1 - |E_G|/n^2) \leq e^{-|E_G|/n^2}$ for each permutation picked. Since the set $E_H$ remains non-empty after $k - 1$ permutations are picked, we have $e^{-(k - 1) \cdot |E_G|/n^2} \cdot n^2 \geq 1$, which implies that $k \leq 2n^2 \ln n/|E_G| + 1$ as desired. Finally, the bottleneck in the running time is Step~\ref{step:find-perm}; we execute this step $k$ times and each execution takes $O(n^4)$ time. Thus, the total running time is $O(nk) = O(n^6 \log n)$.
\end{proof}

\end{document}